\newcommand{\sgn}{\operatorname{sgn}}
\newcommand{\supp}{\operatorname{supp}}
\newcommand{\interior}{\operatorname{int}}
\def\argmax{\mathop{\rm arg\,max}}
\def\argmin{\mathop{\rm arg\,min}}
\def\core{\mathop{\rm core}} 
\def\ri{\mathop{\rm ri}}   
\newcommand{\salg}{\mathfrak{F}}
\newcommand{\SY}{\Sigma_Y} 
\newcommand{\SX}{\Sigma_X} 
\newcommand{\E}{\mathbb{E}}
\newcommand{\R}{\mathbb{R}}
\newcommand{\ex}{X} 
\newcommand{\ey}{Y} 
\newcommand{\ez}{Z} 
\newcommand{\eu}{U} 
\newcommand{\er}{R} 
\newcommand{\eX}{{\boldsymbol{X}}}
\newcommand{\cQ}{\mathcal{Q}}
\newcommand{\fQ}{\mathfrak{Q}}
\newcommand{\fq}[1]{\mathfrak{q}_{#1}}
\newcommand{\cP}{\mathcal{P}}
\newcommand{\cX}{\mathcal{X}}  
\newcommand{\cY}{\mathcal{Y}}  
\newcommand{\cU}{\mathcal{U}}  
\newcommand{\cS}{\mathcal{S}}  
\newcommand{\cM}{\mathcal{M}}  
\newcommand{\cH}{\mathcal{H}}  
\newcommand{\cC}{\mathcal{C}}  
\newcommand{\cA}{\mathcal{A}}  
\newcommand{\cJ}{\mathcal{J}}
\newcommand{\f}{F_\theta} 
\newcommand{\g}{g} 
\newcommand{\G}{G} 
\newcommand{\Qs}[1]{Q_{#1}} 
\newcommand{\qs}[1]{q_{#1}} 
\newcommand{\Ps}{P} 
\newcommand{\ps}[1]{p_{#1}} 
\newcommand{\contf}{\nu_\theta} 
\newcommand{\PP}{\mathbf{P}} 
\newcommand{\crit}{L} 
\newcommand{\score}[2]{s_{#1}(#2)}
\newcommand{\nq}{\rho} 
\newcommand{\dKL}[2]{I(#1||#2)}
\newcommand{\one}{\mathbf{1}}
\newcommand{\dist}{\mathrm{dist}}
\newcommand\independent{\protect\mathpalette{\protect\independenT}{\perp}}
\def\independenT#1#2{\mathrel{\rlap{$#1#2$}\mkern2mu{#1#2}}}
\numberwithin{equation}{section}
\theoremstyle{plain}
\newtheorem{theorem}{Theorem}[section] \newtheorem{proposition}{Proposition}[section] \newtheorem{lemma}{Lemma}[section] \newtheorem{corollary}{Corollary}[section]  \newtheorem{assumption}{Assumption}
\newtheorem{assumptionA}{Assumption}[section]  \newtheorem{remark}{Remark}[section] 
\newtheorem{definition}{Definition}[section]
\theoremstyle{definition}
\newtheorem{example}{Example}
\newtheorem{exampleA}{Example}[section] 
\newcommand{\citeposs}[1]{\citeauthor{#1}'s \citeyearpar{#1}}
\pgfplotsset{compat=1.18}
\begin{document}
\begin{frontmatter}

\title{Information Based Inference\\
in Models with Set-Valued Predictions and Misspecification}
\runtitle{Information Based Inference with Misspecification}  

\begin{aug}
%
%
%
\author[id=au1,addressref={add1}]{\fnms{Hiroaki}~\snm{Kaido}\ead[label=e1]{hkaido@bu.edu}}
\author[id=au2,addressref={add2}]{\fnms{Francesca}~\snm{Molinari}\ead[label=e2]{fm72@cornell.edu}}
\address[id=add1]{%
\orgdiv{Department of Economics},
\orgname{Boston University}}

\address[id=add2]{%
\orgdiv{Department of Economics},
\orgname{Cornell University}}
\end{aug}

\support{We thank anonymous referees, Xiaoxia Shi, and audiences at Bonn/Mannheim, Bristol/Warwick, BU, CEMFI, Chicago, Columbia, Cornell, FGV-EESP, JHU, Michigan, Nebraska, NYU, Queen's Mary, Tokyo, Toulouse, UCD, UCL, UCLA, UCSB, UPF, USC, Yale, Wisconsin, ESAM21, AMES23, ESWC25, Workshop on Econometrics and Models of Strategic Interaction, and Chamberlain Seminar, for comments.
U.~Byambadalai, S.~Chen, Q.~Han, L.~Hoderlein, Yan~Liu, Yiqi~Liu, P.~Power, R.~Strong, Y.~Sun provided excellent research assistance. We gratefully acknowledge financial support from NSF grants SES-2018498 (Kaido) and 1824375 (Molinari).}
\begin{abstract}
This paper proposes an information-based inference method for partially identified parameters in incomplete models that is valid both when the model is correctly specified and when it is misspecified. 
Key features of the method are:  (i) it is based on minimizing a suitably defined Kullback-Leibler information criterion that accounts for incompleteness of the model and delivers a non-empty pseudo-true set;
(ii) it is computationally tractable; (iii) its implementation is the same for both correctly and incorrectly specified models; (iv) it exploits all information provided by variation in discrete and continuous covariates; (v) it relies on Rao's score statistic, which is shown to be asymptotically pivotal.
\end{abstract}

\begin{keyword}
\kwd{Misspecification}
\kwd{Partial Identification}
\kwd{Rao's score statistic}
\end{keyword}

\end{frontmatter}
\section{Introduction}
Applied research is rarely based on the empirical evidence alone: exogeneity assumptions, behavioral restrictions, distributional and functional form specifications, etc., are routinely imposed to approximate features of complex social and economic phenomena. 
Early on, \citet[p.~169]{koo:rei50} highlighted the importance of imposing restrictions based on prior knowledge of the phenomenon under analysis and some criteria of simplicity, but argued against choosing these restrictions primarily for the purpose of identifiability of the structure the researcher happens to be interested in. 
Yet, even after embracing \citeauthor{koo:rei50}'s perspective, concerns for misspecification remain.

To answer these concerns, we propose a novel information theoretic inference method in the spirit of \cite{White1982} that is asymptotically valid both when the model is correctly or incorrectly specified and point or partially identified. 
The latter case often results when, based on prior knowledge, a researcher is willing to impose modeling restrictions on some features of the model, but remains agnostic about others.
Once the researcher has taken a stand on the model, we provide inference methods for its features that are robust to misspecification.
Our method is easy to implement and addresses three challenges that arise in misspecified partially identified models: 
(i) the set of observationally equivalent parameters may be spuriously tight or empty; (ii) confidence sets constructed assuming correct model specification may (severely) undercover; (iii) their tightness may be misinterpreted as highly informative data.\footnote{Counterpart challenges arise in point identified models, with various solutions put forward since at least \citet{White1982} \citep[see, e.g.,][and references therein for a recent discussion]{han:lee21}.}
The method applies to a specific but wide class of partially identified models that predict a set of values for the endogeneous variables ($\ey$) given the exogenous observed and unobserved ones ($\ex$ and $\eu$, respectively), yielding a set of conditional distributions for $\ey|\ex$.
Many examples belong to this class, including: games with multiple equilibria; discrete choice models with either interval data on covariates,  counterfactual choice sets, endogeneous explanatory variables, or unobserved heterogeneity in choice sets; dynamic discrete choice models; network formation models; and auctions and school choice models under weak assumptions on behavior \citep[see, e.g.,][]{MolinariHOE}.

We adapt the textbook method for (point identified) models that predict a singleton conditional distribution for $\ey|\ex$, to models that predict a set of distributions for $\ey|\ex$, through three steps that jointly yield our main innovations.
In step one, leveraging a result in \citet{art83}, we characterize the exact set of model predicted distributions consistent with all maintained assumptions, and show that with discrete $\ey$ it is a finite dimensional convex polytope.\footnote{Using only a subset of model implications may yield misleading conclusions \citep{ked:li:mou21,MolinariHOE,ber:mol:mol11}; nonetheless, even in this case our method remains applicable.}
Doing so greatly simplifies our next steps as it allows us to bypass the need to model all possible distribution of $\ey|\ex$ through the use of probability mixtures with infinite dimensional nuisance mixing functions. 
In step two, we define a never-empty pseudo-true set, denoted $\Theta^*$, for the parameter vector $\theta$ characterizing the model.
This is the collection of minimizers of a Kullback-Leibler (KL) information criterion measuring the divergence of the set of model-predicted distributions from the distribution of the observed data.
The set $\Theta^*$ shrinks to the pseudo-true parameter vector in \citet{White1982} if the modeling assumptions are augmented so that the model predicts a single distribution. 
When the model predicts multiple distributions, $\Theta^*$ collects the parameter values that minimize the researcher's ignorance about the true structure \citep{Akaike1973,White1982}, recognizing that this ignorance extends to the selection mechanism that picks an element from the set of model predictions.
As in the point identified case, one may wonder to what extent a pseudo-true set is of substantive interest.
In our view, models are only approximations to the true data generating process (DGP) and hence pseudo-true sets are often what researchers estimate in practice; hence, inference methods robust to misspecification are needed.\label{page:recognize_pseudotrue}

To this end, in the third step we obtain a profiled likelihood function by projecting, with respect to the KL divergence measure, the distribution of the observed data on the set of model implied distributions.
A key advantage, yielded by steps one and two, is that this projection is carried out through a computationally simple convex program, which in our leading examples with discrete outcomes features a strictly convex objective and linear constraints. 
As in the textbook case, the pseudo-true set equals the collection of maximizers of the profiled likelihood function.
We next derive a novel score representation for this function, based on $d_\theta$ estimating (score) equations, with $d_\theta$ the number of model's parameters.
These equations depend on the conditional distribution of $\ey|\ex$, which is unknown and needs to be estimated nonparametrically. 
We leverage classic results in the semiparametric inference literature, specifically \citet{Newey:1994aa}, to establish that an orthogonality property holds.
Provided the convergence rate of the nonparametric conditional density estimator is sufficiently fast ($o_p(n^{-1/4})$), this implies that the limit distribution of the averaged score function is insensitive to estimation of the distribution of the data.
We use this result to construct a Rao's score statistic with asymptotically pivotal limit distribution $\chi^2_{d_\theta}$, which we use to test the hypothesis that a candidate parameter vector belongs to $\Theta^*$.
We invert the test to construct a confidence set and show that it is robust to misspecification: it covers each element of $\Theta^*$ with asymptotic probability at least equal to the nominal level $1-\alpha$, uniformly over a large class of DGPs. 
While the size of $\Theta^*$ may or may not be impacted by the extent to which the model is misspecified (see Section~\ref{subsec:geometry_Theta_star}), relative to $\Theta^*$'s size the volume of the confidence set depends only on the sampling variability of the score statistic.

\noindent\textbf{Related Literature.}
\citet{Chen:2011aa,Chen_2018} put forward inference methods that asymptotically cover the $\arg\max$ of a profiled likelihood function, which by textbook arguments is related to the $\arg\min$ of the KL divergence that we focus on. 
However, their method relies on infinite dimensional nuisance functions to represent each possible distribution of $\ey|\ex$, and its validity is established for correctly specified models only.
Our method completely bypasses the use of infinite dimensional nuisance functions, yielding computational advantages and allowing us to consider broader classes of incomplete models.\footnote{\label{fn:chen_validity}See the discussion following \eqref{eq:q_star} for further comparisons. For a dynamic model where misspecification results when subjective beliefs deviate from rational expectations, \citet{che:han:han21} show that a confidence set built using the results in \citet{Chen_2018} is asymptotically valid.}

Only a few recent papers put forth tools for construction of confidence sets that are valid in the presence of misspecification, covering each element of a pseudo-true identified set with an asymptotic probability at least as large as a prespecified nominal level.
\citet{and:kwo22} show that model misspecification can lead to spuriously tight confidence sets while statistical tests have low power at detecting misspecification. 
They propose a notion of pseudo true set, a specification test, and an asymptotically uniformly valid inference method for partially identified models defined by a finite number of unconditional moment inequalities.
Their inference method aggregates violations of the sample moment conditions relaxed by the minimum amount that guarantees that at least one parameter vector in the parameter space satisfies them.
\citet{Stoye20} studies interval identified scalar parameters with asymptotic normality of the estimators of the endpoints.
He obtains a valid and never-empty confidence interval that is free of tuning parameters and simple to compute.
In contrast, our method applies to models for conditional density functions of outcome variables given discrete and continuous covariates.
Allowing for the latter is important: they are commonplace in practice and may yield substantial identifying information.
While a fine discretization or the use of instrument functions \citep[e.g.,][]{and:shi13} may transform conditional moment inequalities in unconditional ones, doing so may incur computational costs associated with an increase in the number of moment inequalities proportional to the cardinality of the discretized support or the number of instrument functions.
We bypass this problem by evaluating directly the contribution of each observation to the score function.
In our simulations (Section~\ref{sec:monte_carlo}), doing so reduces computational time by 228 to 3,564 times relative to \citet{and:shi13}'s method, depending on how fine a discretization one uses.
Our pseudo-true set and inference method are insensitive to which inequalities one uses to characterize the sharp collection of model implied distributions for $\ey|\ex$.
In comparison, much of the related literature often requires moment selection either for computational tractability or as part of the inference procedure, which may substantially impact the population region that the researcher targets \citep{ked:li:mou21} and the properties of confidence sets \citep[e.g.,][]{and:shi13,bug:can:shi17,kai:mol:sto19}.

\noindent\textbf{Outline.} 
Section \ref{sec:examples} introduces the class of models we study.
Section \ref{sec:inference} provides the notion of pseudo-true set and derives the misspecification robust inference method.
Section \ref{sec:computation} discusses computational aspects of the method.
Section \ref{sec:empirical} provides an empirical illustration revisiting the analysis in \citet{kli:tam16} and Section \ref{sec:monte_carlo} Monte Carlo evidence on the size and power of the test. 
Section \ref{sec:conclusions} concludes.
Appendix \ref{sec:app} provides proofs of our main results. 
The Online Appendix includes auxiliary Lemmas and additional examples.

\section{Notation and Motivating Example}
\label{sec:examples}
Let $\ey\in \cY\subseteq\R^{d_Y}$, $\ex\in \cX\subseteq\R^{d_X}$ and $\eu\in \cU\subseteq\R^{d_U}$ denote, respectively, observable endogenous and exogenous variables, and unobservable variables, with realizations $y,x,u$. 
Let $\Ps_0\in\cP(\cY\times\cX)$ denote the distribution of $(\ey,\ex)$.\footnote{For a space $\cS$ with Borel $\sigma$-algebra $\Sigma_\cS$, $\cP(\cS)$ denotes the set of all Borel probability measures on $(\cS,\Sigma_\cS)$.}
Assume the conditional law $\Ps_0(\cdot|x)$ is absolutely continuous with respect to a $\sigma$-finite measure $\mu$ on $\cY$. 
Let $\ps{0,y|x}$ be the Radon-Nikodym derivative of $\Ps_0(\cdot|x)$ with respect to $\mu$ and $\ps{0}\equiv\{\ps{0,y|x},x\in\cX\}$.
Throughout, for some $\underline{c}>0$ and all $x\in\cX,y\in\mathcal{Y}$, let $\ps{0,y|x}(y|x)\ge\underline{c}$.
We consider a framework where, based on prior knowledge, a researcher is willing to maintain some parametric structural restrictions on the joint behavior of $(\ey,\ex,\eu)$, but is agnostic about other features of the model or whether they continue to hold after a policy intervention.
We denote $\theta\in\Theta\subset\R^{d_\theta}$ the parameters characterizing the model. We let the structure associate with each $(u,x,\theta)$ a set of predicted outcomes through a closed-valued measurable correspondence $\G:\cU\times\cX\times\Theta\mapsto\cY$.\footnote{Given a probability space $(\Omega,\salg,\PP)$ and $\cC$ the family of closed sets in $\R^d$, a correspondence $\G:\Omega\mapsto\cC$ is measurable if, for every compact set $K$ in $\R^d$, $\G^{-1}(K) =\{ \omega \in \Omega :\G(\omega)\cap K\neq \emptyset \} \in \salg$.} 
This nests as special case the textbook model with singleton predictions with $\ey=\g(\eu|\ex;\theta) ~\text{a.s.}$ for $\g:\cU\times\cX\times\Theta\mapsto\cY$ a measurable function.
We assume the family of distributions for the latent variables $\eu$ is known up to finite dimensional parameter vector that is part of $\theta$, and, omitting specific notation for subvectors of $\theta$, we denote $\{\f:\theta\in\Theta\}$ the family of distributions for $\eu$ and assume that $\f$ is independent of $\ex$.\footnote{This can easily be relaxed if the researcher is willing to specify the conditional distribution of $\eu|\ex$.}
While assuming a parametric distribution for the entire vector $\eu$ simplifies our presentation, in some applications the researcher might be unwilling to take a stand on the distribution of some of its components.  
Our method is applicable in these cases too, though it requires a parametric distribution for a subvector of $\eu$, as we show in Online Appendix Example~\ref{example:panel} for the case of panel dynamic discrete choice models, where the researcher may lack prior knowledge of the distribution of the initial condition \citep[e.g.,][]{hon:tam06}.

The next example, used throughout the paper to illustrate results, clarifies notation.
More examples are provided in the Online Appendix and in \citet{MolinariHOE}.\smallskip
\begin{example}[Static entry game]
\label{example:CT}
  Consider a two player entry game as in \citet{tam03}, with each player $i=1,2$ choosing to enter ($\ey_i=1$) or stay out of the market ($\ey_i=0$).
  Let $(\ex_1,\ex_2)$ and $(\eu_1,\eu_2)\sim\f$ be, respectively, observable and unobservable payoff shifters and player's payoffs be
$
    \pi_j=\ey_j(\ex_j\beta_j+\delta_j\ey_{(3-j)}+\eu_j),  j=1,2,
$ 
with $\delta_1\le 0,\delta_2\le 0$ the interaction effects and $(\beta_1,\beta_2,\delta_1,\delta_2)$ part of $\theta$. 
  Let each player enter the market if and only if $\pi_j\geq 0$. 
  Given $\theta\in\Theta$ and $x\in\cX$, the model has multiple pure strategy Nash equilibria (PSNE), depicted in Figure~\ref{fig:G_entry} as a function of $(u_1,u_2)$.
In our notation, the set of PSNE is the measurable correspondence $\G(\cdot|x;\theta)$ \citep[Proposition 3.1]{ber:mol:mol11}, with:
\begingroup
\allowdisplaybreaks
\small{
  \begin{align}
  			\G(\eu|x;\theta)&=\{(0,0)\}~~\text{if}~~\eu\in S_{\{(0,0)\}|x;\theta} \equiv \{u:u_j<-x_j\beta_j,~j=1,2\},\label{eq:S00}\\
			\G(\eu|x;\theta)&=\{(1,1)\}~~\text{if}~~\eu\in S_{\{(1,1)\}|x;\theta} \equiv \{u:u_j\ge-x_j\beta_j-\delta_j, ~j=1,2\},\label{eq:S11}\\
			\G(\eu|x;\theta)&=\{(1,0)\}~~\text{if}~~\eu\in S_{\{(1,0)\}|x;\theta} \equiv \{u: u_1\ge -x_1\beta_1 -\delta_1, u_2<-x_2\beta_2-\delta_2)\}\notag\\
			&\hspace{4.1cm}\bigcup\{u: -x_1\beta_1\le u_1< -x_1\beta_1-\delta_1, u_2<-x_2\beta_2\},\label{eq:S10}\\
			\G(\eu|x;\theta)&=\{(0,1)\}~~\text{if}~~\eu\in S_{\{(0,1)\}|x;\theta} \equiv \{u:u_1< -x_1\beta_1, u_2\ge -x_2\beta_2\} \notag\\
			&\hspace{3.3cm}\bigcup \{u:-x_1\beta_1\le u_1< -x_1\beta_1-\delta_1, u_2\ge -x_2\beta_2-\delta_2\},\label{eq:S01}\\
			\G(\eu|x;\theta)&=\{(1,0),(0,1)\}~~\text{if}~~~\eu\in M_{x;\theta}\equiv \{u: -x_j\beta_j\le u_j<-x_j\beta_j-\delta_j,~j=1,2\}.\label{eq:M}
  \end{align}}
  \endgroup
  \begin{figure}
\begin{center}
\begin{tikzpicture}[thick,scale=0.525, every node/.style={transform shape}]
		\draw[->,dashed,gray!30] (-4,0) -- (13,0) node[below] {$\color{gray}u_1$};
		\draw[->,dashed,gray!30] (2,-2) -- (2,4) node[left] {$\color{gray}u_2$};

		\draw[white] (2,2.5) -- (2,3.7);
		\draw[white] (8,0) -- (12.7,0);

		\draw (2,3.25) node {$\G(\eu|x;\theta)=\{(0,1)\}$};

		\draw (8.5,-1.5) node {$\G(\eu|x;\theta)=\{(1,0)\}$};

		\draw (5,1.25) node {$\G(\eu|x;\theta)=\{(1,0),(0,1)\}$};

		\draw[dashed,gray!30] (13,2.5) -- (2,2.5) node[left] {$\color{gray}-x_2\beta_2-\delta_2$};
		\draw[dashed,gray!30] (8,4) -- (8,0) node[below] {$\color{gray}-x_1\beta_1-\delta_1$};
		\draw[dashed,gray!30] (11,2.5) -- (2,2.5) node[left] {$\color{gray}-x_2\beta_2-\delta_2$};

		\draw[gray!30] (2,0) -- (2,2.5);
		\draw[gray!30] (8,0) -- (8,2.5);
		\draw[gray!30] (2,0) -- (8,0);
		\draw[gray!30] (2,2.5) -- (8,2.5);

		\draw (10.5,3.25) node {$\G(\eu|x;\theta)=\{(1,1)\}$};

		\draw (-1,-1.5) node {$\G(\eu|x;\theta)=\{(0,0)\}$};

		\draw[gray!30] (2,0.2) node[left] {$\color{gray}-x_2\beta_2$};
		\draw[gray!30] (2.2,0) node[below] {$\color{gray}-x_1\beta_1$};		
		\end{tikzpicture}
\caption{\footnotesize{Stylized depiction of $\G(\cdot|x;\theta)$ in Example \ref{example:CT} with $\delta_1< 0,\delta_2< 0$. }}
		\label{fig:G_entry}
		\end{center}
\end{figure}
If one assumes $\delta_1\cdot\delta_2=0$ (a ``principal assumption'' in the econometrics literature on simultaneous equation models with dummy endogeneous variables predating \citeauthor{tam03}, \citeyear{tam03}), the region $M_{x;\theta}$ occurs with probability zero, and $\G(\eu|x;\theta)$ reduces to a measurable function. Doing so, however, removes the simultaneity in player's actions that many applications aim to capture. 
One may consider completing the model through a selection mechanism that picks an outcome in the region of multiplicity, $M_{x;\theta}$.
Doing so is also often undesirable, as various plausible selection mechanisms may lead to notably different conclusions about the nature of firms' competition \citep[e.g.][Table VII]{ber92}, and which firms enters when the market can sustain only one profitable entrant ($\eu\in M_{x;\theta}$) may be impacted by counterfactual interventions.
The specification of $\f$ can be made flexible through the use of mixtures, although our approach requires the mixture to be finite.
\hfill $\square$
\end{example}

\section{Information-Based Inference Robust to Misspecification}
\label{sec:inference}
\subsection{The Set of Model-Implied Density Functions}
\label{subsec:density_set}
One can view $\G(\eu|x;\theta)$ as the collection of its \emph{measurable selections} \citep[][Def.~2.1]{mol:mol18}, i.e., all random vectors $\tilde{\ey}$ such that $\tilde\ey\in\G(\eu|x;\theta)$ a.s. 
Each selection $\tilde\ey$ is a model predicted outcome.
In order to obtain a set-valued analog of a likelihood model, one needs to be able to characterize the distribution of each of these predicted outcomes.
To do so in a computationally feasible manner, we denote by $\cC$ the collection of closed subsets of $\cY$ and define the law of $\G(\eu|x;\theta)$ induced by the model's structure:
\begin{align}
    \contf(A|x)&\equiv\int_\cU \one(\G(u|x;\theta)\subseteq A)d\f(u),~~~\forall A\in\cC.\label{eq:containment}
\end{align}
The \emph{containment functional} $\contf(A|x)$ uniquely determines the distribution of $\G(\eu|x;\theta)$ when it is evaluated at all $A\in\cC$ \citep[p.20]{mol:mol18}.

Given $\theta\in\Theta$, $x\in\cX$,  and $\contf(\cdot|x)$, by \citet[Theorem 2.1]{art83} it is possible to characterize all distributions of measurable selections of $\G(\eu|x;\theta)$ as the set
\begin{align}
\core(\contf(\cdot|x))\equiv\left\{Q\in\cM(\SY,\cX):Q(A|x)\ge\contf(A|x),~\forall A\in\cC\right\},
\label{eq:core}
\end{align}
where $\cM(\SY,\cX)$ is the collection of laws of random variables supported on $\cY$ conditional on $\ex$.
The characterization in \eqref{eq:core} is sharp, in the sense that, up to an ordered coupling \citep[][Chapter 2]{mol:mol18}, given $\tilde\ey\sim Q(\cdot|x)$, 
$\tilde\ey\in\G(\eu|x;\theta)$ a.s. if and only if $Q(A|x)\ge\contf(A|x),$ for all $A\in\cC$, $x$-a.s.
\smallskip

\noindent\textbf{Example 1} (Continued).
Let $\ey_{x;\theta}(u)$ be the unique element of $\G(u|x;\theta)$ if $u\notin M_{x;\theta}$ and (arbitrary) $\ey_{x;\theta}(u)\equiv(0,0)$ if $u\in M_{x;\theta}$. 
\citet[Example 2.6]{mol:mol18} show that all measurable selections of $\G(\cdot|x;\theta)$ in Example \ref{example:CT} can be represented as
\begin{align}
\ey(\eu,\er)=\ey_{x;\theta}(\eu)&\one(\eu\notin M_{x;\theta})+(\er\times(0,1)+(1-\er)\times(1,0))\one(\eu\in M_{x;\theta}),\label{eq:exampleCT_selection_mech}
\end{align}
for a random variable $\er\in\{0,1\}$ with any distribution in $\cP(\{0,1\})$ and unrestricted dependence on $\eu$ given $\ex$.
Each of these distributions is a \emph{selection mechanism} \citep[e.g.,][]{cil:tam09} that assigns to $(1,0)$ and $(0,1)$ the probability that each is played given $\ex=x$ and $\eu\in M_{x;\theta}$.
\citet[Lemma 2.1]{ber:mol:mol11} show that the distribution of each selection in \eqref{eq:exampleCT_selection_mech} belongs to $\core(\contf(\cdot|x))$, and that only those distributions do.
The containment functional of $\G(\cdot|x;\theta)$ satisfies: $\contf(\{(0,0)\}|x)=\f(S_{\{(0,0)\}|x;\theta})$, $\contf(\{(0,1),(1,0)\}|x)=1-\f(S_{\{(0,0)\}|x;\theta})-\f(S_{\{(1,1)\}|x;\theta})$, and similarly for all $A\subseteq\cY=\{(0,0),(1,0),(0,1),(1,1)\}$, where for a given set $B\subset\cU$, $\f(B)=\int_\cU \one(u\in B)d\f(u)$ and the sets $S_{\{y\}|x;\theta},y\in\cY$ are defined in \eqref{eq:S00}-\eqref{eq:S01}.\hfill$\square$
\smallskip

Assume that there are $\sigma$-finite measures $\mu$ on $(\cY,\SY)$ and $\xi$ on $(\cX,\SX)$, a product measure $\zeta\equiv\mu\times\xi$ on $(\cY\times\cX,\SY\times\SX)$, and for all $\theta\in\Theta$, $x\in\cX$, and $Q\in\core(\contf(\cdot|x))$, $Q\ll\mu$ \citep[this requirement is typically unrestrictive; see, e.g.,][p.~2]{White1982}.
Let the set of conditional densities associated with $\core(\contf(\cdot|x))$ be
\vspace{-.3cm}
\begin{align}
\fq{\theta,x}&\equiv\{\qs{y|x}:\qs{y|x}=dQ(\cdot|x)/d\mu,~Q\in\core(\contf(\cdot|x)) \},\label{eq:core_density_x}\\
\fq{\theta}&\equiv\left\{\fq{\theta,x},~x\in\cX \right\}.\label{eq:core_density}
\end{align}

\noindent\textbf{Example 1} (Continued).
Given $\theta\in\Theta$ and denoting $\Delta$ the unit simplex in $\R^4$, the set of all model predicted probability mass functions corresponding to selections of $\G(\cdot|x;\theta)$ is
\begin{multline}
	\fq{\theta}=\Big\{\qs{y|x}\in\Delta:~\qs{y|x}((0,0)|x)=\f(S_{\{(0,0)\}|x;\theta});~~~~~~\qs{y|x}((1,1)|x)=\f(S_{\{(1,1)\}|x;\theta});\\
	  \f(S_{\{(1,0)\}|x;\theta})\le \qs{y|x}((1,0)|x)\leq \f(S_{\{(1,0)\}|x;\theta})+\f(M_{x;\theta}),~~~~x\in\cX\Big\},\label{eq:frak_q_CT}
\end{multline}
with $S_{\{(0,0)\}|x;\theta}$, $S_{\{(1,1)\}|x;\theta}$,  $S_{\{(1,0)\}|x;\theta}$, $M_{x;\theta}$ defined in \eqref{eq:S00}, \eqref{eq:S11}, \eqref{eq:S10}, \eqref{eq:M}.
\hfill $\square$

\subsection{Correct Specification, Misspecification, and Pseudo-True Set}\label{subsec:pseudo_true}
Define a model $\fQ\equiv\left\{\fq{\theta}:\theta\in\Theta\right\}$ as the collection of sets $\fq{\theta}$ across $\theta\in\Theta$.
We propose a generalization of the standard definition of correct specification for models with singleton predictions \citep[e.g.,][Def.~2.5]{White1996} to models with set-valued predictions.

\begin{definition}[Correctly Specified Model \& Misspecified Model]\label{def:correct_misspec}
A model is correctly specified if $\ps{0}\in\fq{\theta}$ for some $\fq{\theta}\in\fQ\equiv\left\{\fq{\vartheta}:\vartheta\in\Theta\right\}$, and misspecified otherwise.
\end{definition}

\begin{remark}
In models that yield a singleton prediction $\ey=\g(\eu|\ex;\theta)$ a.s.,~with $\g:\cU\times\cX\times\Theta\mapsto\cY$ a measurable function, there is a unique implied law for $\g|\ex=x$:
$\Qs{\theta}(A|x)=\int_\cU \one(\g(u|x;\theta)\in A)d\f(u),~\forall A\in\cC
$, with associated conditional density function $\qs{\theta,y|x}=d\Qs{\theta}(\cdot|x)/d\mu$ (compare with \eqref{eq:containment} and \eqref{eq:core_density_x}).
The model is defined as the collection of (singleton) $\qs{\theta,y|x}$ across $\theta\in\Theta$ and $x\in\cX$, $\cQ=\left\{[\qs{\theta,y|x},~x\in\cX]:~\theta\in\Theta\right\}$.
The model is correctly specified if $\ps{0}=\qs{\theta}$ for some $\qs{\theta}\in\cQ$, and misspecified otherwise.
\end{remark}
Given density functions $f$ and $f'$, it has been common to measure their similarity through the \emph{Kullback-Leibler Information Criterion (KLIC)} \citep{White1996}.
Here we extend the standard definition of KLIC to measure divergence from $f$ of a set of density functions $\mathfrak{f}$. 
\begin{definition}[KLIC for set of density functions] 
Let $(\Omega,\salg,\zeta)$ be a measure space.
Let $f:\Omega\mapsto\R_+$ be a measurable function satisfying $\int f d\zeta<\infty$ and $\int_S f\ln f d\zeta<\infty$ where $S=\{\omega\in\Omega:f(\omega)>0\}$.
Let $\mathfrak{f}$ denote a \emph{set} of measurable functions $f':\Omega\mapsto\R_+$ satisfying $\int_S f\ln f' d\zeta<\infty$ and $\dKL{f}{f'}\equiv\int_{S}f\ln\tfrac{f}{f'}d\zeta$.
The Kullback-Leibler divergence measure from $f$ of a set $\mathfrak{f}$ is $\dKL{f}{\mathfrak{f}}\equiv\inf_{f'\in\mathfrak{f}}\dKL{f}{f'}.$
\end{definition}
It follows from \citet[Theorem 2.3]{White1996} that when $\inf_{f'\in\mathfrak{f}}\int_S(f-f')d\zeta\ge 0$, $\dKL{f}{\mathfrak{f}}=0$ if $f\in\mathfrak{f}$, and $\dKL{f}{\mathfrak{f}}>0$ otherwise.
As our approach is based on measuring divergence between conditional density functions, denoted $f(y|x)$ and $f'(y|x)$, in $\dKL{f}{\mathfrak{f}}$ we replace the unconditional KLIC with the conditional one, $\dKL{f}{f'}\equiv\int_{\cY\times\cX}f(y,x)\ln\tfrac{f(y|x)}{f'(y|x)}d\zeta(y,x)$. 
Following \cite{White1982}'s treatment of point-identified models, we let the \emph{pseudo true set}, $\Theta^*(\ps{0})$, be the set of minimizers of the researcher's ignorance about the true structure.
But here one is also ignorant about which selection from the model predicted set is closest to the data.
Hence, minimization occurs with respect to both $\vartheta\in\Theta$, as in the textbook case, and $\qs{}\in\fq{\vartheta}$.
If the model is correctly specified, $\Theta^*(\ps{0})$ equals the sharp identification region, as in point-identified models where the pseudo-true value matches the data-generating one.
\begin{definition}
The \emph{pseudo-true identified set} is given by
\begin{align}
\Theta^*(\ps{0})&\equiv
\Big\{\theta\in\Theta:\dKL{\ps{0}}{\fq{\theta}}=\inf_{\vartheta\in\Theta}\dKL{\ps{0}}{\fq{\vartheta}} \Big\}.\label{eq:def_pseudo_true_set}
\end{align}
\end{definition}
To understand the effect of minimizing KLIC with respect to $\qs{}\in\fq{\vartheta}$, note that
\begin{align}
\dKL{\ps{0}}{\fq{\vartheta}}&=\inf_{\qs{}\in\fq{\vartheta}}\int_{\cY\times\cX}\ps{0}(y,x)\ln\tfrac{\ps{0,y|x}(y|x)}{\qs{y|x}(y|x)}d\zeta(y,x)\notag\\
&=\int_\cX\ps{0,x}(x)\inf_{\qs{y|x}\in\fq{\vartheta,x}}\int_\cY\ps{0,y|x}(y|x)\ln\tfrac{\ps{0,y|x}(y|x)}{\qs{y|x}(y|x)}d\mu(y)d\xi(x),\label{eq:profiling}
\end{align}
with $\fq{\vartheta,x}$ defined in \eqref{eq:core_density}.
No unknown selection mechanisms are used in \eqref{eq:profiling} to formalize all ways in which a measurable selection could be picked from $\G(\cdot|x;\theta)$ and all associated likelihoods obtained. 
Rather, \eqref{eq:profiling} relies on a convex program, with strictly convex objective and convex constraints (a finite dimensional convex program with linear constraints if $\cY$ is finite).
It delivers the density function in $\fq{\vartheta,x}$ closest with respect to KLIC to $\ps{0,y|x}$,
\begin{align} 
\qs{\vartheta,y|x}^*=\arg\inf_{\qs{y|x}\in\fq{\vartheta,x}}\int_\cY\ps{0,y|x}(y|x)\ln\tfrac{\ps{0,y|x}(y|x)}{\qs{y|x}(y|x)}d\mu(y), \label{eq:q_star}
\end{align}
which can be calculated analytically or numerically.\footnote{Existence and uniqueness of $\qs{\vartheta,y|x}^*$ is guaranteed under mild conditions (see Online Appendix Lemma~\ref{lem:licq}).}
It can be interpreted as a profiled (quasi)-likelihood where a convex optimization program profiles out the selection mechanism, which is left completely unspecified and may arbitrarily depend on $(\ex,\eu,\vartheta)$. The support of $\ex$ is also unrestricted.
In contrast, related likelihood-based inference methods rely on an infinite-dimensional parameter space to represent the selection mechanism that picks measurable selections from $\G(\cdot|x;\theta)$ (as in \eqref{eq:exampleCT_selection_mech}) and profile it out via non-convex optimization programs with increasing number of (sieve) coefficients \citep[e.g.,][]{Chen:2011aa}; or restrict the class of selection mechanisms by assuming that they do not depend on $\eu$ after conditioning on $\ex$ and that $\ex$ has finite support \citep{Chen_2018}.\footnote{\citet{Chen:2011aa}'s inference method assumes correct model specification. 
\citet[Remark 3]{Chen_2018} suggest that their method may remain valid for some misspecified separable models with discrete covariates.}
Doing so may substantially increase computational burden or narrow the class of models allowed for.
For example, in discrete choice models with unobserved heterogeneity in choice sets \citep{bar:cou:mol:tei21} it would rule out choice set formation based on sequential search or rational inattention (see Online Appendix Example \ref{example:BCMT}).

Putting together \eqref{eq:profiling} and \eqref{eq:q_star} we obtain
\begin{align*}
	\dKL{\ps{0}}{\fq{\vartheta}}=\int_{\cY\times\cX}\ps{0}(y,x)\ln\tfrac{\ps{0,y|x}(y|x)}{\qs{\vartheta,y|x}^*(y|x)}d\zeta(y,x). 
\end{align*}
Hence, the pseudo-true set $\Theta^*(\ps{0})$ in \eqref{eq:def_pseudo_true_set} is equal to the set of maximizers of $\crit(\vartheta)$, with
\begin{align}
\crit(\vartheta)\equiv\E_{\ps{0}}\left[\crit(\vartheta|\ex)\right]~~\text{and}~~\crit(\vartheta|\ex)\equiv \E[\ln \qs{\vartheta,y|x}^*(\ey|\ex)|\ex].	\label{eq:expected_log_likelihood}
\end{align}

\vspace{-.2cm}
\noindent\textbf{Example 1} (Continued).
Given $\theta\in\Theta,x\in\cX$, and $S_{\{(0,0)\}|x;\theta}$, $S_{\{(1,1)\}|x;\theta}$, $S_{\{(1,0)\}|x;\theta}$, $M_{x;\theta}$ as in \eqref{eq:S00}, \eqref{eq:S11}, \eqref{eq:S10}, \eqref{eq:M}, let

\vspace{-1.1cm}
\begin{align}
\eta_1(\theta;x)&\equiv1-\f(S_{\{(0,0)\}|x;\theta})-\f(S_{\{(1,1)\}|x;\theta}),\label{eq:eta1}\\
\eta_2(\theta;x)&\equiv \f(S_{\{(1,0)\}|x;\theta})+\f(M_{x;\theta}),\label{eq:eta2}\\
\eta_3(\theta;x)&\equiv\f(S_{\{(1,0)\}|x;\theta}).\label{eq:eta3}
\end{align}

\vspace{-.4cm}
\noindent
In words, $\eta_1(\theta;x)$ is the probability allocated by the model to either $(1,0)$ or $(0,1)$ occurring as outcome of the game; $\eta_2(\theta;x)$ [$\eta_3(\theta;x)$] is the upper [lower] bound implied by the model on the probability that $(1,0)$ is the outcome of the game.
Define the parameter sets:
\begin{align}
	\Theta_1(x,\ps{0})&\equiv\Big\{\theta\in\Theta:\eta_3(\theta;x)\le\tfrac{\ps{0,y|x}((1,0)|x)}{\ps{0,y|x}((1,0)|x)+\ps{0,y|x}((0,1)|x)}\eta_1(\theta;x)\le \eta_2(\theta;x)\Big\}\label{eq:def_Theta1}\\
	\Theta_2(x,\ps{0})&\equiv\Big\{\theta\in\Theta:	\tfrac{\ps{0,y|x}((1,0)|x)}{\ps{0,y|x}((1,0)|x)+\ps{0,y|x}((0,1)|x)}\eta_1(\theta;x)> \eta_2(\theta;x)\Big\}\label{eq:def_Theta2}\\
	\Theta_3(x,\ps{0})&\equiv\Big\{\theta\in\Theta:	\tfrac{\ps{0,y|x}((1,0)|x)}{\ps{0,y|x}((1,0)|x)+\ps{0,y|x}((0,1)|x)}\eta_1(\theta;x)< \eta_3(\theta;x)\Big\}.\label{eq:def_Theta3}
\end{align}

\vspace{-.2cm}
\noindent
Then the profiled likelihood is given by (see Proposition \ref{prop:games_profiled} in the Online Appendix):
\begin{align}
\qs{\theta,y|x}^*((0,0)|x)&=\f(S_{\{(0,0)\}|x;\theta})\label{eq:eg_prof_likelihood1}\\
\qs{\theta,y|x}^*((1,1)|x)&=\f(S_{\{(1,1)\}|x;\theta})\label{eq:eg_prof_likelihood2}\\
\qs{\theta,y|x}^*((0,1)|x)&=\begin{cases}
\tfrac{\ps{0,y|x}((0,1)|x)}{\ps{0,y|x}((1,0)|x)+\ps{0,y|x}((0,1)|x)}\eta_1(\theta;x)&\theta\in\Theta_1(x,\ps{0})	\label{eq:eg_prof_likelihood3}\\
\eta_1(\theta;x)-\eta_2(\theta;x) & \theta\in\Theta_2(x,\ps{0})\\
\eta_1(\theta;x)-\eta_3(\theta;x) & \theta\in\Theta_3(x,\ps{0})
\end{cases}\\
\qs{\theta,y|x}^*((1,0)|x)&=1-\qs{\theta,y|x}^*((0,0)|x)-\qs{\theta,y|x}^*((1,1)|x)-\qs{\theta,y|x}^*((0,1)|x).\label{eq:eg_prof_likelihood4}
\end{align}
Intuitively, when $\theta\in\Theta_1(x,\ps{0})$,  $(1,0)$ can be assigned a share of $\eta_1(\theta;x)$ equal to the share of $\ps{0,y|x}(\{(0,1),(1,0)\}|x)$ that $(1,0)$ has in the data. 
When $\theta\in\Theta_2(x,\ps{0})$, that allocation yields a probability for $(1,0)$ larger than the model's upper bound $\eta_2(\theta;x)$, and the KL divergence is minimized setting $\qs{\theta,y|x}^*((1,0)|x)=\eta_2(\theta;x)$.
Similarly for $\theta\in\Theta_3(x,\ps{0})$.
In Section~\ref{subsec:geometry_Theta_star} below we discuss the general topological properties of $\Theta^*$.\hfill $\square$

\subsection{The Score Function}
\label{subsec:score_fn}
Here we characterize the \emph{score function} associated with the singleton-valued likelihood function in \eqref{eq:expected_log_likelihood}.
Let $\cA^{(*e)}(\theta;x)\subseteq \cC$ denote a \emph{smallest core determining class} in the sense of \citet[Theorem 1]{luo:pon:wan25}, i.e., a subset of $\cC$ of minimal cardinality such that $\core(\contf(\cdot|x))=\big\{Q\in\cM(\SY,\cX):Q(A|x)\ge\contf(A|x),\forall A\in\cA^{(*e)}(\theta;x)\big\}$.
\begin{assumption}\label{ass:finite_support}
(a) $\cY$ is finite and $\cX$ is compact. 
(b) $\contf(A|x)$ is continuously differentiable with respect to $\theta$ and $\nabla_\theta\contf(A|x)$ is square integrable for all $A\subset \cY$, $\Ps_0-a.s.$
(c) $\Theta^*(\ps{0})\subset \interior\Theta$.
(d) There exists a constant $c>0$ such that for all $\theta\in\Theta$ and $y\in\cY$, $\qs{\theta,y|x}^*(y|x)>c$, $\Ps_0-a.s.$
(e) There is a collection $\cA_\G(x)\subset 2^\cY$, that does not depend on $\theta$, such that $\supp(\G(\cdot|x;\theta))\equiv \{A\subseteq \cY:F_\theta(G(\eu|x;\theta)=A)>0\}=\cA_\G(x)$ for all $\theta\in\Theta$, $\Ps_0-a.s.$
(f) The set of (in)equalities in $\cA^{(*e)}(\theta;x)$ active at $\qs{\theta,y|x}^*$ are irredundant, in the sense that removing any of them strictly enlarges the active face of $\fq{\theta,x}$ at $\qs{\theta,y|x}^*$.
\end{assumption}
Assumption \ref{ass:finite_support}-(a) restricts attention to models with discrete outcomes and compact support for $\ex$.
Part (b) is easily verified when $\f$ is differentiable in $\theta$.
Part (c) implies that first order conditions hold at all $\theta^*\in\Theta^*(\ps{0})$.
Part (d) bounds $\qs{\theta,y|x}^*$ away from zero.
Assumption~\ref{ass:finite_support}-(e) requires the support of $\G(\cdot|x,\theta)$ not to vary with $\theta\in\Theta$, though it can vary with $x\in\cX$.
It holds in several applications, including discrete choice models with unobserved heterogeneity in choice sets \citep[see our Online Appendix~\ref{app:examples}]{bar:cou:mol:tei21} and dynamic monopoly entry models \citep[][as shown in \citeauthor{luo:pon:wan25}, \citeyear{luo:pon:wan25}, p.17]{ber:com22}.
Moreover, \citet[Remark 3.1]{gu:rus:str25} show that with finite $\cY$, $\Theta$ can be finitely partitioned in $x$-dependent partitions, so that within each partition $\cA_\G(x)$ does not depend on $\theta$.
\citet[Corollary 1.1]{luo:pon:wan25} show that under this condition, $\cA^{(*e)}(\theta;x)=\cA^{(*e)}(x)$ for all $\theta$.
Lemma~\ref{lem:licq} shows that when $\qs{\theta,y|x}^*$ belongs to the relative interior of $\fq{\theta,x}$, parts (a)-(e) of Assumption~\ref{ass:finite_support} imply that the Lagrange multiplier vector for the convex program in \eqref{eq:q_star} is unique and hence $\crit(\theta|x)$ is differentiable.
The lemma also shows that Assumption~\ref{ass:finite_support}-(f), which requires that removing a binding inequality strictly enlarges the face of $\fq{\theta,x}$ that $\qs{\theta,y|x}^*$ belongs to (see Definition~\ref{def:irredundant}), ensures the same conclusion when $\qs{\theta,y|x}^*$ belongs to the relative boundary of $\fq{\theta,x}$.
Conditions that imply uniqueness of the Lagrange multipliers can replace parts (e)(f).
These two conditions are restrictive.
For example, as stated they rule out that $\Theta$ includes both values of $\theta$ at which $\G(\cdot|x;\theta)$ collapses to a function and values at which it is a non-singleton correspondence \citep[whether the latter values of $\theta$ are consistent with the DGP may be testable, see][]{che:kai21}.
In Online Appendix~\ref{app:clarke} we show that we can dispense with these conditions and work with the subdifferential of $\crit(\theta|x)$, but doing so makes our procedure less tractable. 
We verify all conditions for Example \ref{example:CT} below and other examples in Online Appendix \ref{app:examples}.
\begin{theorem}\label{thm:score_characterization}
Under Assumption \ref{ass:finite_support}, (i) $\crit(\theta|x)$ 
is differentiable with respect to $\theta$ on $\mathrm{int}(\Theta)$, $\Ps_0-a.s.$
(ii) There exists a function $s:\Theta\times \cY\times\cX\times\Delta\to \R^{d_\theta}$, with $\Delta$ the unit-simplex in $\R^{|\cY|-1}$, such that $\E[\|\score{\theta}{\ey|\ex;\ps{0,y|x}}\|^2]<\infty$, and
\begin{align}
\tfrac{\partial}{\partial\theta}\crit(\theta|x)&=\E[\score{\theta}{\ey|\ex;\ps{0,y|x}}|\ex=x],\label{eq:conditional_foc}	\\
    \E[\score{\theta}{\ey|\ex;\ps{0,y|x}}]&=0,~ \text{ for all }\theta\in\Theta^*(\ps{0}).\label{eq:zero_expected_score}
\end{align}
\end{theorem}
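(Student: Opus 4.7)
The plan is to represent $\qs{\theta,y|x}^*$ via the KKT system of the convex program in \eqref{eq:q_star}, define the score as the $\theta$-derivative of $\ln \qs{\theta,y|x}^*$, and finish with a standard interior-optimum argument for the zero-expectation identity.

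For part (i), fix $x$ and work $\Ps_0$-almost surely. Assumption \ref{ass:finite_support}(a) reduces \eqref{eq:q_star} to a finite-dimensional convex program with a strictly convex KL objective and the linear constraints $\sum_{y\in A}\qs{y|x}(y|x)\ge \contf(A|x)$ indexed by $A\in\cA_\G(x)$, together with the normalization $\sum_{y\in\cY}\qs{y|x}(y|x)=1$. Part (b) keeps the index set $\cA_\G(x)$ invariant in $\theta$, while (c) makes the right-hand sides $C^1$ in $\theta$. Lemma \ref{lem:licq} supplies LICQ and thus uniqueness of the Lagrange multipliers $(\lambda^*(\theta,\ps{0,y|x}),\gamma^*(\theta,\ps{0,y|x}))$. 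On any region where the active set is locally constant, the implicit function theorem applied to the KKT system will yield $C^1$ dependence of $(\qs{\theta,y|x}^*,\lambda^*,\gamma^*)$ on $\theta$; at a seam where a constraint's activity status changes, the associated multiplier passes continuously through zero, so one-sided $\theta$-derivatives agree. Consequently $\crit(\theta|x)=\sum_{y\in\cY}\ps{0,y|x}(y|x)\ln\qs{\theta,y|x}^*(y|x)$ is differentiable on $\interior\Theta$.

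For part (ii), the stationarity condition of \eqref{eq:q_star} delivers the closed form
\[
\qs{\theta,y|x}^*(y|x)=\frac{\ps{0,y|x}(y|x)}{\gamma^*(\theta,\ps{0,y|x})+\sum_{A\in\cA_\G(x):\,y\in A}\lambda^*_A(\theta,\ps{0,y|x})},
\]
so I will take as score $\score{\theta}{y|x;\pi}\equiv\partial_\theta\ln\qs{\theta,y|x}^*(y|x;\pi)$, where $\pi\in\Delta$ replaces $\ps{0,y|x}(\cdot|x)$ in the program; making this dependence explicit is exactly what is needed for the orthogonality calculation later. The chain rule on a finite sum then gives \eqref{eq:conditional_foc}. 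Square-integrability $\E[\|\score{\theta}{\ey|\ex;\ps{0,y|x}}\|^2]<\infty$ follows by combining Assumption \ref{ass:finite_support}(c) (square-integrability of $\nabla_\theta\contf(A|\ex)$), (e) (which, via $\qs{\theta,y|x}^*/\ps{0,y|x}=1/(\gamma^*+\sum\lambda^*_A)$, bounds the denominator away from zero so that $\lambda^*+\gamma^*$ is bounded), and the finiteness of $\cA_\G(x)$ from (a)–(b). Finally, for \eqref{eq:zero_expected_score} pick $\theta^*\in\Theta^*(\ps{0})$; Assumption \ref{ass:finite_support}(d) makes $\theta^*$ interior, so the first-order condition $\partial_\theta\crit(\theta^*)=0$ applies. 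Dominated convergence, licensed by the same uniform bounds, justifies the interchange $\partial_\theta\crit(\theta^*)=\E[\partial_\theta\crit(\theta^*|\ex)]=\E[\E[\score{\theta^*}{\ey|\ex;\ps{0,y|x}}|\ex]]=\E[\score{\theta^*}{\ey|\ex;\ps{0,y|x}}]=0$.

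The hardest step will be the seam phenomenon in part (i): unlike a textbook smooth likelihood, $\qs{\theta,y|x}^*$ is only a priori piecewise smooth in $\theta$, and kinks at hyperplanes where the active constraint set switches must be ruled out. The key will be to combine Lemma \ref{lem:licq}'s uniqueness of the KKT multipliers with their continuity in $\theta$: a multiplier entering (or leaving) the active set must do so continuously from zero, so its contribution to the derivative of $\qs{\theta,y|x}^*$ matches on both sides of the seam, securing global differentiability of $\crit(\theta|x)$ on $\interior\Theta$.
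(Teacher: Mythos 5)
Your overall architecture (finite-dimensional convex program, KKT multipliers, envelope-type reasoning, interior first-order condition for \eqref{eq:zero_expected_score}) matches the paper's, and your treatment of square-integrability and of the zero-expected-score identity is essentially the paper's argument. But there is a genuine gap in part (ii): you \emph{define} the score as $\score{\theta}{y|x;\cdot}\equiv\partial_\theta \ln q^*_{\theta,y|x}(y|x)$, and this object fails to exist precisely at the ``seam'' points you identify as the hardest step. At a parameter value where a constraint's activity status switches, the projection $q^*_{\theta,y|x}$ is generically \emph{not} differentiable in $\theta$, even though the value function $\crit(\theta|x)$ is. A one-constraint example makes this stark: maximize $p_1\ln q_1+p_2\ln q_2$ over the simplex subject to $q_1\ge \nu_\theta$; then $q_1^*(\theta)=\max\{p_1,\nu_\theta\}$, whose one-sided $\theta$-derivatives at the seam $\nu_\theta=p_1$ are $0$ and $\nabla_\theta\nu_\theta$, while $\crit$ is differentiable there with derivative $0$ because the multiplier vanishes at the seam --- your ``multiplier passes through zero'' intuition rescues $\crit$, but not $q^*$ or $\ln q^*$. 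The same phenomenon appears in the paper's Example 1: on the boundary between $\Theta_1(x,\ps{0})$ and $\Theta_2(x,\ps{0})$, the score for $y=(1,0)$ is $\nabla_\theta\eta_1/\eta_1$ from one side and $\nabla_\theta\eta_2/\eta_2$ from the other (see \eqref{eq:eg_score4}); these disagree pointwise even though their $\ps{0}$-weighted averages coincide. Since \eqref{eq:conditional_foc} is asserted at every $\theta$, and since $\Theta^*(\ps{0})$ can contain seam points (its boundary is typically where inequalities become weakly active), your score is undefined exactly where the theorem may need it. The paper avoids this by never differentiating $q^*$: it constructs the score by linear algebra from the unique Lagrange multipliers, $\score{\theta}{y_m}=[E_\theta C]_{[\cdot,m]}/q^*(y_m)$, where $E_\theta$ stacks $\nabla_\theta\contf(A_j)$ over active constraints and $C$ is the left inverse furnished by the linear independence of the active representers (Lemma~\ref{lem:licq}(iv)); this formula is well defined at seams, and its conditional expectation equals $\nabla_\theta\crit(\theta|x)$ by the envelope formula.

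A secondary weakness is that your part (i) rests on two unproven assertions: that the multipliers vary continuously in $\theta$ (uniqueness at each fixed $\theta$ does not by itself give this), and that one-sided derivatives exist and match across seams, which moreover need not be hyperplanes. The paper dispenses with any piecewise analysis by invoking the sensitivity theorem of \citet[Corollary 4.2]{Gauvin:1990uz} together with the inf-boundedness condition of \citet{Rockafellar1984}: given uniqueness of the primal solution and of the multiplier vector (Lemma~\ref{lem:licq}), the directional value function $V(t)=\crit(\theta+th)$ is differentiable at $t=0$ with $V'(0)=\sum_j\lambda_j^*\nabla_\theta\contf(A_j)^\top h$ for every direction $h$, which yields part (i) and the envelope formula in one step. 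If you replace your IFT-plus-seam-matching sketch with that citation, and define the score from the multipliers rather than as $\partial_\theta\ln q^*$, your proof closes.
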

The score function depends on $\ps{0,y|x}$.
When $\theta\mapsto\crit(\theta|x)$ is concave, $\Theta^*(\ps{0})$ equals the set of $\theta\in\Theta$ for which \eqref{eq:zero_expected_score} holds.
When concavity does not hold, this set includes $\Theta^*(\ps{0})$.
As one of our goals is to avoid spuriously tight confidence sets, we view the benefit of an easy-to-implement method to outweigh the cost of a sometimes wider confidence set which asymptotically uniformly covers the set of $\theta$'s satisfying \eqref{eq:zero_expected_score}.
Our Monte Carlo results in Section \ref{sec:monte_carlo} show that, for the examples analyzed there, our procedure performs well relative to existing methods.
The proof of Theorem \ref{thm:score_characterization} is in Appendix \ref{sec:app} and leverages results in \citet{Gauvin:1990uz} to establish differentiability with respect to $\theta$ of
\begin{align}
\crit(\theta|x) = \E[\ln \qs{\theta,y|x}^*(\ey|\ex)|\ex=x]=\sup_{\qs{y|x}\in\fq{\theta,x}}\sum_{y\in\cY} \ps{0,y|x}(y|x)\ln\qs{y|x}(y|x),\label{eq:conditional_L}
\end{align}
where $\fq{\theta,x}$ is defined in \eqref{eq:core_density_x}.
The proof uses results in \citet{luo:pon:wan25} by which under Assumption \ref{ass:finite_support}-(e), the smallest collection of inequalities among the ones in \eqref{eq:core_density_x} that suffice to sharply characterize $\fq{\theta,x}$ does not depend on $\theta$.
As further discussed in Section~\ref{rem:CDC}, Lemma \ref{lem:licq}-(ii) in the Online Appendix shows that $\qs{\theta,y|x}^*$ and the value of $\crit(\theta|x)$, and hence its differentiability and the score function $\score{\theta}{y|x;\ps{0,y|x}}$, are insensitive to inclusion of additional inequalities from \eqref{eq:core_density_x} in the maximization problem in \eqref{eq:conditional_L}.
Hence, so are the pseudo-true set $\Theta^*(\ps{})$ and our inference procedure.
This is in contrast with much related literature, where moment selection is often required for computational tractability or for the inference procedure, but may have substantial implications both on the population region that the researcher targets \citep{ked:li:mou21} and on the properties of the inference procedure \citep[e.g.,][]{and:shi13,bug:can:shi17,kai:mol:sto19}.
\vspace{.2cm}

\noindent\textbf{Example 1} (Continued).
Assumption \ref{ass:finite_support}-(a) holds.
Part (b) holds as long as $\f(S_{\{y\}|x;\theta})$, $y\in\cY$, and $\f(M_{x;\theta})$ are differentiable with respect to $\theta$ (e.g., for $(\eu_1,\eu_2)$ bivariate normal). 
Part (d) holds by compactness of $\cX$ and $\Theta$, as one can find $c>0$ such that $\eta_j(\theta;x)\ge c$ and $\eta_1(\theta;x)-\eta_j(\theta;x)\ge c,j=2,3$, $\f(S_{\{y\}|x;\theta})>c,~y\in\{(0,0),(1,1)\}$, for all $x\in\cX$ and $\theta\in\Theta$.
Part (e) holds if, e.g., $\delta_1,\delta_2<0$, whence $\cA_\G=\{\{(0,0)\},\{(0,1)\},\{(1,0)\},\{(1,1)\},$ $\{(1,0),(0,1)\}\}$ for all $\theta\in\Theta$. 
Part (f) holds when $\delta_1\cdot\delta_2\neq 0$: at most one inequality binds at $\qs{\theta,y|x}^*$ and is irredundant.
If one allows for $\delta_1\cdot\delta_2=0$, at those values the model is complete, the support of $\G(\cdot|x,\theta)$ changes to $\{\{(0,0)\},\{(0,1)\},\{(1,0)\},\{(1,1)\}\}$, and two inequalities with identical gradient bind at $\qs{\theta,y|x}^*$, violating Assumption \ref{ass:finite_support}-(e)(f); in that case, one can extend our approach using subdifferentials, as we show in Online Appendix~\ref{app:clarke}.
In Proposition \ref{prop:games_profiled} in the Online Appendix we show that:
\begin{align}
\score{\theta}{(0,0)|x;\ps{0,y|x}}&=\tfrac{\nabla_\theta\f(S_{\{(0,0)\}|x;\theta})}{\f(S_{\{(0,0)\}|x;\theta})}\label{eq:eg_score1}\\
\score{\theta}{(1,1)|x;\ps{0,y|x}}&=\tfrac{\nabla_\theta\f(S_{\{(1,1)\}|x;\theta})}{\f(S_{\{(1,1)\}|x;\theta})}\label{eq:eg_score2}\\
\score{\theta}{(0,1)|x;\ps{0,y|x}}&=
\begin{cases}
	\tfrac{\nabla_\theta \eta_1(\theta;x)}{ \eta_1(\theta;x)}&\theta\in\Theta_1(x;\ps{0,y|x})\\
	\tfrac{\nabla_\theta [\eta_1(\theta;x)-\eta_2(\theta;x)]}{\eta_1(\theta;x)- \eta_2(\theta;x)}&\theta\in\Theta_2(x;\ps{0,y|x})\\
	\tfrac{\nabla_\theta [\eta_1(\theta;x)-\eta_3(\theta;x)]}{\eta_1(\theta;x)- \eta_3(\theta;x)}&\theta\in\Theta_3(x;\ps{0,y|x})
\end{cases}\label{eq:eg_score3}\\
\score{\theta}{(1,0)|x;\ps{0,y|x}}&=\begin{cases}
	\tfrac{\nabla_\theta \eta_1(\theta;x)}{ \eta_1(\theta;x)}&\theta\in\Theta_1(x;\ps{0,y|x})\\
	\tfrac{\nabla_\theta \eta_2(\theta;x)}{ \eta_2(\theta;x)}&\theta\in\Theta_2(x;\ps{0,y|x})\\
	\tfrac{\nabla_\theta \eta_3(\theta;x)}{ \eta_3(\theta;x)}&\theta\in\Theta_3(x;\ps{0,y|x})
\end{cases}\label{eq:eg_score4}
\end{align}
In Section \ref{sec:computation} below we discuss how to accurately and rapidly compute the score numerically when analytic representations are not available.

\subsection{Geometry of the Pseudo-True Set}\label{subsec:geometry_Theta_star}
We next discuss the topological properties of the pseudo-true set.
By its definition in \eqref{eq:def_pseudo_true_set}, $\Theta^*$ can be viewed as the $\argmin$-set of an optimization problem indexed by $\ps{}$:
\begin{align}
\Theta^*(\ps{})=\argmin_{\theta\in\Theta}\phi(\theta,\ps{}),~~\text{with}~~\phi(\theta,\ps{})\equiv \inf_{\qs{}\in\fq{\vartheta}}\int_{\cY\times\cX}\ps{}(y,x)\ln\tfrac{\ps{y|x}(y|x)}{\qs{y|x}(y|x)}d\zeta(y,x).\label{eq:Theta_star_param_prob}
\end{align}
\begin{theorem}\label{thm:argmin_stability}
    Suppose $\Theta$ is compact, Assumption~\ref{ass:finite_support} holds, and there exists a neighborhood $V$ of $\ps{0}$ and $M>0$ such that for all $p\in V$, $\sup_{\theta\in\Theta}\E[\|\score{\theta}{\ey|\ex;\ps{y|x}}\|^2]\le M$. 
    Then the mapping $\ps{}\mapsto \Theta^*(\ps{})$ is nonempty, compact-valued, and upper hemicontinuous.
\end{theorem}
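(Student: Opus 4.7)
My plan is to apply Berge's Maximum Theorem to the parametric minimization program in \eqref{eq:Theta_star_param_prob}. The feasible set is the compact $\Theta$ regardless of $\ps{}$, so the constraint correspondence is trivially compact-valued and continuous. The entire task therefore reduces to establishing joint continuity of the objective $(\theta,\ps{}) \mapsto \phi(\theta,\ps{})$ on $\Theta \times V$, possibly after shrinking $V$ so that every $\ps{} \in V$ remains bounded away from zero (which is feasible since $\ps{0}$ lies in the interior of the simplex).

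I would first write $\phi(\theta,\ps{}) = \int_{\cY\times\cX} \ps{}(y,x) \ln \tfrac{\ps{y|x}(y|x)}{\qs{\theta,y|x}^*(y|x;\ps{y|x})}\, d\zeta(y,x)$, making explicit that the inner KL-projection defined in \eqref{eq:q_star} depends on $\theta$ through the constraint set $\fq{\theta,x}$ and on $\ps{y|x}$ through the inner objective. The entropic term $\ps{}(y,x)\ln \ps{y|x}(y|x)$ is continuous in $\ps{}$ on $V$ by boundedness away from zero, so only the profiled log term requires work.

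The key step is to show that $(\theta,\ps{}) \mapsto \qs{\theta,y|x}^*(\cdot;\ps{y|x})$ is continuous. By Assumption~\ref{ass:finite_support}(b) and the Luo-Pon-Wang reduction cited after Theorem~\ref{thm:score_characterization}, $\fq{\theta,x}$ is a convex polytope defined by the finitely many inequalities $\sum_{y \in A} q_{y|x}(y|x) \ge \contf(A|x)$ indexed by the fixed collection $\cA_\G(x)$. Each $\contf(A|x)$ is continuous in $\theta$ by Assumption~\ref{ass:finite_support}(c), so $\theta \mapsto \fq{\theta,x}$ is Hausdorff-continuous. The inner objective $-\sum_y \ps{y|x}(y|x) \ln q_{y|x}(y|x)$ is strictly convex in $q$ and jointly continuous in $(\ps{},q)$ on the compact set $\{q : q \ge c\}$ guaranteed by Assumption~\ref{ass:finite_support}(e), and by Online Appendix Lemma~\ref{lem:licq} the inner minimizer is unique. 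Berge's theorem applied to the inner program then yields joint continuity of $(\theta,\ps{}) \mapsto \qs{\theta,y|x}^*(\cdot;\ps{y|x})$. Consequently $\ln \qs{\theta,y|x}^*(y|x;\ps{y|x})$ is jointly continuous and uniformly bounded by $|\ln c|$; integration against $\ps{}(y,x)$ over the compact product $\cY\times\cX$ preserves continuity via dominated convergence, with the uniform bound $\sup_{\theta\in\Theta} \E[\|\score{\theta}{\ey|\ex;\ps{y|x}}\|^2]\le M$ controlling oscillations of the $\theta$-gradient uniformly in $\ps{}$. With $\phi$ jointly continuous on $\Theta\times V$ and $\Theta$ nonempty and compact, Berge's Maximum Theorem delivers all three conclusions: $\Theta^*(\ps{})$ is nonempty by Weierstrass, compact as a closed subset of $\Theta$, and $\ps{} \mapsto \Theta^*(\ps{})$ is upper hemicontinuous at $\ps{0}$.

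The main obstacle is verifying joint continuity of the inner KL-projection; Assumption~\ref{ass:finite_support}(b) (fixed support of $\G$) and (e) (optimal density bounded away from zero) are essential, since the former keeps the constraint polytope's facet structure stable in $\theta$ and the latter prevents the log from blowing up at the boundary, both of which are needed to prevent the argmin of the inner convex program from jumping discontinuously as $(\theta,\ps{})$ varies.
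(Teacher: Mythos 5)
Your proposal is correct in outline and reaches the same reduction as the paper: Berge's maximum theorem applied to \eqref{eq:Theta_star_param_prob} with the constant constraint set $\Theta$, so that everything hinges on joint continuity of $\phi$ in $(\theta,\ps{})$. The route you take to that continuity, however, is genuinely different from the paper's. The paper (Lemma~\ref{lemma:continuity} in the Online Appendix) never touches the inner argmin: it splits $\phi(\theta_n,p_n)-\phi(\theta,p)$ into a $\theta$-increment and a $p$-increment of the value function $v(\theta,\ps{y|x},x)=\inf_{\qs{y|x}\in\fq{\theta,x}}\dKL{\ps{y|x}}{\qs{y|x}}$, handling the $\theta$-increment by differentiability of $\crit(\theta|x)$ (Theorem~\ref{thm:score_characterization}) plus the mean-value theorem --- which is exactly where the hypothesis $\sup_{\theta\in\Theta}\E[\|\score{\theta}{\ey|\ex;\ps{y|x}}\|^2]\le M$ enters --- and the $p$-increment by a KL-perturbation bound (Csisz\'ar's Lemma 2.7) with modulus $\omega(v)=v\ln(d_Y/v)+v\ln(1/c)$. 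You instead prove continuity of the inner projection $(\theta,\ps{})\mapsto\qs{\theta,y|x}^*$ by a nested Berge argument plus uniqueness of the minimizer, and then integrate the uniformly bounded $\ln\qs{\theta,y|x}^*$. What your route buys: it bypasses the score machinery entirely; indeed your closing appeal to the bound $M$ is vestigial, since bounded logarithms plus dominated convergence already close the argument, so that hypothesis is never genuinely used in your proof. What the paper's route buys: it never needs continuity of the feasible-set correspondence or of the argmin, which is precisely the step you assert but do not prove.

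That step deserves more than the sentence you give it. Continuity of $\theta\mapsto\contf(A|x)$ does not by itself imply Hausdorff continuity of the polytope $\{q\in\Delta:\sum_{y\in A}q(y|x)\ge\contf(A|x),\,A\in\cA_\G(x)\}$; lower hemicontinuity of solution sets of linear systems can fail when the data move. It holds here because Assumption~\ref{ass:finite_support}(b) fixes the constraint matrix, so a Hoffman-type error bound (equivalently, the Walkup--Wets Lipschitz property of polyhedral mappings with fixed matrix and varying right-hand side) delivers Hausdorff-Lipschitz dependence on the vector $(\contf(A|x))_{A\in\cA_\G(x)}$; this is exactly where 1(b) does its work and it should be said. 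Second, Assumption~\ref{ass:finite_support}(e) bounds below only the projection of $\ps{0}$, whereas your inner Berge argument needs $\qs{\theta,y|x}^*(\cdot\,;\ps{y|x})$ bounded away from zero for $\ps{}$ near $\ps{0}$, in order to truncate the feasible set away from the boundary of $\Delta$ without changing the minimizer. This does follow --- use the $\ps{0}$-projection as a feasible test point to get $\dKL{\ps{y|x}}{\qs{\theta,y|x}^*(\cdot\,;\ps{y|x})}\le d_Y\ln(1/c)$, then combine with interiority of $\ps{0}$ to convert the bound on the log-ratio into a pointwise lower bound on the projection --- but it requires an argument; in fairness, the paper's own proof of Lemma~\ref{lemma:continuity} relies on the same implicit extension when it applies the modulus $\omega$ at perturbed densities. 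With these two steps supplied, your proof is complete.
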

We prove this theorem by showing that under its assumptions, $\phi(\cdot,\cdot)$ is jointly continuous in $(\theta,\ps{})$ (Lemma~\ref{lemma:continuity}), which along with compactness of $\Theta$ guarantees applicability of Berge's maximum theorem.
The result establishes non-emptyness of $\Theta^*(\ps{})$ and yields a first step towards characterizing how the geometry of the pseudo-true set changes as the extent of model misspecification changes.
Given an observed density $\ps{0}$, let $\{\ps{\gamma}\}\in\Delta$, $\gamma\in\Gamma\subset\R$, be a net with corresponding pseudo-true values $\{\theta_\gamma\}$ such that: (i) $\theta_\gamma\in \Theta^*(\ps{\gamma})$ for all $\gamma\in\Gamma$, (ii) $\ps{\gamma}\to \ps{0}$, and (iii) $\theta_\gamma\to\theta^*$ for some $\theta^*\in\Theta$.
Then Theorem~\ref{thm:argmin_stability} yields that $\theta^*\in \Theta^*(\ps{0})$: the limits of such nets form elements of the pseudo true set at $\ps{0}$.
This property, which holds under weak conditions, rules out that $\Theta^*(\ps{\gamma})$ is persistently larger than $\Theta^*(\ps{0})$, as $\ps{\gamma}\to \ps{0}$. 
However, it does not preclude the possibility that $\Theta^*(\ps{\gamma})$ shrinks from a set at $\ps{0}$ (e.g., an arc in the entry game example) to a smaller set or a singleton at $\ps{\gamma}$, no matter how close $\ps{\gamma}$ is to $\ps{0}$.

This possibility is instead precluded when $\ps{\gamma}\mapsto\Theta^*(\ps{\gamma})$ is lower hemicontinuous, yielding the second step characterizing how the geometry of $\Theta^*(\ps{\gamma})$ depends on the extent of misspecification.
It is well known that lower hemicontinuity at some $\gamma_0\in\Gamma$ is guaranteed under stronger and more challenging to verify conditions than upper hemicontinuity \citep[][p.155]{Rockafellar_Wets2005aBK}.
Equivalent statements include that for all $\theta\in\Theta$, the mapping $\gamma\mapsto\dist(\theta,\Theta^*(\ps{\gamma}))$ is upper semicontinuous at $\gamma_0$; or that for every $\rho>0$ and $\epsilon>0$, there is a neighborhood $V$ of $\gamma_0$ such that $\Theta^*(\ps{\gamma_0})\cap\rho\mathbb{B}\subset\Theta^*(\ps{\gamma})+\epsilon\mathbb{B}$ for all $\gamma\in V\cap\Gamma$, with $\mathbb{B}$ the unit ball in $\R^{d_\theta}$ \citep[e.g.,][Propositions 5.11 and 5.12]{Rockafellar_Wets2005aBK}.
Sufficient conditions for lower hemicontinuity are given, e.g., in \citet[][Chapter 9]{Rockafellar_Wets2005aBK} and \citet[Theorem 1.5.5]{Aubin:1990aa}.

Next, we provide an entry game example where we can transparently show that the correspondence $\gamma\mapsto\Theta^*(\ps{\gamma})$ is both lower and upper hemicontinuous, guaranteeing that the geometric properties under correct specification of the sharp identification region are preserved as the model becomes misspecified. 
In particular, $\Theta^*(\gamma)$ does not abruptly become a singleton as $\ps{\gamma}$ turns incompatible with the model (see Definition~\ref{def:correct_misspec}).
\vspace{.25cm}

\noindent\textbf{Example 1} (Specialized).\label{example:entry_uniform}
Let $(U_1,U_2)\sim \mathrm{Uniform}[0,1]^2$, $\cY=\{(1,1),(0,1),(1,0)\}$. For $\ex^*\in\{0,1\}$ and $\Ps_0(\ex^*=1)=1/2$, set $\pi_j=\ey_j(\delta_{j,0}(\ex^*)\ey_{3-j}+U_j)$, $\delta_{j,0}(\ex^*)=-0.5 + \theta_{j,0} \ex^*$, $\Theta=[-0.45,0]^2$, and $\theta_{j,0}<0$ for $j=1,2$.
Let $y=(0,1)$ be always selected in the region of multiplicity.
Misspecification occurs because $\ex^*$ is replaced by a binary proxy $\ex$ with $\Ps_0(\ex^*=1|\ex=x)=\kappa(x,\gamma)\equiv(1-\gamma)x+\gamma(1-x)$.  
Expressing the observed density of $\ey|\ex$, indexed by $\gamma$ and denoted $\ps{\gamma}(\ey|\ex)$, as a mixture of the density $\ps{0}(\ey|\ex^*=x)$, for $x=0,1$, we obtain $\ps{\gamma}(y|\ex=x)=\ps{0}(y|\ex^*=1)\kappa(x,\gamma)+0.25(1-\kappa(x,\gamma))$, for $y=(1,1),(1,0)$, and $\ps{\gamma}((0,1)|\ex=x)=1- \ps{\gamma}((1,1)|\ex=x)-\ps{\gamma}((1,0)|\ex=x)$.
At $\gamma=0$, $\ps{0}(\ey|\ex=x)=\ps{0}(\ey|\ex^*=x)$ for all $x\in\{0,1\}$ and the model is correctly specified.
Regardless of the value of $\gamma$, denoting $\delta_{j,\theta}(x)=-0.5 + \theta_j x$ for some $\theta_j\in\Theta$, we have
\vspace{-.5cm}
\begin{multline*}
    \fq{\theta,x}=\big\{q\in\Delta:\qs{}((1,1)|x)=(1+\delta_{1,\theta}(x))(1+\delta_{2,\theta}(x)),\\
    -\delta_{2,\theta}(x)(1+\delta_{1,\theta}(x))\le\qs{}((1,0)|x)\le-\delta_{2,\theta}(x)\big\}.
\end{multline*}

\vspace{-.2cm}
\noindent
Ignoring possible misspecification, one would state $\theta$'s sharp identification region as $\Theta_I(\ps{\gamma})=\{\theta\in\Theta:\ps{\gamma}(y|x)\in\fq{\theta,y|x},x=0,1\}=\big\{(\vartheta-[0.5~0.5]^\top)\in\Theta:\vartheta_1\vartheta_2=\ps{\gamma}((1,1)|1)$; $\vartheta_1\le 1-\ps{\gamma}((0,1)|1);~\vartheta_2\le 1-\ps{\gamma}((1,0)|1);~\ps{\gamma}((1,1)|0)=0.25;~\ps{\gamma}((1,0)|0)\in[0.25,0.5]\big\}$.
For $\gamma=0$, $\Theta_I(\ps{0})$ is an arc defined by $\ps{0}(y|1)\in\fq{\theta,y|1}$, with $\ps{0}(y|0)\in\fq{\theta,y|0}$ restricting $\ps{0}(y|0)$ but not $\theta$.
But for any $\gamma>0$, no matter how close to $0$, $\Theta_I(\ps{\gamma})=\emptyset$, as one can verify that $\ps{\gamma}((1,1)|0)<(1+\delta_{1,\theta}(0))(1+\delta_{2,\theta}(0))=0.25$ for all $\theta\in\Theta$.

In contrast, our pseudo-true set $\Theta^*(\ps{\gamma})$ remains robust to misspecification and preserves its arc shape. 
After plugging the uniform distribution for $\f$ into \eqref{eq:eta1}-\eqref{eq:def_Theta3} to define $\eta_j(\theta;x)$ and the sets $\Theta_j(x,\ps{\gamma})$, and into \eqref{eq:eg_prof_likelihood2}-\eqref{eq:eg_prof_likelihood4} to obtain $\qs{\theta,y|x}^*$ and from that $\score{\theta}{y|x;\ps{\gamma,y|x}}$ for $y\in\cY$, one can verify that
\vspace{-.3cm}
\begin{multline}
\hspace{-.25cm}\Theta^*(\ps{\gamma})=\Big\{\theta\in\Theta:~\tfrac{\ps{\gamma}((1,1)|\ex=1)}{\ps{\gamma}((0,1)|\ex=1)+\ps{\gamma}((1,0)|\ex=1)}=\tfrac{(1+\delta_{1,\theta}(1))(1+\delta_{2,\theta}(1))}{1-(1+\delta_{1,\theta}(1))(1+\delta_{2,\theta}(1))},\\
\tfrac{-\delta_{2,\theta}(1)(1+\delta_{1,\theta}(1))}{1-(1+\delta_{1,\theta}(1))(1+\delta_{2,\theta}(1))}\le\tfrac{\ps{\gamma}((1,0)|\ex=1)}{\ps{\gamma}((0,1)|\ex=1)+\ps{\gamma}((1,0)|\ex=1)}\le \tfrac{-\delta_{2,\theta}(1)}{1-(1+\delta_{1,\theta}(1))(1+\delta_{2,\theta}(1))}\Big\}.\label{eq:ex1:Theta_star}
\end{multline}
Even when $\gamma>0$, the equality restriction in \eqref{eq:ex1:Theta_star} is analogous to $\ps{0}((1,1)|1)=(1+\delta_{1,0}(1))(1+\delta_{2,0}(1))$. Along with the other two inequalities, it yields a region with the same shape as $\Theta_I(\ps{\gamma})$ for $\gamma=0$. 
In Proposition~\ref{prop:verify_ex_LHC} in the Online Appendix we derive \eqref{eq:ex1:Theta_star} and prove that the mapping $\gamma\mapsto\Theta^*(\ps{\gamma})$ is both upper and lower hemicontinuous.
\hfill$\square$

\subsection{Asymptotic Distribution of the Average Score and Rao's Test Statistic}
\label{subsec:test_stat}
Any $\theta^*\in\Theta^*(\ps{0})$ satisfies the population first order condition in \eqref{eq:zero_expected_score}.
By the sample analog principle, we propose to estimate $\E[\score{\theta^*}{\ey|\ex;\ps{0,y|x}}]$ through $\bar{s}_{\theta^*,n}(\hat p_{n,y|x})$, with
\begin{align}
    \bar{s}_{\theta,n}(p)\equiv\tfrac{1}{n}\sum_{i=1}^n \score{\theta}{\ey_i|\ex_i;p},\quad\theta\in\Theta,~p\in\Delta\label{eq:average_score}
\end{align}
and $\hat p_{n,y|x}$ a nonparametric estimator of $\ps{0,y|x}$, e.g., a cell mean estimator when $\ex$ has a discrete distribution or a sieve estimator when $\ex$ has a continuous distribution.
\label{ref:core:refult}Our core result consists of showing that $\sqrt{n}\bar{s}_{\theta^*}(\hat p_{n,y|x})$ has an asymptotically normal distribution, which is insensitive to estimation of $\ps{0,y|x}$.
We do so leveraging the literature on semiparametric estimation, in particular \citet[Proposition 2]{Newey:1994aa}, to prove that $\E[\score{\theta^*}{\ey,\ex;\ps{y|x}}]$ has an orthogonality property with respect to $\ps{0,y|x}$.
Here we provide high-level conditions under which our results attain.
In Online Appendix \ref{appendix:entry_game} we verify these conditions for the entry game example, both with discrete and continuous covariates.
To state these conditions, for any $\theta\in\Theta$, let $m_\theta(x;\ps{y|x})\equiv \E[\score{\theta}{\ey|\ex;\ps{y|x}}|\ex=x]$. 
Let $\cH$ be a parameter space to which $\ps{0,y|x}$ belongs,  with $\dim(\cH)=d_Y\times d_X<\infty$ if $\ex$ is finitely supported, and $\cH$ infinite dimensional otherwise.
Let $\|p-p'\|_{\cH}$ be a pseudo-metric on $\cH$ (e.g., the sup-norm $\|p\|_{\cH}=\sup_{x\in\cX}\sup_{y\in\cY}|p(y|x)|$). 
For any $p\in\cH$ and $\theta\in\Theta$, let $\mathbb G_{n,\theta}(p)\equiv\sqrt{n}(\bar{s}_{\theta,n}(p)-\E[\score{\theta}{\ey_i|\ex_i;p}])$.
\begin{assumption}
\label{as:newey_modified1_new}
For each $\theta^*\in\Theta^*(\ps{0})$, the pathwise derivative
\begin{align*}
D(\theta^*,\ps{0,y|x})[\ps{y|x}-\ps{0,y|x}]=\lim_{\tau\to 0}\tfrac{\E[m_{\theta^*}(\ex,\ps{0,y|x}+\tau(\ps{y|x}-\ps{0,y|x}))-m_{\theta^*}(\ex,\ps{0,y|x})]}{\tau}
\end{align*}
exists in all directions $(\ps{y|x}-\ps{0,y|x})\in \cH$.
For any $\delta_n=o(1)$ and all $\|\ps{y|x}-\ps{0,y|x}\|_{\cH}\le\delta_n$,
  \begin{align*} 
  \big\|\E[m_{\theta^*}(\ex;\ps{y|x})]-\E[m_{\theta^*}(\ex,\ps{0,y|x})]-D(\theta^*,\ps{0,y|x})[\ps{y|x}-\ps{0,y|x}]\big\|\le c\|\ps{y|x}-\ps{0,y|x}\|^2_{\mathcal H}.
  \end{align*}
\end{assumption}

\begin{assumption}
\label{as:newey_modified2}
\begin{enumerate}[label=(\roman*)]
\item \label{ass:random_sample} The data is a random sample $(\ey_i,\ex_i)_{i=1}^n$ drawn from $P_0$.
\item \label{ass:consistent_est} $\hat p_{n,y|x}\in\mathcal H$ with probability approaching  1 and $\|\hat p_{n,y|x}-\ps{0,y|x}\|_{\mathcal H}=o_P(n^{-1/4})$.
\item \label{ass:normality} For each $\theta^*\in\Theta^*(\ps{0})$, $\mathbb G_{n,\theta^*}(\ps{0,y|x})\stackrel{d}{\to}N(0,\Sigma_{\theta^*})$, with $\Sigma_{\theta^*}\equiv\E[\score{\theta^*}{\ey_i|\ex_i;\ps{0}}\score{\theta^*}{\ey_i|\ex_i;\ps{0}}^\top]$ the population variance-covariance matrix of the score function.
\item \label{ass:equic} For each $\theta^*\in\Theta^*(\ps{0})$ and all sequences of positive numbers $\{\delta_n\}$ with $\delta_n=o(1)$,
\begin{align*}
		\sup_{\|\ps{y|x}-\ps{0,y|x}\|_{\mathcal H}\le \delta_n}\big\|\mathbb G_{n,\theta^*}(\ps{y|x})-\mathbb G_{n,\theta^*}(\ps{0,y|x})\big\|=o_P(1).
	\end{align*}	
\end{enumerate}
\end{assumption}
Assumptions \ref{as:newey_modified1_new} and \ref{as:newey_modified2}, in their use of $\|\cdot\|_{\mathcal H}$, refer to the same norm.
In Assumption \ref{as:newey_modified1_new}, we follow \citet[Conditions 2.3 and 2.6]{Chen2003} and impose a smoothness condition with respect to $\ps{y|x}$ on $\E[\score{\theta^*}{\ey,\ex;\ps{y|x}}]$. 
Assumption \ref{as:newey_modified2} \ref{ass:random_sample} is a standard random sampling condition (see \citet{eps:kai:seo16} for a discussion of inference under different assumptions).
Assumption \ref{as:newey_modified2} \ref{ass:consistent_est} requires that the estimation error of the nuisance parameter $\ps{0,y|x}$ vanishes fast enough. 
Assumption \ref{as:newey_modified2} \ref{ass:normality} follows from the central limit theorem.
Assumption \ref{as:newey_modified2} \ref{ass:equic} is a stochastic equicontinuity condition with well known primitive conditions \citep{Vaart:1996wk}. 
In Propositions \ref{lemma:pathwise_game}-\ref{lemma:donsker_game} in the Online Appendix, we verify all these assumptions in the two players entry game Example \ref{example:CT}, both with discrete and continuous covariates.
Under these assumptions, we obtain:
\vspace{-.25cm}
\begin{theorem}\label{lem:scores_limit}
Suppose Assumptions \ref{ass:finite_support}, \ref{as:newey_modified1_new}, and \ref{as:newey_modified2} hold. Then, for each $\theta^*\in\Theta^*(\ps{0})$,
\vspace{-.35cm}
\begin{align}
	\tfrac{1}{\sqrt n}\sum_{i=1}^n \score{\theta^*}{\ey_i|\ex_i;\hat p_{n,y|x}}=\tfrac{1}{\sqrt n}\sum_{i=1}^n\score{\theta^*}{\ey_i|\ex_i;\ps{0,y|x}}+o_p(1)\stackrel{d}{\to}N(0,\Sigma_{\theta^*}).\label{eq:scores_limit}
\end{align}
\end{theorem}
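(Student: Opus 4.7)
The plan is to establish the first (asymptotic equivalence) equality in \eqref{eq:scores_limit} via a standard plug-in decomposition for semiparametric M-estimation and then apply the CLT in Assumption~\ref{as:newey_modified2}\ref{ass:normality}. Writing $m_\theta(x;p) \equiv \E[\score{\theta}{\ey|\ex;p}|\ex=x]$, I decompose
\begin{align*}
\tfrac{1}{\sqrt n}\sum_{i=1}^n\bigl[s_{\theta^*}(\ey_i|\ex_i;\hat p_{n,y|x}) - s_{\theta^*}(\ey_i|\ex_i;\ps{0,y|x})\bigr] = A_n + \sqrt n\, B_n,
\end{align*}
where $A_n \equiv \mathbb G_{n,\theta^*}(\hat p_{n,y|x}) - \mathbb G_{n,\theta^*}(\ps{0,y|x})$ captures the empirical-process deviation from replacing $\ps{0,y|x}$ by $\hat p_{n,y|x}$ and $B_n \equiv \E[m_{\theta^*}(\ex;p)]|_{p=\hat p_{n,y|x}} - \E[m_{\theta^*}(\ex;\ps{0,y|x})]$ is the resulting bias, with the outer expectation always taken under $P_0$.

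The empirical-process term $A_n$ is $o_p(1)$ by combining Assumption~\ref{as:newey_modified2}\ref{ass:equic} with the consistency $\|\hat p_{n,y|x}-\ps{0,y|x}\|_{\cH}=o_p(n^{-1/4})=o_p(1)$ from Assumption~\ref{as:newey_modified2}\ref{ass:consistent_est}. For the bias term, the Fr\'echet-type expansion in Assumption~\ref{as:newey_modified1_new} gives
\begin{align*}
B_n = D(\theta^*,\ps{0,y|x})[\hat p_{n,y|x}-\ps{0,y|x}] + R_n,\qquad |R_n|\le c\,\|\hat p_{n,y|x}-\ps{0,y|x}\|_{\cH}^2,
\end{align*}
so the quadratic remainder satisfies $\sqrt n\, R_n = O_p(\sqrt n\cdot n^{-1/2}) = o_p(1)$ using the $o_p(n^{-1/4})$ rate.

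The main substantive step is to show that the pathwise-derivative piece $\sqrt n\, D(\theta^*,\ps{0,y|x})[\hat p_{n,y|x}-\ps{0,y|x}]$ is $o_p(1)$, which, following \citet[Proposition~2]{Newey:1994aa}, reduces to verifying the Neyman-type orthogonality $D(\theta^*,\ps{0,y|x})[\eta]=0$ for every admissible perturbation $\eta$ of the conditional density. Because $s_\theta(y|x;p)=\nabla_\theta\ln q^*_\theta(y|x;p)$ with $q^*_\theta(\cdot|x;p)$ being the KL projection of $p(\cdot|x)$ onto $\fq{\theta,x}$, I would exploit the Lagrangian first-order characterization of that projection (whose multipliers are unique and $\theta$-uniform by Lemma~\ref{lem:licq} and Assumption~\ref{ass:finite_support}(b)), together with the population score equation $\E[s_{\theta^*}(\ey|\ex;\ps{0,y|x})]=0$ from \eqref{eq:zero_expected_score}. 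An envelope-type calculation would then show that perturbing the nuisance $p$ inside the projection changes $\E_{P_0}[s_{\theta^*}(\ey|\ex;p)]$ only at second order around $p=\ps{0,y|x}$, delivering the orthogonality.

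Putting the pieces together, $(1/\sqrt n)\sum_i s_{\theta^*}(\ey_i|\ex_i;\hat p_{n,y|x}) = \mathbb G_{n,\theta^*}(\ps{0,y|x}) + \sqrt n\,\E[s_{\theta^*}(\ey|\ex;\ps{0,y|x})] + o_p(1)$, and by the same FOC the second summand vanishes, leaving $\mathbb G_{n,\theta^*}(\ps{0,y|x}) \stackrel{d}{\to} N(0,\Sigma_{\theta^*})$ by Assumption~\ref{as:newey_modified2}\ref{ass:normality}. The main obstacle I anticipate is the orthogonality step: as visible in \eqref{eq:eg_score1}--\eqref{eq:eg_score4} for Example~\ref{example:CT}, the score is piecewise-defined with the active constraint set of the KL projection depending on $\ps{0,y|x}$, so defining and computing $D(\theta^*,\ps{0,y|x})$ along admissible perturbation directions and ensuring that the envelope argument carries over across the regions $\Theta_j(x,\ps{0})$ will require the most care.
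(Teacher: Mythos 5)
Your proposal follows essentially the same route as the paper's proof: the identical decomposition into an empirical-process term (handled, as you do, by Assumption~\ref{as:newey_modified2}\ref{ass:consistent_est} and \ref{as:newey_modified2}\ref{ass:equic}), a quadratic remainder controlled by Assumption~\ref{as:newey_modified1_new} and the $o_p(n^{-1/4})$ rate, and a pathwise-derivative term killed by the \citet[Proposition 2]{Newey:1994aa} orthogonality argument, which the paper isolates as Lemma~\ref{lem:orthogonality} and proves by exactly the envelope reasoning you sketch (the KL projection of any perturbed $\ps{\tau,y|x}$ remains feasible for the $\tau=0$ problem, so the $\Ps_0$-expected log-likelihood is maximized at $\tau=0$, and cross-differentiation in $\theta$ gives $D(\theta^*,\ps{0,y|x})[\cdot]=0$), followed by the same appeal to the first-order condition \eqref{eq:zero_expected_score} and the CLT in Assumption~\ref{as:newey_modified2}\ref{ass:normality}. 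One cosmetic slip: the remainder bound should read $\sqrt n\,R_n\le \sqrt n\,c\,\|\hat p_{n,y|x}-\ps{0,y|x}\|_{\cH}^2=o_p(1)$ rather than $O_p(\sqrt n\cdot n^{-1/2})$, since the latter is $O_p(1)$, not $o_p(1)$; the $o_p(n^{-1/4})$ rate is what makes the term vanish.
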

Armed with the result in Theorem \ref{lem:scores_limit}, we propose to use a Rao's score statistic  to test at prespecified asymptotic level $\alpha\in(0,1)$ hypotheses of the form
\begin{align}
\mathbb{H}_0: \theta\in\Theta^*(\ps{0})~~\text{against}~~\mathbb{H}_A: \theta\notin\Theta^*(\ps{0}),\label{eq:null_hp}
\end{align}
\vspace{-.35cm}
and to obtain confidence sets by test inversion.
The Rao-type test statistic takes the form
\begin{align}
	T_n(\theta)\equiv\left(\tfrac{1}{\sqrt n}\sum_{i=1}^n \score{\theta}{\ey_i|\ex_i;\hat p_{n,y|x}}\Big)^\top\tilde\Sigma_{n,\theta}^{-1}\Big(\tfrac{1}{\sqrt n}\sum_{i=1}^n \score{\theta}{\ey_i|\ex_i;\hat p_{n,y|x}}\right).\label{eq:score_stat}
\end{align}
Given the score function, the test statistic in \eqref{eq:score_stat} is easy to compute even when the covariates have a continuous distribution.
The weight matrix $\tilde\Sigma_{n,\theta}$ is a consistent estimator of $\Sigma_{\theta}$ when $\Sigma_{\theta}$ is not nearly singular, and assures an asymptotically valid test when $\Sigma_{\theta}$ is nearly singular, as shown below.
Let $\hat\Sigma_{n,\theta}=\tfrac{1}{n}\sum_{i=1}^n (\score{\theta}{\ey_i|\ex_i;\hat p_{n,y|x}}-\bar{s}_{\theta}(\hat p_{n,y|x}))(\score{\theta}{\ey_i|\ex_i;\hat p_{n,y|x}}-\bar{s}_{\theta}(\hat p_{n,y|x}))^\top$ be the sample analog estimator of $\Sigma_\theta$; $\hat\Xi_{n,\theta}=\hat\Psi_{n,\theta}^{-1/2}\hat\Sigma_{n,\theta}\hat\Psi_{n,\theta}^{-1/2}$ the correlation matrix associated with $\hat\Sigma_{n,\theta}$; $\hat\Psi_{n,\theta}=\rm{diag}(\hat\Sigma_{n,\theta})$; and  $\varepsilon>0$ a regularization constant.
We recommend using the estimator proposed in \citet[p. 2808]{and:bar12}, which introduces an adjustment insuring that the weight matrix is always nonsingular and equivariant to scale changes in the score function:
\begin{align}
 \tilde\Sigma_{n,\theta}=\hat\Sigma_{n,\theta}+\max\{\varepsilon-\det(\hat\Xi_{n,\theta}),0\}\hat\Psi_{n,\theta},~~~\theta\in\Theta.\label{eq:tilde_Sigma}
 \end{align}
\begin{corollary}\label{cor:Tn_limit}
Let Assumptions \ref{ass:finite_support}, \ref{as:newey_modified1_new}, and \ref{as:newey_modified2} hold.
Then, under $\mathbb{H}_0$ in \eqref{eq:null_hp}, for any $\theta^*\in\Theta^*(\ps{0})$ such that $\min_{j=1,\dots,d_\theta}\{\rm{diag}(\Sigma_{\theta^*})\}_j>0$ and 
\begin{align}
\hat\Sigma_{n,\theta^*}\stackrel{p}{\to}\Sigma_{\theta^*},\label{ass:consistent_cov}
\end{align}
(a) If $\Sigma_{\theta^*}$ is nonsingular,
$T_n(\theta^*)\stackrel{d}{\to}\chi^2_{d_\theta}$; (b)
Both for singular and nonsingular $\Sigma_{\theta^*}$,
$\lim\sup_{n\to\infty}\Ps(T_n(\theta^*)>c_{d_\theta,\alpha})\le \alpha$,
with $c_{d_\theta,\alpha}$ the $1-\alpha$ quantile of the $\chi^2_{d_\theta}$ distribution.
\end{corollary}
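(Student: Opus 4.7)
\noindent\textbf{Proof plan for Corollary \ref{cor:Tn_limit}.} The plan is to combine Theorem \ref{lem:scores_limit} with a careful analysis of the regularized weight matrix $\tilde\Sigma_{n,\theta^*}$ defined in \eqref{eq:tilde_Sigma}. First I would establish that $\tilde\Sigma_{n,\theta^*}$ converges in probability to a positive definite limit. Since $\min_j\{\mathrm{diag}(\Sigma_{\theta^*})\}_j > 0$, the diagonal estimator satisfies $\hat\Psi_{n,\theta^*}\stackrel{p}{\to}\Psi_{\theta^*}\equiv\mathrm{diag}(\Sigma_{\theta^*})$, which is positive definite. Combined with the assumed consistency $\hat\Sigma_{n,\theta^*}\stackrel{p}{\to}\Sigma_{\theta^*}$ and the continuous mapping theorem applied to $\det(\cdot)$ and $(\cdot)^{-1/2}$, I obtain $\hat\Xi_{n,\theta^*}\stackrel{p}{\to}\Xi_{\theta^*}\equiv \Psi_{\theta^*}^{-1/2}\Sigma_{\theta^*}\Psi_{\theta^*}^{-1/2}$ and $c_n\equiv\max\{\varepsilon-\det(\hat\Xi_{n,\theta^*}),0\}\stackrel{p}{\to}c_\infty\equiv\max\{\varepsilon-\det(\Xi_{\theta^*}),0\}$, so that $\tilde\Sigma_{n,\theta^*}\stackrel{p}{\to}\Sigma^*\equiv\Sigma_{\theta^*}+c_\infty\Psi_{\theta^*}$.

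For part (a), if $\Sigma_{\theta^*}$ is nonsingular and $\varepsilon$ is chosen so that $\det(\Xi_{\theta^*})>\varepsilon$ (the recommended small value from Andrews--Barwick satisfies this for well-behaved correlation matrices), then $c_n=0$ with probability approaching one, so $\tilde\Sigma_{n,\theta^*}\stackrel{p}{\to}\Sigma_{\theta^*}$. Combining Theorem \ref{lem:scores_limit} with Slutsky's lemma yields $T_n(\theta^*)\stackrel{d}{\to} V^\top\Sigma_{\theta^*}^{-1}V$ where $V\sim N(0,\Sigma_{\theta^*})$, and this quadratic form is $\chi^2_{d_\theta}$ distributed by the standard change of variables.

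For part (b), which is the main nontrivial part, the key observation is that the limiting quadratic form admits a convenient spectral representation. Writing $\Sigma^* = \Psi_{\theta^*}^{1/2}(\Xi_{\theta^*}+c_\infty I)\Psi_{\theta^*}^{1/2}$ and letting $W\equiv\Psi_{\theta^*}^{-1/2}V\sim N(0,\Xi_{\theta^*})$, Slutsky and the continuous mapping theorem give
\begin{equation*}
T_n(\theta^*)\stackrel{d}{\to} W^\top(\Xi_{\theta^*}+c_\infty I)^{-1}W.
\end{equation*}
Diagonalizing $\Xi_{\theta^*}=Q\Lambda Q^\top$ with eigenvalues $\lambda_1\ge\dots\ge\lambda_{d_\theta}\ge 0$ and setting $\tilde W = Q^\top W$ with $\tilde W_j\sim N(0,\lambda_j)$ independent, the limit equals $\sum_{j:\lambda_j>0}\frac{\lambda_j}{\lambda_j+c_\infty}V_j$ with $V_j\equiv\tilde W_j^2/\lambda_j\stackrel{\mathrm{iid}}{\sim}\chi^2_1$. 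Each coefficient $\lambda_j/(\lambda_j+c_\infty)$ lies in $[0,1]$ (using $c_\infty\ge 0$ and $\lambda_j>0$; if $c_\infty=0$ and $\lambda_j=0$ the term is zero by convention), so the limit is stochastically dominated by a sum of at most $d_\theta$ independent $\chi^2_1$ variables, hence by $\chi^2_{d_\theta}$. This delivers $\limsup_n\Ps(T_n(\theta^*)>c_{d_\theta,\alpha})\le\alpha$.

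The main obstacle is the stochastic-domination step underlying part (b): one must handle simultaneously the possibility that $\Sigma_{\theta^*}$ is singular (so $\Sigma_{\theta^*}^{-1}$ does not exist and a naive Slutsky argument fails) and that the regularization constant $c_\infty$ does not vanish. The spectral-coordinate rewriting sketched above is what makes both difficulties transparent and reduces the problem to the elementary fact that convex combinations with weights in $[0,1]$ of at most $d_\theta$ independent $\chi^2_1$ summands are dominated by $\chi^2_{d_\theta}$. The remaining verifications (measurability, applicability of the continuous mapping theorem to $(\cdot)^{-1}$ on the event that $\tilde\Sigma_{n,\theta^*}$ is positive definite, which happens w.p.a. 1 by construction of \eqref{eq:tilde_Sigma}) are routine.
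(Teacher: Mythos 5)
Your proof is correct, and for the key part (b) it takes a genuinely different route from the paper. The paper handles the singular case via Anderson's lemma: it introduces an independent Gaussian $\nu\sim N(0,\varepsilon\Psi_{\theta^*})$, notes that $\zeta=\eta+\nu\sim N(0,\tilde\Sigma_{\theta^*})$ with $\tilde\Sigma_{\theta^*}=\Sigma_{\theta^*}+\varepsilon\Psi_{\theta^*}$, so that $\zeta^\top\tilde\Sigma_{\theta^*}^{-1}\zeta\sim\chi^2_{d_\theta}$ exactly, and then applies Anderson's lemma to the convex symmetric acceptance region $K=\{x:x^\top\tilde\Sigma_{\theta^*}^{-1}x\le c_{d_\theta,\alpha}\}$ to conclude $\Ps(\eta\in K)\ge\Ps(\zeta\in K)=1-\alpha$. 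You instead diagonalize: the limit $W^\top(\Xi_{\theta^*}+c_\infty I)^{-1}W=\sum_{j:\lambda_j>0}\tfrac{\lambda_j}{\lambda_j+c_\infty}\,\chi^2_{1,j}$ is an almost-surely dominated weighted sum of independent $\chi^2_1$ variables with weights in $[0,1]$, hence stochastically dominated by $\chi^2_{d_\theta}$. Both arguments are valid; the paper's is shorter granted Anderson's lemma, while yours is elementary and self-contained. Your unified treatment with $c_\infty=\max\{\varepsilon-\det(\Xi_{\theta^*}),0\}$ also buys something the paper's case split does not make explicit: when $\Sigma_{\theta^*}$ is nonsingular but $0<\det(\Xi_{\theta^*})<\varepsilon$, the regularization stays active in the limit, $T_n$ does \emph{not} converge to $\chi^2_{d_\theta}$, and part (a) as literally stated needs $\det(\Xi_{\theta^*})\ge\varepsilon$ (a caveat you correctly flag); your part (b) bound still covers this intermediate case, whereas the paper's proof delegates every nonsingular $\Sigma_{\theta^*}$ to the part (a) argument. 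One fine point: your parenthetical ``if $c_\infty=0$ and $\lambda_j=0$ the term is zero by convention'' is neither needed nor the right justification---that case is vacuous, since $\lambda_j=0$ for some $j$ implies $\det(\Xi_{\theta^*})=0$ and hence $c_\infty=\varepsilon>0$; this observation is also what guarantees that $(\Xi_{\theta^*}+c_\infty I)^{-1}$ exists in the singular case, and it would be worth stating explicitly.
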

\textbf{Example 1} (Continued).\label{rem:lower_rank}
    The score function for Example 1 on p.~\pageref{example:entry_uniform} implies that $\Theta^*(\ps{\gamma})$ is characterized by a single equality restriction, and hence the rank of $\Sigma_{\theta^*}$ in this case is lower than $d_\theta$ and the critical value in Corollary~\ref{cor:Tn_limit} is asymptotically conservative.\hfill$\square$\smallskip
    
Corollary \ref{cor:Tn_limit} requires, in \eqref{ass:consistent_cov}, that the population covariance matrix can be consistently estimated.
In the semiparametric literature with point identification, this is a standard requirement;\footnote{\label{fn:covariance}In the semiparametric literature, \eqref{ass:consistent_cov} is imposed for $\hat\Sigma_{n,\hat{\theta}_n}$, with $\hat{\theta}_n$ a consistent estimator of a singleton $\theta^*$; in our use of it, $\hat\Sigma_{n,\theta^*},\Sigma_{\theta^*}$ are both evaluated at the same $\theta^*$, but the requirement applies to all $\theta^*\in\Theta^*(\ps{0})$.} in the moment inequalities literature with partial identification, it is also common to assume that the covariance matrix of the moment functions can be consistently estimated for all $\theta$ in the identified set.
The result in Corollary \ref{cor:Tn_limit} is valuable because it implies that no simulations are needed to compute the quantiles of the limiting distribution, and that the critical values used to test the hypothesis in \eqref{eq:null_hp} and to construct the confidence set via test inversion are constant across candidates $\theta\in\Theta$.
This is in contrast with much of the related literature, where the asymptotic distribution of the test statistic is nonpivotal and the critical values need to be recomputed for each $\theta$.\footnote{Nonpivotal asymptotic distributions appear, e.g., in \citet{and:kwo22,and:shi13,kai:mol:sto19}, and \citet{bug:can:shi17}, while \citet{Chen_2018}'s test statistic converges to $\chi^2$ distribution.}
One can construct a confidence region that covers each point in $\Theta^*(\ps{0})$ with asymptotic probability $1-\alpha$ as
\begin{align}
CS_n=\big\{\theta\in\Theta:T_n(\theta)\le c_{d_\theta,\alpha}\big\}.\label{eq:CSn}
\end{align}
In practice, $CS_n$ is computed by specifying a grid of values $\Theta_n$ through which to explore the parameter space, and letting $CS_n=\{\theta\in\Theta_n:T_n(\theta)\le c_{d_\theta,\alpha}\}$. 

We next show that $CS_n$ is an asymptotically \emph{uniformly valid} confidence set.
We posit that $\Ps_0$, the distribution of the observed data, belongs to a class of distributions denoted by $\cP$, where the conditional law $P(\cdot|x)$ for each $P\in\cP$ is absolutely continuous with respect to $\mu$ on $\cY$.
We let $\ps{y|x}$ denote the Radon-Nykodim derivative of $P(\cdot|x)$. 
We write stochastic order relations that hold uniformly over $P \in \cP$ using the notations $o_{\mathcal P}$ and $O_{\mathcal P}$.
\begin{theorem}\label{thm:uniform_coverage}
For constants $c>0$ and all $P\in\cP$, let Assumptions \ref{ass:finite_support}, \ref{as:newey_modified1_new}, and \ref{as:newey_modified2} hold, with the following conditions replacing the corresponding ones in the original assumptions:\footnote{The constants $c$ may differ across appearances but do not depend on $P$; $\mathbb{N}$ denotes the natural numbers; and $\mathbb{B}_c(\theta)$ denotes a ball of radius $c$ centered at $\theta$.}
\begin{enumerate}
\item[1(c)']\label{ass:interior:uniform} $\Theta^*(\ps{})\subset \interior \Theta^{-c}\equiv\{\theta \in \Theta:\mathbb{B}_c(\theta)\subset\Theta\}$. 
\item[1(e)']\label{ass:support:unif} $\cA_\G(x)=\supp(\G(\cdot|x;\theta))\equiv \{A\subseteq \cY:F_\theta(G(\eu|x;\theta)=A)>c\}$ for all $\theta\in\Theta$, $P-a.s.$
\item[2']\label{ass:linearization:uniform} The constant $c$ is the same for all $P\in\cP$. 
\item[3(ii)']\label{ass:conv:phat:uniform} For all $\epsilon>0$ there exists $N\in\mathbb{N}$, with $\epsilon$ and $N$ not dependent on $P\in\cP$, such that $P(\hat p_{n,y|x}\in\mathcal H)\ge1-\epsilon,~\forall n\ge N$, and $\|\hat p_{n,y|x}-\ps{0,y|x}\|_{\mathcal H}=o_\cP(n^{-1/4})$.
\item[3(iv)'] For all sequences of positive numbers $\{\delta_n\}$ with $\delta_n=o(1)$,
\begin{align*}
		\sup_{\theta^*\in\Theta^*(\ps{0})}\sup_{\|\ps{y|x}-\ps{0,y|x}\|_{\mathcal H}\le \delta_n}\Big\|\mathbb G_{n,\theta^*}(\ps{y|x})-\mathbb G_{n,\theta^*}(\ps{0,y|x})\Big\|=o_\cP(1).
	\end{align*}	
\end{enumerate}
Suppose that for all $\Ps\in\cP$ and $\theta^*\in\Theta^*(\ps{0})$, $\min_{j=1,\dots,d_\theta}\{\rm{diag}(\Sigma_{\theta^*})\}_j>0$, and $\Vert\hat\Sigma_{n,\theta^*}-\Sigma_{\theta^*}\Vert=o_\cP(1)$.
Then, for $CS_n$ in \eqref{eq:CSn}, we have 
$$\liminf_{n\to\infty}\inf_{P\in\cP}\inf_{\theta^*\in\Theta^*(\ps{})}P(\theta^*\in CS_n)\ge 1-\alpha.$$
\end{theorem}
Under the assumptions of Theorem \ref{thm:uniform_coverage}, Corollary \ref{cor:Tn_limit} also applies uniformly over $P\in\cP$.
\begin{remark}\label{rem:conf_set}
    Our confidence set can be interpreted as the union over $\theta^*\in\Theta^*$ of ellipsoids centered at $\theta^*$ whose volume is determined by $\tilde{\Sigma}_{n,\theta^*}$ only, and not by the extent of misspecification.
    The lever through which the latter might impact the volume of $CS_n$, is through $\Theta^*$'s volume, which may or may not be impacted by misspecification (see Section~\ref{subsec:geometry_Theta_star}).
    Relative to $\Theta^*$'s volume, $CS_n$'s volume depends only on the sampling variability of the score statistic, and as $\Theta^*$ is always non-empty, $CS_n$ is always non-empty as well.
\end{remark}

\subsection{Role of Sharp Identifying Restrictions}\label{rem:CDC}
Our analysis is able to exploit, through \eqref{eq:core}, all identifying restrictions associated with the economic model.
Doing so is in part motivated by our desire to avoid the risk of potentially discordant conclusions driven by model misspecification and different choices of subsets of moment inequalities on which to base inference \citep[an issue well articulated in][]{ked:li:mou21}.
And in part because it allows us to provide a general proof, under Assumption~\ref{ass:finite_support}-(e)(f), of Lemma~\ref{lem:licq}.
This lemma establishes the existence of a representation of $\crit(\theta|x)$, for all $\theta\in\Theta$, characterized by $\qs{\theta,y|x}^*$ and unique Lagrange multipliers.
Importantly, one does not need to calculate that representation, because $\crit(\theta|x)$ coincides with it pointwise (see Lemma~\ref{lem:licq}-(ii)).
Yet, $\crit(\theta|x)$ in \eqref{eq:expected_log_likelihood} inherits its differentiability properties.\footnote{The presence of redundant (in)equality constraints in the original formulation of $L(\theta|x)$ does not affect differentiability, as there exists an equivalent formulation without such redundancies with unique Lagrange multipliers.}

In practice, when the number of inequalities in \eqref{eq:core} is relatively small, users of our method can derive the score in closed form.
When it is moderate (a few hundred), the numerical score can be computed directly based on \eqref{eq:core} or many redundant inequalities can be quickly eliminated without resorting to specialized methods.\footnote{See the \href{https://github.com/hkaido0718/IncompleteDiscreteChoice}{Python library} created by Hiroaki Kaido to carry out this task for discrete choice models, and the computational simplifications in \citet{bon:kum20} for a certain class of multi-players entry models.}
When the number of inequalities in \eqref{eq:core} is substantially larger, one can use Algorithm 3 in \citet{luo:pon:wan25} to eliminate all redundant ones.
This algorithm remains practically feasible, with reported computational times of seconds for a \textrm{Julia} implementation that removes all redundant inequalities in entry game examples where \eqref{eq:core} includes $10^{14}$ inequalities, and dynamic discrete choice examples with $10^{154}$ inequalities \citep[][Tables 1 and 3]{luo:pon:wan25}.

Yet, when the number of inequalities is prohibitive, one may need to carry out inference based on a subset of inequalities.\footnote{Inference methods that rely on discretization of $\cX$ and aim to use all information in the model may face an additional computational bottleneck, as those methods have a final number of inequalities given at least by the number of nonredundant inequalities in \eqref{eq:core} multiplied by the cardinality of the discretization of $\cX$.}
Then, our procedure remains valid provided the conclusions of Lemma~\ref{lem:licq} continue to hold.
Hence, among the criteria guiding the selection of inequalities, one might include ensuring this is the case.
We note that a rich literature studies how to remove redundant constraints in linear programs.
For a given chosen subset of inequalities in \eqref{eq:core}, which by construction defines a set linear in $q$ that includes $\fq{\theta}$, as long as the identity of the non-redundant inequalities does not change with $\theta$ and those binding at $\qs{\theta,y|x}^*$ are irredundant in the sense of Definition~\ref{def:irredundant}, the conclusions of Lemma~\ref{lem:licq} continue to hold (for the new linear program with enlarged constrained set) and our method remains valid.
In this case, one obtains a different pseudo-true set than the one yielded by sharp restrictions, with weakly lower minimal KL divergence, and our confidence set then covers the elements of this different pseudo-true set with a prespecified asymptotic probability. 
Without the conclusions of Lemma~\ref{lem:licq}, existence of the score is no longer guaranteed, rendering a score-based approach inapplicable. 
A likelihood-based approach may nonetheless be possible, but is beyond the scope of this paper.
\vspace{.2cm}

\noindent\textbf{Example 1} (Non-sharp). Consider the two player entry game on p.~\pageref{example:CT} with $\fq{\theta}$ in \eqref{eq:frak_q_CT} and $\delta_1<0,\delta_2<0$. Instead of using sharp identifying restrictions, replace $\fq{\theta}$ with $\big\{\qs{y|x}\in\Delta:~\qs{y|x}(y|x)\ge\f(S_{\{y\}|x;\theta}),~y\in\cY,x\in\cX\big\}$. Then $\crit(\theta|x)$ in \eqref{eq:expected_log_likelihood} can be represented using a set of constraints of the form $A\qs{y|x}\ge b(\theta)$, with $A$ a $4\times4$ matrix with three rows equal to standard basis vectors and one a vector of $1$. Hence, Lemma~\ref{lem:licq} holds.
\hfill $\square$

\section{Computation of the Score Function}
\label{sec:computation}
Sometimes it is possible to obtain a closed-form expression for $\score{\theta}{y|x;\ps{0,y|x}}$ as gradient of $\ln \qs{\theta,y|x}^*$ with respect to $\theta$, as in Example \ref{example:CT} (p.~\pageref{eq:eg_score1}).
If $\qs{\theta,y|x}^*$ does not have a closed form expression, one needs to compute the score numerically. 
Here we describe how to do so, adapting the method in \citet{for23}.
We omit the dependence of $\score{\theta}{y|x}$ on $\ps{0,y|x}$ or its estimator.
We presume that one can compute $\qs{\theta,y|x}^*$ relatively easily (e.g., using \texttt{cvxpy}).

Consider a smoothed version $f_\varsigma$ of  $f(\theta)\equiv\ln \qs{\theta,y|x}^*$, defined by the convolution:
\begin{align*}
f_\varsigma(\theta)=\int f(\theta+\varsigma z)\phi(z)dz,	
\end{align*}
where $\phi$ is a smooth kernel decaying to 0 in the tails, such as the Gaussian density function. 
The derivative of $f_\varsigma$ exists. 
If it admits integration by parts, one has:
{\small
\begin{algorithm}[t]
\small
        \caption{Construct $CS_n$}\label{alg:procedure_tuning}
        \textbf{Data:} $W^n=(\ey_i,\ex_i),i=1,\dots,n$
        \begin{algorithmic} 
        \REQUIRE $\varepsilon> 0$, $K_n>0$, $\varsigma>0$ {\small[if no closed form for score function, else $\varsigma=0$], $\Theta_n$}
        \STATE $\hat p_{n,y|x}$ $\gets$ \textsc{CCP}($W^n$;$K_n$) \small[$K_n$ the tuning parameter for $\hat p_{n,y|x}$]
        \STATE $\bar{s}_{\theta;\varsigma}(\ey_i|\ex_i;\hat p_{n,y|x})$  $\gets$ \textsc{Score}($W^n$;$\hat p_{n,y|x}$;$\theta$;$\varsigma$)~~{\small[\eqref{eq:approx_average_score} if no closed form, else \eqref{eq:average_score}]}
        \STATE $c_{d_\theta,\alpha}$$\gets$ $\inf\{x:P(\chi^2_{d_\theta}\ge x)\le \alpha\}$
        \STATE $CS_n$ $\gets$ $\emptyset$
        \FOR{$\theta\in \Theta_n$}
        \STATE $\tilde \Sigma_{n,\theta}$ $\gets$ \textsc{RegCovMat}($W^n$,$\theta$;$\varepsilon$)~~\small[\eqref{eq:tilde_Sigma}]
        \STATE $T_n(\theta)$ $\gets$ $\left(\tfrac{1}{\sqrt n}\sum_{i=1}^n \score{\theta}{\ey_i|\ex_i;\hat p_{n,y|x}}\Big)^\top\tilde\Sigma_{n,\theta}^{-1}\Big(\tfrac{1}{\sqrt n}\sum_{i=1}^n \score{\theta}{\ey_i|\ex_i;\hat p_{n,y|x}}\right)$
        \IF{$T_n(\theta)\le c_{d_\theta,\alpha}$} \STATE Add $\theta$ to $CS_n$ \ENDIF
        \ENDFOR
        \RETURN $CS_n$
        \end{algorithmic}
    \end{algorithm}}
\begin{align*}
	\tfrac{\partial}{\partial\theta}f_\varsigma(\theta)&=-\tfrac{1}{\varsigma}\int f(\theta+\varsigma z)\tfrac{\partial}{\partial z} \phi(z)dz=-\tfrac{1}{\varsigma}\E\Big[f(\theta+\varsigma Z)\tfrac{\tfrac{\partial}{\partial z} \phi(Z)}{\phi(Z)}\Big],
\end{align*}
with the last expectation taken with respect to $Z\sim N(0, I_d)$. 
One can then approximate the derivative of $f$ by that of $f_\varsigma$.
Letting $Z_r,r=1,\dots,R$ be i.i.d. draws from $N(0,I_d)$, and noting that $\big(\tfrac{\partial}{\partial z} \phi(Z)\big)/\phi(Z)=\nabla \ln \phi(z)=-z$, an unbiased estimator for $\tfrac{\partial}{\partial\theta}f_\varsigma(\theta)$ is
\begin{align}
	\tfrac{1}{\varsigma R}\sum\nolimits_{r=1}^R [f(\theta+\varsigma Z_r)-f(\theta)]Z_r.\label{eq:estimate_grad_ftau}
\end{align}
Replacing $f$ with $f_\varsigma$ introduces a bias proportional to $\varsigma$; when $f(\theta)$ is evaluated with noise, the variance of $[f(\theta+\varsigma Z_r)-f(\theta)]Z_r/\varsigma$ grows with $1/\varsigma^2$. 
In practice one needs to take a stand on this bias-variance trade off.
We analyze how to do so in research-in-progress (available from the authors upon request), where numerical experiments lead us to recommend setting $\varsigma = c(nR)^{-1/4}$ for $c \in [0.03, 0.12]$.
The Monte Carlo approximation in \eqref{eq:estimate_grad_ftau} inflates the variance by a factor of $(1+\tfrac{\bar{c}}{R})$, for some constant $\bar{c}>0$, as in the method of simulated moments.
This factor can easily be incorporated in the estimator of the asymptotic variance of the score.
Letting $f(\theta;\ey_i,\ex_i)=\ln \qs{\theta,y|x}^*(\ey_i,\ex_i)$, one can obtain the estimator in \eqref{eq:estimate_grad_ftau} for each value of $(\ey_i,\ex_i)$.  
The average score can then be approximated by
\begin{align}
	\bar{s}_{\theta;\varsigma}(\hat p_{n,y|x})=\tfrac{1}{n\varsigma R}\sum\nolimits_{i=1}^n\sum\nolimits_{r=1}^R [f(\theta+\varsigma Z_{i,r};\ey_i,\ex_i)-f(\theta;\ey_i,\ex_i)]Z_{i,r}.\label{eq:approx_average_score}
\end{align}
Algorithm~\ref{alg:procedure_tuning} presents pseudo-code with the steps and tuning parameters required to build the confidence set in \eqref{eq:CSn}. When a closed form expression for $\score{\theta}{\ey_i|\ex_i;\hat p_{n,y|x}}$ is available, one can set $\varsigma=0$ and plug the observed values $\{\ey_i,\ex_i\}_{i=1}^n$ in \eqref{eq:average_score}; when it is not available, one needs to choose $\varsigma>0$ and plug $\{\ey_i,\ex_i\}_{i=1}^n$ in \eqref{eq:approx_average_score}.

\section{Empirical Illustration}
\label{sec:empirical}
We illustrate the usefulness of our method by applying it to answer the question addressed in \citet[Section 8]{kli:tam16}: ``what explains the decision of an airline to provide service between two airports.'' 
\citeauthor{kli:tam16} analyze data for the second quarter of the year 2010, documenting the entry decisions of two types of airline companies: Low Cost Carriers ($LCC$) versus Other Airlines ($OA$).\footnote{We use their data, downloading it from Quantitative Economics' \href{https://www.econometricsociety.org/publications/quantitative-economics/2016/07/01/Bayesian-inference-in-a-class-of-partially-identified-models}{online repository}.}  
They define a market as a trip between two airports, irrespective of intermediate stops.  
They record the entry decision $\ey_{j,m}$ of player $j\in\{LCC,OA\}$ in market $m$ as a $1$ if a firm of type $j$ serves market $m$, and $0$ otherwise.  
They posit that player $j$'s decision to serve a market depends not only on their opponent's entry decision, but also on observable payoff shifters $\ex_{j,m}$ and unobservable payoff shifters $\eu_{j,m}$.
The observable payoff shifters $\ex_{j,m}$ include the constant and two continuously distributed variables, $\ex_{j,m}^{pres}$ and $\ex_m^{size}$. 
The first variable is firm-and-market-specific: it measures the market presence of firms of type $j$ in market $m$ \citep[see][p. 356 for its exact definition]{kli:tam16}. 
Market presence of the LCC airline, $\ex_{LCC,m}^{pres}$ (respectively, $\ex_{OA,m}^{pres}$), is excluded from the payoff of firm $OA$ (respectively, $LCC$).
The second variable, market size, enters the payoff of firms of both types; it measures population size at the two endpoints of the trip and is market-specific. 
The unobservables $\eu_{j,m}$, $j\in\{LCC,OA\}$, are assumed to have a bivariate normal distribution with $\E(\eu_{j,m})=0$, $Var(\eu_{j,m})=1$, $Corr(\eu_{LCC,m},\eu_{OA,m})=r$, and to be i.i.d. across $m$.\footnote{We assume $r\in [-0.9,0.9]$ and estimate it as part of the vector $\theta$. We ensure that the strategic interaction parameters $\delta_{LCC}$ and $\delta_{OA}$ are less than a constant $c<0$ and that $\qs{\theta^*,y|x}>c$ for another constant $c>0$.}\newline
\indent Both \citet{kli:tam16} and we assume that players enter the market if doing so yields non-negative payoffs.
However, we posit different payoff functions.
They posit:
\begin{multline}
    \tilde\pi_{j,m}=\ey_{j,m}(\tilde\beta_j^0+\tilde\beta_j^{pres}\one(\ex_{j,m}^{pres}\ge Med(\ex_j^{pres}))\\
    +\tilde\beta_j^{size}\one(\ex_m^{size}\ge Med(\ex^{size}))+\tilde\delta_j \ey_{-j,m}+\eu_{j,m}).\label{eq:payoff_KT_CCT}
\end{multline}
In words, \citeauthor{kli:tam16} transform each of market size and of the two market presence variables into binary variables, based on whether each of these variables realizes above or below their respective median.
Doing so yields a finite number of unconditional moment inequalities, which they need for their inference procedure, at the cost of foregoing the information provided in the variation in $\ex_{j,m}$ past whether each variable is above or below its median, and of using an arguably more restrictive payoff function.

Leveraging our new method, we are able to avoid discretizing the continuously distributed covariates, thereby exploiting all identifying power in their variation and allowing $\ex_{j,m}$ to impact payoffs proportionally to their value.
We assume that payoffs take the form:
\begin{align}
    \pi_{j,m}=\ey_{j,m}(\beta_j^0+\beta_j^{pres}\ex_{j,m}^{pres}+\beta_j^{size}\ex_m^{size}+\delta_j \ey_{-j,m}+\eu_{j,m}).\label{eq:payoff_KM}
\end{align}

We study how the decision of an $LCC$ airline to enter the market is affected by whether an $OA$ airline is in the market and by the extent of $LCC$ airlines market presence.
To do so, we define the potential entry decision of an $LCC$ player as $\ey_{LCC}(d)=\one(\ex_{LCC,m}^\top\beta_{LCC}+\delta_{LCC} d+\eu_{LCC,m}\ge 0)$, with $\beta_j=(\beta_j^0,\beta_j^{pres},\beta_j^{size})$, $j\in\{LCC,OA\}$.
This is the entry outcome of an $LCC$ airline when we fix the $OA$’s entry to take value $d\in\{0,1\}$.
Based on our model, the entry probability of the $LCC$ airline is
$P(\ey_{LCC}(d)=1|\ex_{LCC,m})=\Phi(\ex_{LCC,m}^\top\beta_{LCC}+\delta_{LCC} d)$.
We obtain a confidence interval for this parameter for each of $d=0,1$ and for specific values of $\ex_{LCC,m}$.
To ease the reporting of results, we set $\ex_m^{size}$ equal to the median of its distribution and compute the $\tau$-quantile of the distribution of $\ex_{LCC,m}^{pres}$ for $\tau\in\mathcal{T}\equiv\{0.125,0.250,0.375,0.5,0.625,0.750,0.875\}$; we evaluate our parameter of interest for $\ex_{LCC,m}^{pres}$ set equal to each of these values, but our confidence set construction uses all information in $\ex_{j,m}$.
Letting $\theta=(\beta_{LCC},\delta_{LCC},\beta_{OA},\delta_{OA},r)$, we report confidence intervals $CI_n(x,d)=\big[\min_{\theta:~T_n(\theta)\le c_{d_\theta,\alpha}} \Phi(x_{LCC,m}^\top\beta_{LCC}+\delta_{LCC} d)$, $\max_{\theta:~T_n(\theta)\le c_{d_\theta,\alpha}} \Phi(x_{LCC,m}^\top\beta_{LCC}+\delta_{LCC} d)\big]$ for the values of $x$ corresponding to $\tau\in\mathcal{T}$ and for $d\in\{0,1\}$.\footnote{As in the Monte Carlo experiments in Section~\ref{sec:monte_carlo}, we estimate $\ps{0,y|x}$ using a series estimator with $J$-th order (tensor product) B-spline basis functions and set $\varepsilon= 0.05$ in \eqref{eq:tilde_Sigma} to compute $\tilde\Sigma_{n,\theta}$.}
Under the conditions of Theorem~\ref{thm:uniform_coverage}, by standard arguments
$$\liminf_{n\to\infty}\inf_{P\in\cP}\inf_{\theta^*\in\Theta^*(\ps{})}P(\Phi(x_{LCC,m}^\top\beta^*_{LCC}+\delta^*_{LCC} d)\in CI_n)\ge 1-\alpha.$$ 
\begin{figure}
    \centering
    \includegraphics[scale=0.5]{./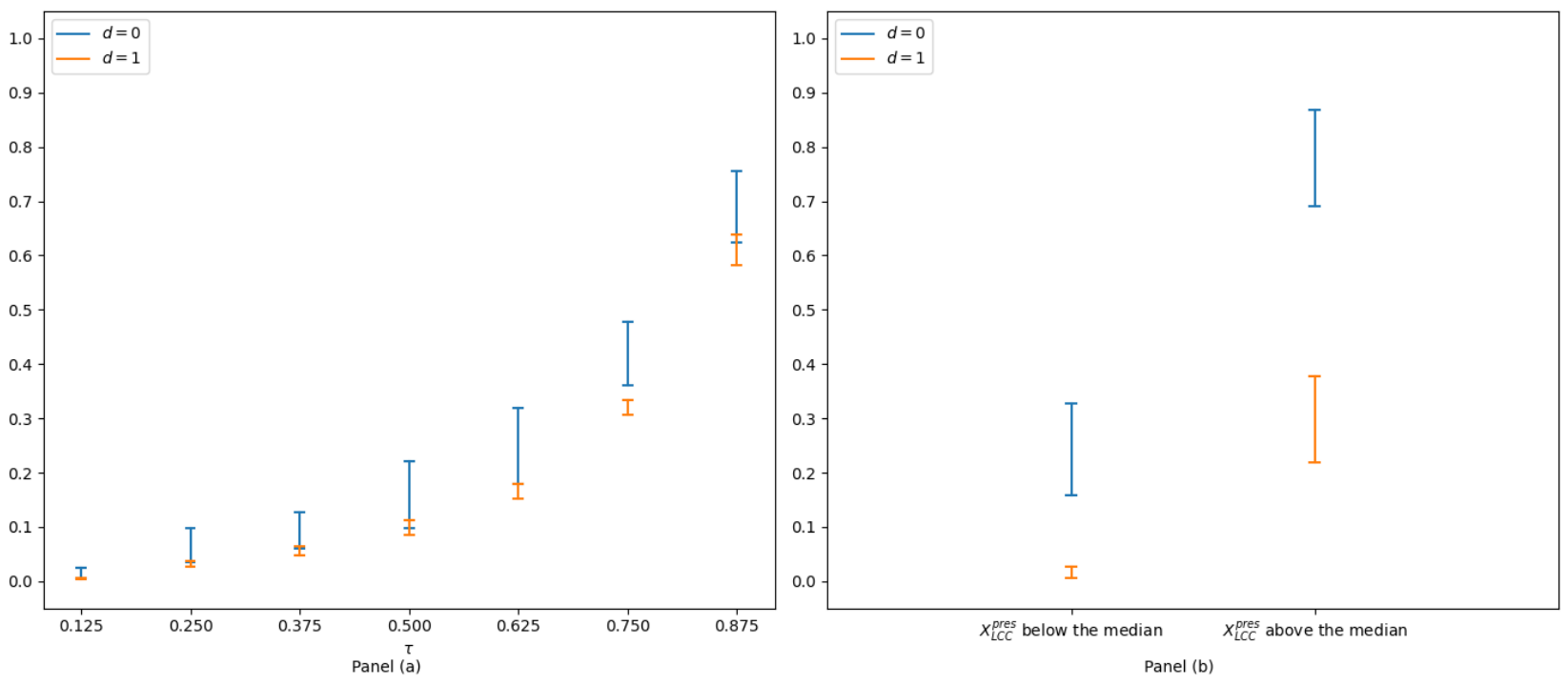}
    \caption{Confidence Intervals for $\Phi(x_{LCC,m}^\top\beta^*_{LCC}+\delta^*_{LCC} d)$ for $d=1$ (orange) and $d=0$ (blue). Panel (a): Rao score test-based inference with $\ex_{LCC,m}^{pres}$ set equal to the $\tau$ quantile of its distribution and $\ex_m^{size}$ set equal to its median. Panel (b): \citet{Chen_2018} projection-based inference with $\one(\ex_{j,m}^{pres}> Med(\ex_j^{pres}))$.} 
    \label{fig:counterfactual}
\end{figure}
Figure \ref{fig:counterfactual}-Panel (a) reports our results, displaying on the horizontal axis the value of $\tau$ and on the vertical axis the candidate value for $\Phi(x_{LCC,m}^\top\beta^*_{LCC}+\delta^*_{LCC} d)$.
The results show potentially sizable heterogeneity in the treatment effects of interest and reject the hypothesis that they are constant across $\tau$. While the confidence intervals for $d=0,1$ are not disjoint for several values of $\tau$, when $OA$ opponents are not in the market (blue segments in Figure \ref{fig:counterfactual} for $d=0$) the entry probability can be much larger than when they are present (orange segments for $d=1$) across all values of $\tau$, with the effect largest for $\tau= 0.750$.

For each fixed value of the entry decision of $OA$, as the market presence of the $LCC$ airlines increases, so does the probability that $LCC$ firms enter a market. 
When OA firms are in the market, the difference is statistically significant for all $\tau$, although the impact of $x_{LCC,m}^{pres}$ on the entry probability is low until market presence reaches its 0.625 quantile, at which point the slope increases rapidly.
This suggests that in order to overcome the presence of $OA$ opponents and enter the market, $LCC$ firms need large market presence.
On the other hand, when $OA$ firms are not in the market, the impact of $x_{LCC,m}^{pres}$ on the entry probability is sizable starting with $\tau=0.375$ (and further increases with $\tau$).

We compare our results to what one would obtain using the likelihood based inference method in \citet{Chen_2018}, which is designed for correctly specified models with discrete covariates.
We note that \citeauthor{Chen_2018} assume the payoff function in \eqref{eq:payoff_KT_CCT}, whereas we use the specification in \eqref{eq:payoff_KM}, and therefore the coefficient estimates on $\ex_{j,m}$ and $\ey_{-j}$ are not directly comparable to each other.
Nonetheless, we believe it to be instructive to compare the counterfactual model-implied entry probabilities across the two approaches.\footnote{We use the replication package provided by \citet{Chen_2018} at Econometrica's \href{https://www.econometricsociety.org/publications/econometrica/browse/2018/11/01/monte-carlo-confidence-sets-identified-sets}{online repository}, where the payoffs are specified as $\tilde\pi_{j,m}=\ey_{j,m}(\tilde\beta_j^0+\tilde\beta_j^{pres}\one(\ex_{j,m}^{pres}> Med(\ex_j^{pres}))+\tilde\beta_j^{size}\one(\ex_m^{size}> Med(\ex^{size}))+\tilde\delta_j \ey_{-j,m}+\eu_{j,m})$ (compare with \ref{eq:payoff_KT_CCT}).
We compute the confidence intervals using the payoffs in their code and obtain a confidence interval on $P(\ey_{LCC}(d)=1|\ex_{LCC,m})$ through the projection method that they propose.} 
Figure \ref{fig:counterfactual}-Panel (b) reports confidence intervals based on \citet{Chen_2018}'s projection method for $d=0,1$ and for $\ex_{LCC,m}^{pres}$ below the median and above the median.
The figure shows that aggregating the value of $\ex_{LCC,m}^{pres}$ at this coarse level hides interesting patterns in the results.
Bundling ``above the median'' and ``below the median'' as single values for the covariates does not allow one to learn the extent of the heterogeneity in the effect of market presence on the probability of entry.
Moreover, using \citet{Chen_2018}'s inference method yields very large treatment effects for the presence of an OA opponent, which instead we do not find.
Of note, our confidence sets are shorter than those based on \citet{Chen_2018} for all values of $\tau$, often substantially, and hence this difference is not driven by a reduction in precision but likely by a different model and our use of all information in the covariates.
While one could use a finer discretization of $(\ex_{LCC,m}^{pres},\ex_{OA,m}^{pres})$ in \eqref{eq:payoff_KT_CCT} combined with \citet{Chen_2018}'s method, doing so would result in a substantially harder computational problem.
Indeed, in \citeauthor{Chen_2018}'s approach the selection probabilities, which are allowed to depend on $\ex$ (but not $\eu$) and have cardinality at least equal to the cardinality of $\cX$, are part of the parameters to be estimated. 
In contrast, the computational complexity of our procedure does not change with the cardinality of $\cX$.

\section{Monte Carlo Experiments}
\label{sec:monte_carlo}
We carry out an empirical Monte Carlo exercise where the data-generating process is calibrated to the data we use for the application in Section~\ref{sec:empirical}; the notation is as in that section and the payoffs are as in \eqref{eq:payoff_KM}.
We normalize each covariate to the unit interval (and continue to denote them $\{\ex_{LCC}^{pres},\ex_{OA}^{pres},\ex^{size}\}$) and assume $(\eu_{LCC,m},\eu_{OA,m})$ is distributed i.i.d.~bivariate standard normal. 
We let $\theta=(\beta_{LCC},\delta_{LCC},\beta_{OA},\delta_{OA})$.

To calibrate a DGP value for the parameter vector $\theta$ from \citet{kli:tam16}'s data, we introduce a parameter $\kappa\in[0,1]$, independent of $(\ex,\eu)$, representing the probability of selecting outcome $Y=(1,0)$ when multiple equilibria are present. Define
\begin{align}
\qs{(\theta,\kappa),y|x}((0,0)|x)&=	[1-\Phi(x_{LCC}^\top\beta_{LCC})][1-\Phi(x_{OA}^\top\beta_{OA})] \label{eq:mc_qtheta1}
\\
\qs{(\theta,\kappa),y|x}((1,1)|x)&=\Phi(x_{LCC}^\top\beta_{LCC}+\delta_{LCC}) \Phi(x_{OA}^\top\beta_{OA}+\delta_{OA}), \\
\qs{(\theta,\kappa),y|x}((0,1)|x)&=\eta_1(\theta;x)-q_{(\theta,\kappa),y|x}((1,0)|x)\\
\qs{(\theta,\kappa),y|x}((1,0)|x)&=\eta_3(\theta;x)+\kappa(\eta_2(\theta;x)-\eta_3(\theta;x))\label{eq:mc_qtheta4}
\end{align}
where the functions $\eta_1(\cdot;x),\eta_2(\cdot;x),\eta_3(\cdot;x)$ are defined in \eqref{eq:eta1}, \eqref{eq:eta2}, \eqref{eq:eta3}.
We report in Table \ref{tab:mle} the value for $(\theta,\kappa)$ estimated by maximizing the likelihood based on \eqref{eq:mc_qtheta1}-\eqref{eq:mc_qtheta4}.
We use this estimate as baseline DGP value in our simulations.\footnote{To ensure that the DGP induces conditional choice probabilities that are bounded away from 0 and 1, we estimate $(\theta,\kappa)$ using observations such that the estimated conditional choice probabilities are in the interval $(\epsilon,1-\epsilon)$ with $\epsilon=1e-3$. This gives us a sample of size 7,017.}
\begin{table}[t]
\centering
\caption{Maximum Likelihood Estimate of $(\theta,\kappa)$}
\begin{tabular}{ccccccccc}
\hline
$\beta^0_{LCC}$ & $\beta^{pres}_{LCC}$ & $\beta^{size}_{LCC}$ & $\delta_{LCC}$ & $\beta^0_{OA}$ & $\beta^{pres}_{OA}$ & $\beta^{size}_{OA}$ & $\delta_{OA}$ & $\kappa$\\
\hline
-0.367 & 2.044 & -0.066 & -0.085 & 0.282 & 1.774 & 0.251 & -0.226 & 0.000 \\
\hline
\end{tabular}
\label{tab:mle}
\end{table}

We consider two designs for our DGPs. 
Design 1 uses only the two player-specific covariates, $\mathbf{X}^{D1}=(X^{pres}_{LCC},X^{pres}_{OA})$, and sets $(\theta_0,\kappa_0) = (\beta^0_{LCC}, \beta^{pres}_{LCC}, \delta_{LCC}, \beta^0_{OA}, \beta^{pres}_{OA}, \delta_{OA}, \kappa)$ to the corresponding maximum likelihood estimates in Table~\ref{tab:mle}. 
Design 2 incorporates the full set of covariates, $\mathbf{X}^{D2}=(X^{pres}_{LCC},X^{pres}_{OA},X^{size})$ and sets $(\theta_0,\kappa_0)$ to the full MLE vector in Table~\ref{tab:mle}.
As we further explain below, we implement Design 1 because of computational difficulties with Design 2 for the comparator inference method of \citet{and:shi13}.

Within each design $k=1,2$, we resample from the original \citet{kli:tam16}'s dataset covariates $\mathbf{X}^{Dk}_m,m=1,\dots,n$.
We evaluate the performance of our inference method \citep[and the][comparator method]{and:shi13} when the model is correctly specified, by generating $(\ey_{LCC,m},\ey_{OA,m})$, $m=1,\dots,n$ through inverse CDF transformation, using as DGP the probability mass functions in \eqref{eq:mc_qtheta1}-\eqref{eq:mc_qtheta4} in which we plug $\mathbf{X}^{Dk}_m$ and, for $(\theta,\kappa)$, the MLE values in Table~\ref{tab:mle} for that design.

To evaluate performance under model misspecification, we simulate data from a two-player entry game where the true payoff for player $j$ in Design $k$ is given by:
\begin{align*}
\pi_j^{Dk}=\ey_j^{Dk}(\ex_{j}^{Dk\top}\beta_j^{Dk}+(\delta_j+\gamma \ex^*)\ey_{-j}^{Dk}+\eu_j),~\ey_j^{Dk}\in\{0,1\},~j\in\{LCC,OA\},
\end{align*}
with $\ex^*$ a binary variable omitted from the model, $\ex_{j}^{D1}=[1~~X^{pres}_{j}]^\top$, and $\ex_j^{D2}=[1~~X^{pres}_{j}~~X^{size}]^\top$. Given $(\ex^{pres}_{LCC},\ex^{pres}_{OA})=(\tilde x_{LCC},\tilde x_{OA})$, $\ex^*=1$ with probability 
$\Phi(\tfrac{\tilde x_{LCC}-\mu_{LCC}}{\sigma_{LCC}}+\tfrac{\tilde x_{OA}-\mu_{OA}}{\sigma_{OA}})$, with $\mu_j$ and $\sigma^2_j$ the mean and variance of $X^{pres}_{j,m}$. 
The value of $\gamma$ determines the extent of misspecification. 
We report results for $\gamma\in\{-.1,-.2,-.3,-.4\}$. 

\begin{figure}
	\begin{center}
		\includegraphics[width=.475\textwidth]{./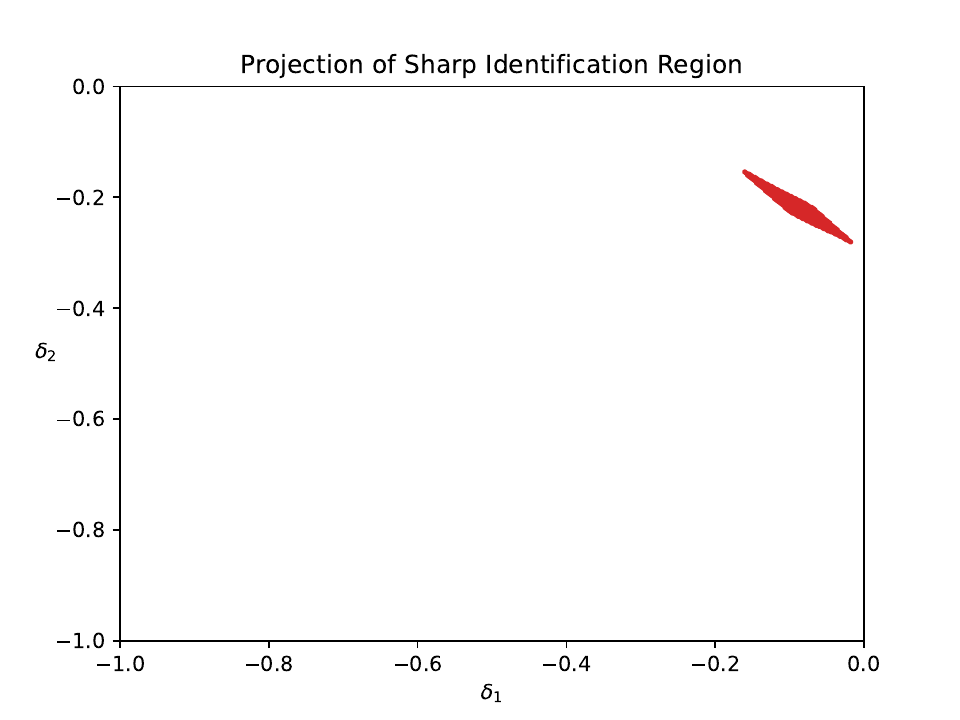}
		\includegraphics[width=.475\textwidth]{./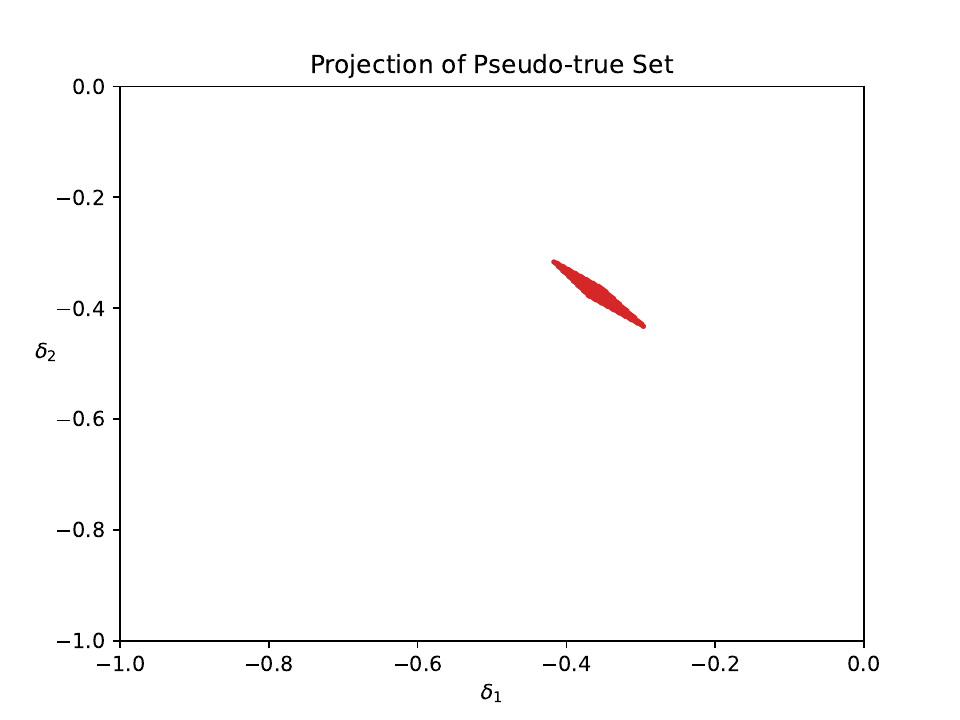}
\caption{Projections of $\Theta^*(\ps{0})$ (left, correctly specified; right, misspecified with $\gamma=-0.4$)}\label{fig:Theta_star_MC}
	\end{center}
\end{figure}

Figure \ref{fig:Theta_star_MC} shows the projections of $\Theta^*(\ps{0})$ onto the space of $(\delta_{LCC},\delta_{OA})$ under Design 2.\footnote{A similar figure for Design 1 is omitted to conserve space but available from the authors upon request.}
When the model is correctly specified (left panel), $\Theta^*(\ps{0})$ coincides with the sharp identification region of $\theta$.
With misspecification ($\gamma=-0.4$), the optimal value of the KL divergence measure in \eqref{eq:def_pseudo_true_set} is strictly positive (see Table \ref{tab:power} for the value of $\dKL{\ps{0}}{\qs{\theta}^*}$), indicating that the sharp identification region is empty.
In contrast, the pseudo-true set $\Theta^*(\ps{0})$ remains nonempty, as shown in the right panel of Figure \ref{fig:Theta_star_MC}, and the shape of $\Theta^*(\ps{0})$ remains similar both in the correctly specified and in the misspecified case.
Nonetheless, under misspecification it shifts to the (lower) left.
This is because when $X^*=1$,  the true DGP allocates a large mass to either $(1,0)$ or $(0,1)$ (whereas with $X^*=0$ that mass is allocated to $(1,1)$).
This is not captured by the model in \eqref{eq:payoff_KM}.
A reduction in the values of $(\delta_1,\delta_2)$ (an increase in absolute value) 
enlarges the region of multiplicity, thereby allowing for a larger mass to be allocated to $(1,0)$ and $(0,1)$ than the model in \eqref{eq:payoff_KM} allows for. 

To implement our test, we estimate $\ps{0,y|x}$ by a series estimator with $J$-th order (tensor-product) B-spline basis functions.\footnote{\label{fn:varepsilon}We compute $\tilde\Sigma$ in \eqref{eq:tilde_Sigma} setting $\varepsilon= 0.05$ as done in \cite{and:shi13}; additional simulations (available from the authors) with $\varepsilon= 0.025$ and $0.1$ indicate that increasing $\varepsilon$ makes our procedure more conservative.} For Design 1, we compare the performance of our procedure to that of \citet{and:shi13}.\footnote{We implement the method in \citet{and:shi13} using their $S_3$ test statistic, with moment functions
		\begin{align*}
			m_{\le}(Z_i,\theta)&=\begin{bmatrix}
				1\{\ey_i=(1,0)\}-\eta_2(x;\theta)\\
				\eta_3(x;\theta)-1\{\ey_i=(1,0)\}
			\end{bmatrix};~
			m_{=}(Z_i,\theta)=
			\begin{bmatrix}
				1\{\ey_i=(0,0)\}-[1-\Phi(x^\top_{LCC}\beta_1)][1-\Phi(x^\top_{OA}\beta_2)]\\
				1\{\ey_i=(1,1)\}-\Phi(x^\top_{LCC}\beta_1+\delta_1) \Phi(x^\top_{OA}\beta_2+\delta_2)\}
			\end{bmatrix},
		\end{align*}
where $m(z,\theta)=(m_{\le}(z,\theta)^\top,m_{=}(z,\theta)^\top)^\top$, and with $\bar m_n(\theta)=\tfrac{1}{n}\sum_{i=1}^n m(Z_i,\theta)$. 
The number of inequalities is two times the number of hyper-cubes used, and similarly for the equalities.} 
We report a power comparison where the parameter value $\theta^*$ lies on the boundary of $\Theta^*(p_0)$. Specifically, we examine the rejection probabilities for local alternatives of the form $\theta_{0,h} = (\beta^*_{LCC},
\delta^*_{LCC}+\tfrac{h}{\sqrt n},\beta^*_{OA},
\delta^*_{OA}+\tfrac{h}{\sqrt n})$, 
with $h > 0$. This corresponds to drifting the strategic interaction effects toward $(0,0)$ while holding the other components fixed.
\citeauthor{and:shi13}'s test transforms the conditional moment inequalities into unconditional ones using as instruments indicator functions of whether each covariate belongs to specified hypercubes. The side length of each hypercube is $1/(2r)$, for $r = 1, \dots, r_{1n}$, where a larger value of $r_{1n}$ corresponds to finer conditioning information. Following \citet[Section 10.4]{and:shi13}, we set $r_{1n} = 3$, and also report results for $r_{1n} = 2$ and $4$.
For Design 2, we report results only for the score-based test, as the moment-based test was computationally infeasible in this setting.

Table \ref{tab:power} reports the results of this exercise for $500$ Monte Carlo repetitions.
Panel (A) documents the size and power of our test as well as the moment inequality-based tests for the case that the model is correctly specified. 
The test of \citet{and:shi13} over-rejects slightly in this correctly specified DGP, while our Rao's score-based test has valid size but under-rejects.
Nonetheless, the power curve of our test quickly dominates that of the moment inequality based test.

\begin{table}[htbp]
\centering
\caption{Rejection Probabilities of Score and Moment Inequality Tests}\label{tab:power}
\resizebox{\textwidth}{!}{	
\begin{threeparttable}
\begin{tabular}{lllllllllllll}
\hline
Design     & Specification info.   & Tests                  & Size   &      \multicolumn{9}{c}{Power (values of $h/\sqrt n$ below)}      \\
\cline{5-13}
           &                 &       &        & 0.013 & 0.025 & 0.038 & 0.050 & 0.063 & 0.076 & 0.088 & 0.101 & 0.113 \\
\hline
\multicolumn{3}{l}{Panel A: Correctly specified $(\gamma=0)$}   & &       &       &       &       &       &       &       &       &       \\
 Design 1: &Two Covariates&          &        &       &       &       &       &       &       &       &       &       \\
           && Score Test             & 0.028 & 0.058 & 0.154 & 0.432 & 0.732 & 0.916 & 0.996 & 1.000 & 1.000 & 1.000 \\
           & & Moment Ineq. Test ($r_{1n}=4$) & 0.070 & 0.106 & 0.222 & 0.430 & 0.722 & 0.896 & 0.976 & 0.998 & 1.000 & 1.000 \\
           & & Moment Ineq. Test ($r_{1n}=3$) & 0.066 & 0.102 & 0.214 & 0.410 & 0.714 & 0.892 & 0.974 & 0.998 & 1.000 & 1.000 \\
		   & & Moment Ineq. Test ($r_{1n}=2$) & 0.058 & 0.088 & 0.202 & 0.394 & 0.690 & 0.886 & 0.974 & 0.998 & 1.000 & 1.000 \\
          &        &       &       &       &       &       &       &       &       &       \\

 Design 2: & Three Covariates  &    &        &       &       &       &       &       &       &       &       &       \\
     & & Score Test      &   0.018 & 0.050 & 0.150 & 0.390 & 0.708 & 0.904 & 0.984 & 1.000 & 1.000 & 1.000\\
           &&                        &        &       &       &       &       &       &       &       &       &       \\
           \hline
\multicolumn{3}{l}{Panel B: Misspecified $(\gamma=-0.1)$} & &       &       &       &       &       &       &       &       &       \\
 Design 1: & Two Covariates &       &        &       &       &       &       &       &       &       &       &       \\
& $\dKL{\ps{0}}{\qs{\theta}^*}=$ 1.99e-04       & Score Test              & 0.046 & 0.086 & 0.192 & 0.476 & 0.746 & 0.932 & 0.992 & 1.000 & 1.000 & 1.000 \\
&IM rej. = 0.002   & Moment Ineq. Test ($r_{1n}=4$) & 0.160 & 0.196 & 0.344 & 0.600 & 0.818 & 0.940 & 0.988 & 0.998 & 1.000 & 1.000 \\
			   &   & Moment Ineq. Test ($r_{1n}=3$) & 0.158 & 0.192 & 0.342 & 0.590 & 0.816 & 0.936 & 0.988 & 0.998 & 1.000 & 1.000 \\
			   &   & Moment Ineq. Test ($r_{1n}=2$) & 0.148 & 0.178 & 0.310 & 0.550 & 0.808 & 0.936 & 0.984 & 0.996 & 1.000 & 1.000 \\
           &                        &        &       &       &       &       &       &       &       &       &       \\
 Design 2: & Three Covariates  &    &        &       &       &       &       &       &       &       &       &       \\
& $\dKL{\ps{0}}{\qs{\theta}^*}=$ 2.00e-04       & Score Test             & 0.050 & 0.092 & 0.242 & 0.494 & 0.762 & 0.938 & 0.986 & 1.000 & 1.000 & 1.000 \\
& IM rej.= 0.002   & &  &  &  &  &  &  &  &  &  &  \\
           &&                        &        &       &       &       &       &       &       &       &       &       \\
           \hline
\multicolumn{3}{l}{Panel C: Misspecified $(\gamma=-0.2)$} & &       &       &       &       &       &       &       &       &       \\
 Design 1: & Two Covariates   &      &        &       &       &       &       &       &       &       &       &       \\
& $\dKL{\ps{0}}{\qs{\theta}^*}= $8.15e-04      & Score Test            & 0.042 & 0.070 & 0.152 & 0.414 & 0.710 & 0.910 & 0.986 & 1.000 & 1.000 & 1.000\\
&IM rej. = 0.076   & Moment Ineq. Test ($r_{1n}=4$) & 0.386 & 0.462 & 0.644 & 0.824 & 0.936 & 0.986 & 0.996 & 1.000 & 1.000 & 1.000 \\
&     			   & Moment Ineq. Test ($r_{1n}=3$) & 0.374 & 0.452 & 0.636 & 0.816 & 0.932 & 0.982 & 0.998 & 1.000 & 1.000 & 1.000 \\
&     			   & Moment Ineq. Test ($r_{1n}=2$) & 0.358 & 0.440 & 0.624 & 0.792 & 0.922 & 0.980 & 0.996 & 1.000 & 1.000 & 1.000 \\
&           &                        &        &       &       &       &       &       &       &       &       &       \\
 Design 2: & Three Covariates  &     &        &       &       &       &       &       &       &       &       &       \\
& $\dKL{\ps{0}}{\qs{\theta}^*}=$ 8.15e-04     & Score Test             & 0.036 & 0.076 & 0.194 & 0.412 & 0.716 & 0.910 & 0.982 & 1.000 & 1.000 & 1.000 \\
& IM rej.= 0.038     &  &  &  &  &  &  &  &  &  &  &  \\
           &&                        &        &       &       &       &       &       &       &       &       &       \\
           \hline
\multicolumn{3}{l}{Panel D: Misspecified $(\gamma=-0.3)$} & &       &       &       &       &       &       &       &       &       \\
 Design 1: &Two Covariates&          &        &       &       &       &       &       &       &       &       &       \\
&  $\dKL{\ps{0}}{\qs{\theta}^*}=$ 1.81e-03         & Score Test             & 0.028 & 0.054 & 0.136 & 0.358 & 0.660 & 0.864 & 0.970 & 0.996 & 1.000 & 1.000\\
&IM rej. =   0.498   & Moment Ineq. Test ($r_{1n}=4$) & 0.762 & 0.830 & 0.896 & 0.968 & 0.986 & 1.000 & 1.000 & 1.000 & 1.000 & 1.000 \\
&  				     & Moment Ineq. Test ($r_{1n}=3$) & 0.746 & 0.812 & 0.890 & 0.966 & 0.984 & 1.000 & 1.000 & 1.000 & 1.000 & 1.000 \\
&  			   	     & Moment Ineq. Test ($r_{1n}=2$) & 0.724 & 0.794 & 0.878 & 0.958 & 0.982 & 0.998 & 1.000 & 1.000 & 1.000 & 1.000 \\
           &&                        &        &       &       &       &       &       &       &       &       &       \\
 Design 2: &Three Covariates&        &        &       &       &       &       &       &       &       &       &       \\
&$\dKL{\ps{0}}{\qs{\theta}^*}=$ 1.80e-03           & Score Test             & 0.020 & 0.046 & 0.150 & 0.346 & 0.644 & 0.876 & 0.962 & 0.998 & 1.000 & 1.000 \\
&IM rej. =  0.274         &  &  &  &  &  &  &  &  &  &  &  \\
           &&                        &        &       &       &       &       &       &       &       &       &       \\
           \hline
\multicolumn{3}{l}{Panel E: Misspecified $(\gamma=-0.4)$} & &       &       &       &       &       &       &       &       &       \\
 Design 1: &Two Covariates&          &        &       &       &       &       &       &       &       &       &       \\
&$\dKL{\ps{0}}{\qs{\theta}^*}=$ 3.14e-03         & Score Test            & 0.024 & 0.046 & 0.130 & 0.312 & 0.616 & 0.852 & 0.968 & 0.992 & 1.000 & 1.000\\
&IM rej. =   0.916    & Moment Ineq. Test ($r_{1n}=4$) & 0.964 & 0.980 & 0.992 & 0.998 & 1.000 & 1.000 & 1.000 & 1.000 & 1.000 & 1.000 \\
&                     & Moment Ineq. Test ($r_{1n}=3$) & 0.964 & 0.980 & 0.992 & 0.996 & 1.000 & 1.000 & 1.000 & 1.000 & 1.000 & 1.000 \\
&                     & Moment Ineq. Test ($r_{1n}=2$) & 0.954 & 0.974 & 0.988 & 0.996 & 1.000 & 1.000 & 1.000 & 1.000 & 1.000 & 1.000 \\
           &&                        &        &       &       &       &       &       &       &       &       &       \\
 Design 2: &Three Covariates&        &        &       &       &       &       &       &       &       &       &       \\
&$\dKL{\ps{0}}{\qs{\theta}^*}=$ 3.14e-03           & Score Test            & 0.030 & 0.054 & 0.148 & 0.372 & 0.658 & 0.862 & 0.972 & 0.996 & 1.000 & 1.000 \\
&IM rej. = 0.760                    &  &  &  &  &  &  &  &  &  &  &  \\
           &&                        &        &       &       &       &       &       &       &       &       &       \\
\hline
\end{tabular}
\begin{tablenotes}
      \small
      \item Note: The simulation results are based on samples of size $n=7,017$ and 500 Monte Carlo repetitions. Panel A reports results for correctly specified models. Panels B-F report results for misspecified models with different values of $\gamma$. The second column reports the number of continuous covariates and the degree of misspecification measured by the KL divergence, and the rejection probability of an infeasible Information Matrix test.
    \end{tablenotes}
	 \end{threeparttable}
}
\end{table}

\begin{table}[!ht]
	\small
    \centering
    \caption{Computational Time Comparisons for Design 1 (in seconds)}\label{tab:computation_time}
\begin{tabular}{lcccc}
\hline
 & &\multicolumn{3}{c}{Moment Inequality Test}\\
 \cline{3-5}
 & Score Test & $r_{1n}=2$ & $r_{1n}=3$ & $r_{1n}=4$\\
\hline
Calculating the Statistic       & 0.25 & 8.75   & 30.04 & 73.62\\
Calculating the Critical Value  & 2.04e-4 & 48.4 & 247.31 & 818.23\\
\hline 
\end{tabular}
\end{table}

In the misspecified case (Panels (B)-(F)), as expected the moment inequality-based test is oversized.
The extent of the size distortion grows with the extent to which the model is misspecified.
To quantify the latter across DGPs, we compute the rejection probability of an infeasible Information Matrix test \citep{White1982} that uses knowledge of the fact that the selection mechanism $\er$ in \eqref{eq:exampleCT_selection_mech} is distributed $\mathrm{Bernoulli}(\kappa_0)$.
Enriched with this information, the model yields a unique prediction $\qs{(\theta,\kappa_0),y|x}$ and a well defined likelihood function, and hence we can obtain the (point identified) maximum likelihood estimator $\hat{\theta}^{MLE}$ that a researcher would obtain if they knew the selection mechanism.
We compute the Hessian and the outer product forms for the covariance matrix, and evaluate them at $\hat{\theta}^{MLE}$ to carry out the Information Matrix test.
We report the rejection probability of this infeasible test in Table \ref{tab:power}, labeling it ``IM rej.''
As can be seen from the table, for levels of $\gamma\in\{-.1,-.2\}$ the rejection probability is low, reaching at most $7.6\%$ under Design 1; yet, \citet{and:shi13}'s test already shows non-trivial size distortions ($16\%$ and $37\%$, respectively). For $\gamma=-0.3,-.4$, the power is higher, reaching $91.6\%$ under Design 1.
For such settings, the size of \citet{and:shi13}'s test is substantially distorted ($75\%$ and $96\%$, respectively, against a $5\%$ nominal level).
In contrast, our test has correct size throughout all simulations, and maintains a power curve that is very similar to the one it displays in the case of correct model specification.

Table \ref{tab:computation_time} reports average computational time in seconds to calculate test statistics and critical values in Design 1 for 500 Monte Carlo replications on Boston University's computing cluster (with Intel Xeon Gold 6132 Processors and 192GB RAM).
Our test statistic is 35-294 times faster to compute than \citeposs{and:shi13}, but the most substantial gain comes from calculation of the critical value: 0.0002 seconds for us, against 48-818 for \citet{and:shi13}, for a total computation time reduction of 228-3,564 times.

\section{Conclusions}
\label{sec:conclusions}
This paper is concerned with statistical inference in incomplete models with set valued predictions.
Such models are typically partially identified, and can be misspecified.
Misspecification can make the identification region of the model's parameters spuriously tight or even empty, raising a challenge for interpreting identification results, and can cause existing testing procedures to severely overreject.
We propose to resolve these problems through an information-based method.
Our method delivers a non-empty pseudo true set which can be interpreted as the set of minimizers of the researcher's ignorance about the true structure, as in \citet{White1982}.
For any given parameter value, our inference method solves a convex program to find the density function that is closest to the data generating process with respect to the Kullback-Leibler information criterion.
It then obtains the score of the likelihood function associated with this density and a Rao score test statistic.
We show that the test statistic has an asymptotically pivotal distribution, is easy to compute, and does not require moment selection.
The associated test has uniformly valid asymptotic size, is applicable to both correctly specified and misspecified models, and allows for discrete and continuous covariates.
Monte Carlo simulations confirm the good computational and statistical properties of our proposed inference method.

\appendix

\section{Proofs of Main Theorems}
\label{sec:app} 

\begin{proof}[\rm \textbf{Proof of Theorem \ref{thm:score_characterization}.}]
\noindent\textbf{Part (i).}
As shown in \eqref{eq:conditional_L}, $\crit(\theta|x)$ is the optimal value function of a convex program.  Below, we fix $x$ and drop conditioning from $\crit$, $\ps{0}$, $\qs{}$, and $\contf$ to ease notation. 
Lemma \ref{lem:licq} in the Online Appendix delivers two key results.
First, there is a collection of events $\cA^{(*e)}\subseteq 2^\cY$ that does not depend on $\theta\in\Theta$, such that
\begin{align}
\core(\contf(\cdot))=\left\{Q\in\cM(\SY,\cX):Q(A)\ge\contf(A),~\forall A\in\cA^{(*e)}\right\},\label{eq:core_det:eq}
\end{align}
where $\core(\contf(\cdot))$ is defined in \eqref{eq:core}, and the cardinality of the collection $\cA^{(*e)}$ is the smallest among any collection of test sets guaranteeing \eqref{eq:core_det:eq}.
Hence, it suffices to verify the dominance condition in \eqref{eq:core} for all $A\in\cA^{(*e)}$ rather than for all $A\in\cC$.\footnote{\cite{gal:hen11} call collections of sets with this property \emph{core determining}. Applying results in \citet{luo:pon:wan25}, Lemma \ref{lem:licq} in the Online Appendix shows that $\cA^{(*e)}$ is an \emph{exact core determining class}, i.e., it has  smallest cardinality among core determining classes.}

Second, the problem
\begin{align*}
\crit(\theta) =\max_{\qs{}\in \Delta}&\sum_{y\in\cY}\ps{0}(y)\ln \qs{}(y)~~~\text{s.t.}~~~ \contf(A)\le \sum_{y\in A}\qs{}(y),~A\in \cA^{(*e)},
\end{align*}
has a unique solution $\qs{}^*$ with unique Lagrange multiplier vector $\lambda^*$ associated with the constraints.
We let $J=|\mathcal A^{(*e)}|$ and we denote sets in $\mathcal A^{(*e)}$ by $A_j,j=1,\dots,J$.

Consider $V(t)=\crit(\theta+th)$ for $h\in\R^{d_\theta}$ and $t\in(-\epsilon,\epsilon)$ for some $\epsilon>0$.
Note that $\qs{}$ may be viewed as a vector because $\cY$ is finite. 
Below, we view $V(t)$ as the optimal value function of the convex program with objective function $f(q,t)=\sum_{y\in\cY}\ps{0}(y)\ln \qs{}(y)$ and convex (affine) constraints $g_j(q,t)= \nu_{\theta+th}(A_j)-\sum_{y\in A_j}\qs{}(y)$, $j=1,\dots,J$.
Note that $\Delta$ is compact and convex. 
Both $f$ and $g_j$'s are continuous and concave in $q$. 
Therefore, for any sequence $\{t_n\}$ with $t_n\downarrow 0$, the maximizer of $L(\theta+t_nh)$ exists. Furthermore, since the domain of the control variable and parameter $\Delta\times (-\epsilon,\epsilon)$ is bounded, the inf-boundedness assumption of \cite{Rockafellar1984} holds. This ensures that the parametric optimization problem indexed by $t$ is directionally stable in the sense of \cite{Gauvin:1990uz}.
Furthermore, their derivatives with respect to $t$ are $f_t(q,t)=0$ and $g_{j,t}(q,t)=\nabla_\theta\nu_{\theta+th}(A_j)^\top h$, and they are continuous in $(q,t)$ by assumption. 
Let $\mathcal L(q,\lambda,t)=f(q,t)+\sum_{j}^J\lambda_jg_j(q,t)$ be the Lagrangian.
By \citet[Corollary 4.2]{Gauvin:1990uz} and $(\qs{}^*,\lambda^*)$ being unique, $V$ is differentiable at $t=0$ and its derivative is given by
\begin{align}
    V'(0)=\tfrac{d}{dt}\mathcal L(\qs{}^*,\lambda^*,t)|_{t=0}=\sum_{j}^J\lambda^*_j\nabla_\theta\nu_\theta(A_j)^\top h.
\end{align}
Since this holds for any $h\in\mathbb R^{d_\theta}$, $\crit(\theta)$ is differentiable with
\begin{align}
    \nabla_\theta \crit(\theta)=\sum_j\lambda^*_j\nabla_\theta\contf(A_j).\label{eq:nablaL}
\end{align}

\noindent\textbf{Part (ii)-Eq.\eqref{eq:conditional_foc}.}
In what follows we construct a score function. 
Let $M=|\cY|$, and order the elements of $\cY$ as $y_1,y_2,\dots, y_M$. 
Let $\cJ=\{j\in\{1,\dots,J\}:\sum_{\tilde y\in A_j}\qs{}^*(\tilde y)=\nu_\theta(A_j)\}$ be the set of active constraints, and let $\cJ^c=\{1,\dots,J\}\setminus\cJ$ collect slack constraints. 
For each $y\in\cY$, let $\cJ(y)=\{j\in\cJ:y\in A_j\}$ collect the indices associated with the active constraints such that $y$ belongs to $A_j$. 
By the Karush-Kuhn-Tucker conditions, differentiating $\mathcal L$ with respect to $\qs{}$ and evaluating it at $\qs{}^*$ yields
\begin{align}
    \tfrac{\ps{0}(y_m)}{\qs{}^*(y_m)}-\sum_{j\in\cJ(y_m)}\lambda^*_j=0,~~~m=1,\dots,M.\label{eq:kkt1}
\end{align}
For each $y\in \cY$, let $e_{\cJ(y)}\in\{0,1\}^J$ be a vector whose $j$-th component is 1 if $j\in \cJ(y)$ and 0 otherwise.
Then, the system of equations \eqref{eq:kkt1} 
can be written as
\begin{align}
    B\lambda^* =r,\label{eq:LMeq}
\end{align}
where $B$ is an $M$-by-$J$ matrix and $r$ is an $M$-by-1 vector with $m$-th row defined as follows
\begin{align}
    [B]_{[m,\cdot]}=e'_{\cJ(y_m)},
    \hspace{1cm}~[r]_m=\tfrac{\ps{0}(y_m)}{\qs{}^*(y_m)}.\label{eq:defBr}
\end{align}
By the complementary slackness conditions, $\lambda_j^*=0$ for any $j\in \cJ^c.$
Hence, \eqref{eq:LMeq} can be reduced to a system of $M$ equations with $S=|\cJ|$ unknowns. 
Eliminate the columns of $B$ corresponding to $j\in\cJ^c$ and let $\tilde B$ denote the resulting submatrix of $B$. 
Similarly, eliminate the components of $\lambda$ corresponding to $j\in\cJ^c$ and let $\tilde\lambda^*$ denote the resulting subvector.
\eqref{eq:LMeq} can be rewritten as $\tilde B\tilde \lambda^* =r$,
where $\tilde B$ is a $M\times S$ matrix whose columns are the representers $\{b_{A_j},j\in \cJ\}$ of the active constraints. 
By Lemma \ref{lem:licq} (iv), the vectors $\{b_{A_j},j\in \cJ\}$ are linearly independent. 
Hence, $\tilde\lambda^*$ solves equation $\tilde B\tilde \lambda^* =r$ 
uniquely, and there exists an $S$-by-$M$ matrix $C$ such that
$\tilde \lambda^*=Cr$.
Let $E_\theta$ be a $d\times S$ matrix that stacks the column vectors $\{\nabla_\theta\contf(A_j),j\in\cJ\}.$ 
Then,
\begin{align*}
    \nabla_\theta \crit(\theta)=\sum_j^{J}\lambda^*_j\nabla_\theta\contf(A_j)=\sum_{j\in \cJ}\lambda^*_j\nabla_\theta\contf(A_j)=E_\theta \tilde\lambda^*=E_\theta Cr
\end{align*}
by \eqref{eq:nablaL} and $\tilde \lambda^*=Cr$. Recall that $r$ is defined in \eqref{eq:defBr}. Hence, $\nabla_\theta \crit(\theta)$ can be written as
\begin{align}
    \nabla_\theta \crit(\theta)=\sum_{m=1}^M\ps{0}(y_m)\tfrac{[E_\theta C]_{[\cdot, m]}}{\qs{}^*(y_m)},\label{eq:score_rep}
\end{align}
where $[\cdot]_{[\cdot, m]}$ selects the $m$-th column of its argument. 
Now let $\score{\theta}{y_m}=\tfrac{[E_\theta C]_m}{\qs{}^*(y_m)}$ and recall that so far we dropped conditioning on $x$ and dependence on $\ps{0,y|x}$.  
\eqref{eq:score_rep} therefore shows that $\nabla_\theta \crit(\theta|x)=\E[\score{\theta}{\ey|\ex;\ps{0,y|x}})|\ex=x]$, and $s_\theta$'s square integrability follows from $\qs{}^*(y)>c$,  $\nabla_\theta\contf(A_j|\ex)$ being square integrable for all $j$, and $\cY$ being a finite set.\smallskip

\noindent\textbf{Part (ii)-\eqref{eq:zero_expected_score}.}
By law of iterated expectations and dominated convergence theorem,
\begin{multline*}
   \E[\score{\theta}{\ey|\ex;\ps{0,y|x}}]=\E[\E[\score{\theta}{\ey|\ex;\ps{0,y|x}}|X]]\\
   = \E\left[\tfrac{\partial}{\partial\theta}\crit(\theta|X)\right]=\tfrac{\partial}{\partial\theta} \E[\crit(\theta|X)]=\tfrac{\partial}{\partial\theta} \crit(\theta)=0,
\end{multline*}
for any $\theta\in \Theta^*(\ps{0})$, where the last equality follows from the first-order condition for maximizing $\theta\mapsto \crit(\theta)$ and the fact that $\Theta^*(\ps{0})\subset\interior\Theta$.
\end{proof}

\begin{proof}[\rm \textbf{Proof of Theorem \ref{thm:argmin_stability}.}]
    The result follows from Lemma~\ref{lemma:continuity}, which establishes joint continuity of $\phi(\cdot,\cdot)$ in $(\theta,\ps{})$, and the use of Berge's maximum theorem, observing that $\fq{\theta}$ does not depend on $\ps{}$.
\end{proof}

We next turn to the proof of Theorem \ref{lem:scores_limit}.
To establish the result, we first show that the expected score satisfies an asymptotic orthogonality condition with respect to the nuisance parameter. 
This result ensures that the score statistic's limiting distribution is insensitive to the nonparametric estimation of the conditional choice probability.
Below, let $\qs{\theta,h_x,y|x}\in \fq{\theta,x}$ be indexed by the structural parameter $\theta$ and equilibrium selection $h_x=\{h_{x,u},\, u\in\cU\}$, with $h_{x,u}\in\cS(x,u;\theta)$ and $\cS(x,u;\theta)$ the set of all conditional densities of $\ey|\ex,\eu$ such that, for any $(x,u)$, its conditional support is $\G(u|x;\theta)$.
Let $\cS(\theta)=\{\cS(x,u;\theta),x\in\cX,u\in\cU\}$.
\begin{lemma}\label{lem:orthogonality}
Suppose Assumptions \ref{ass:finite_support} and \ref{as:newey_modified1_new} hold. Then,  
\begin{align}
D(\theta^*,\ps{0,y|x})[\ps{y|x}-\ps{0,y|x}]=0.\label{eq:orthogonality} 
\end{align}
\end{lemma}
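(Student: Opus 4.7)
The plan is to exploit the envelope structure of the KL projection defining $\qs{\theta,y|x}^*(\cdot|x;p)$. The first step is to establish the pointwise identity $\score{\theta}{y|x;p}=\nabla_\theta\ln \qs{\theta,y|x}^*(y|x;p)$, so that, summing against $\ps{0,y|x}$,
\begin{equation*}
m_\theta(x;p)=\sum_{y\in\cY}\ps{0,y|x}(y|x)\,\nabla_\theta\ln \qs{\theta,y|x}^*(y|x;p)=\nabla_\theta\ell(\theta|x;p),
\end{equation*}
where $\ell(\theta|x;p)\equiv\sum_{y\in\cY}\ps{0,y|x}(y|x)\ln \qs{\theta,y|x}^*(y|x;p)$. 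This identity can be obtained by implicit differentiation of the KKT system in the proof of Theorem~\ref{thm:score_characterization}, using uniqueness of the Lagrange multiplier vector (Lemma~\ref{lem:licq}) and local constancy of the active set; it is transparent in the closed-form expressions \eqref{eq:eg_score1}--\eqref{eq:eg_score4} of Example~\ref{example:CT}.

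Second, I would show that $p\mapsto\ell(\theta^*|x;p)$ is maximized at $p=\ps{0,y|x}$. By construction $\qs{\theta^*,y|x}^*(\cdot|x;\ps{0,y|x})$ is the unique element of $\fq{\theta^*,x}$ maximizing $q\mapsto\sum_{y}\ps{0,y|x}(y|x)\ln q(y|x)$; for $p$ near $\ps{0,y|x}$, $\qs{\theta^*,y|x}^*(\cdot|x;p)\in\fq{\theta^*,x}$ is merely a feasible competitor, so $\ell(\theta^*|x;p)\le\ell(\theta^*|x;\ps{0,y|x})$ and the first-order condition delivers $\partial_p\ell(\theta^*|x;\ps{0,y|x})[h]=0$ for every admissible direction $h=\ps{y|x}-\ps{0,y|x}$. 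The same conclusion can be obtained by direct computation: apply the KKT identity $\ps{0,y|x}(y|x)/\qs{\theta^*,y|x}^*(y|x;\ps{0,y|x})=-\sum_{j\in\cJ(y)}\lambda_j^*$, interchange the $y$- and $j$-summations, and use that the active-set equalities $\sum_{y\in A_j}\qs{\theta^*,y|x}^*(y|x;p)=\contf(A_j|x)$ have $p$-independent right-hand side so that $\sum_{y\in A_j}\partial_p\qs{\theta^*,y|x}^*(y|x;\ps{0,y|x})[h]=0$ for active $j$ (inactive $j$ contribute nothing since $\lambda_j^*=0$).

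Interchanging $\partial_p$ and $\nabla_\theta$, legitimate by joint continuous differentiability of $(\theta,p)\mapsto \qs{\theta,y|x}^*(y|x;p)$ near $(\theta^*,\ps{0,y|x})$ under Assumption~\ref{ass:finite_support} and Lemma~\ref{lem:licq}, then yields
\begin{equation*}
\partial_p m_{\theta^*}(x;\ps{0,y|x})[h]=\nabla_\theta \partial_p\ell(\theta^*|x;\ps{0,y|x})[h]=\nabla_\theta 0=0.
\end{equation*}
Taking the outer $\Ps_0$-expectation and passing the limit inside by dominated convergence (justified under Assumption~\ref{as:newey_modified1_new}) delivers $D(\theta^*,\ps{0,y|x})[\ps{y|x}-\ps{0,y|x}]=0$. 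The main obstacle is the first step: $\score{\theta}{y|x;p}$ is only piecewise smooth in $p$, with kinks along surfaces in $\cH$ where the identity of the active constraints changes. Under Lemma~\ref{lem:licq} and Assumption~\ref{ass:finite_support}(b) such boundaries can be avoided on a neighborhood of $\ps{0,y|x}$, so both the envelope identity and the interchange of derivatives hold on the relevant neighborhood; without this, a more delicate argument in terms of Clarke subdifferentials, in the spirit of Online Appendix~\ref{app:clarke}, would be required.
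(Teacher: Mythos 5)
Your proof is correct and is essentially the argument the paper itself gives: the paper (invoking \citet[Proposition 2]{Newey:1994aa}) observes that the projection computed at a perturbed density is a feasible competitor in the population problem solved by the projection at $\ps{0,y|x}$, so that the composite map $\tau\mapsto\E_{\Ps_0}\big[\ln\qs{\theta,y|x}^*(\ey|\ex)\big]$ (with the projection taken at $\ps{\tau,y|x}$) is maximized at $\tau=0$, and it then differentiates this first-order condition in $\theta$, interchanging derivatives and using \eqref{eq:conditional_foc} to identify the $\theta$-derivative with the expected score---precisely your Steps 2 and 3. The only differences are presentational: the paper phrases the feasible-competitor argument in terms of equilibrium selections $h_x(\ps{})$ rather than the projected density itself, and it leaves implicit (through the citation of \eqref{eq:conditional_foc}) the pointwise identity $\score{\theta}{y|x;p}=\nabla_\theta\ln\qs{\theta,y|x}^*(y|x)$ and the attendant active-set regularity that you make explicit in Step 1.
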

\begin{proof}[\rm Proof of Lemma \ref{lem:orthogonality}]
We rely on an application of \citet[Proposition 2]{Newey:1994aa}.
Let
\begin{align}
	\nq(x,\theta,h_x)&=\E_{\Ps_0}[\ln \qs{\theta,h_x,y|x}(\ey|\ex)|\ex=x].\label{eq:Newey1} 
\end{align}	
Let $h_x(\ps{})=[h_{x,u}(\ps{}),u\in\cU]$, $h_{x,u}(\ps{})\in\cS(x,u;\theta)$, be the selection such that $ \qs{\theta,h_x(\ps{}),y|x}=\qs{\theta,y|x}^*$ when $\ps{}$ replaces $\ps{0}$ in \eqref{eq:q_star}; this selection exists by \citet{art83}. 
Solving the KL projection problem in \eqref{eq:profiling} (with $\ps{}$ replacing $\ps{0}$) is equivalent to maximizing out the equilibrium selection. 
Therefore, the function valued parameter $h(\ps{})\in\cS(\theta)$ solves $h(\ps{0})=\argmax_{\tilde h\in\cS(\theta)}\E_{\Ps_0}[\nq(x,\theta,\tilde h)]$, where the expectation is taken with respect to the true DGP distribution $\Ps_0$ and the dependence of $h(\cdot)$ on $\ps{}$ results from the KL projection step. 
Arguing as in \citet{Newey:1994aa}, for a path $\Ps_\tau$, denoting $h(\tau)=h(\ps{\tau})$, we have that $\E_{\Ps_0}[\nq(x,\theta,h(\tau))]\le\max_{\tilde h\in\cS(\theta)}\E_{\Ps_0}[\nq(x,\theta,\tilde h)]$ and hence $\E_{\Ps_0}[\nq(x,\theta,h(\tau))]$ is maximized at $\tau=0$.
The first order conditions for this maximum are $\partial\E[\nq(x,\theta,h(\tau))]/\partial \tau =0$ for all $\theta$. 
Differentiating one more time with respect to $\theta$ and using the law of iterated expectations,
\begin{align*}
	0&=\tfrac{\partial^2}{\partial \tau\partial\theta}\E_{\Ps_0}[\nq(x,\theta^*,h_x(\tau))]\big\vert_{\tau=0}=\tfrac{\partial}{\partial\tau}\E_{\Ps_0}\left[\tfrac{\partial}{\partial\theta}\E_{\Ps_0}[\ln \qs{\theta,y|x}^*(\ey|\ex)|\ex=x]\right]\Bigg\vert_{\tau=0}\\
	&=\tfrac{\partial}{\partial\tau}\E_{\Ps_0}\left[\E_{\Ps_0}[\score{\theta}{\ey|\ex;p_{\tau,y|x}}|\ex=x]\right]\big\vert_{\tau=0}=\tfrac{\partial}{\partial\tau}\E_{\Ps_0}[\score{\theta^*}{\ey|\ex;\ps{\tau,y|x}}]\big\vert_{\tau=0},
\end{align*} 
where \eqref{eq:conditional_foc} yields the third equality.
Hence, the pathwise derivative is zero.
\end{proof}

\begin{proof}[\rm \textbf{Proof of Theorem \ref{lem:scores_limit}.}]
Let us write the left-hand side of \eqref{eq:scores_limit} as
\begin{align}
&\tfrac{1}{\sqrt n}\sum_{i=1}^n \score{\theta^*}{\ey_i|\ex_i;\hat p_{n,y|x}}=\mathbb G_{n,\theta^*}(\hat p_{n,y|x})+\sqrt n	\E[\score{\theta^*}{\ey_i|\ex_i;\hat p_{n,y|x}}]\notag\\
&=\mathbb G_{n,\theta^*}(\ps{0,y|x})+(\mathbb G_{n,\theta^*}(\hat p_{n,y|x})-\mathbb G_{n,\theta^*}(\ps{0,y|x}))+\sqrt n	\E[\score{\theta^*}{\ey_i|\ex_i;\hat p_{n,y|x}}].\label{eq:scores_limit1}
\end{align}
By Assumption \ref{as:newey_modified2}-\ref{ass:normality}, $\mathbb G_{n,\theta^*}(\ps{0,y|x})=\tfrac{1}{\sqrt n}\sum_{i=1}^n\score{\theta^*}{\ey_i|\ex_i;\ps{0,y|x}})\stackrel{d}{\to}N(0,\Sigma_{\theta^*})$. 
Furthermore, by Assumptions \ref{as:newey_modified2} \ref{ass:consistent_est} and \ref{as:newey_modified2} \ref{ass:equic},  $\mathbb G_{n,\theta^*}(\hat p_{n,y|x})-\mathbb G_{n,\theta^*}(\ps{0,y|x})=o_p(1)$.

Let $r_n\equiv \E[m_{\theta^*}(\ex_i;\hat p_{n,y|x})]-\E[m_{\theta^*}(\ex_i;\ps{0,y|x})]-D(\theta^*,\ps{0,y|x})[\hat p_{n,y|x}-\ps{0,y|x}]$. Then,
\begin{align*}
		\E[\score{\theta^*}{\ey_i|\ex_i;\hat p_{n,y|x}}]&=\E[\E[\score{\theta^*}{\ey_i|\ex_i;\hat p_{n,y|x}}|X=x]]\\
		&=\E[m_{\theta^*}(\ex_i;\ps{0,y|x})]+\E[m_{\theta^*}(\ex_i;\hat p_{n,y|x})-m_{\theta^*}(\ex_i;\ps{0,y|x})]\\
		&=\E[\score{\theta^*}{\ey_i|\ex_i;\ps{0,y|x}}]+D(\theta^*,\ps{0,y|x})[\hat p_{n,y|x}-\ps{0,y|x}]+r_n,
\end{align*}
where the first equality follows from the law of iterated expectations, the second equality follows from the definition of $m_\theta$, and the third equality follows from the law of iterated expectations and the definition of $r_n$. 
As $\theta^*\in\Theta^*(p_0)$, $\E[\score{\theta^*}{\ey_i|\ex_i;\ps{0,y|x}}]=0$. 
By Lemma \ref{lem:orthogonality}, $\sqrt n D(\theta^*,\ps{0,y|x})[\hat p_{n,y|x}-\ps{0,y|x}]=0$. 
Finally, using Assumptions \ref{as:newey_modified1_new} and \ref{as:newey_modified2} \ref{ass:consistent_est},
\begin{align*}
    |\sqrt n r_n|\le 
    \sqrt n\|r_n\|_{L^2_P}\le \sqrt n c\|\hat p_{n,y|x}-\ps{0,y|x}\|_{\cH}^2=o_p(1).
\end{align*}
Hence, by the triangle inequality,
\begin{multline}
	|\sqrt n\E[\score{\theta^*}{\ey_i|\ex_i;\hat p_{n,y|x}}]|\le |\sqrt n\E[\score{\theta^*}{\ey_i|\ex_i;\ps{0,y|x}}]|+|\sqrt n r_n| =o_p(1).\label{eq:scores_limit2}
\end{multline}
Combining \eqref{eq:scores_limit1}-\eqref{eq:scores_limit2} yields the result in \eqref{eq:scores_limit}.
\end{proof}

\begin{proof}[\rm \textbf{Proof of Corollary \ref{cor:Tn_limit}.}]
For the case that $\Sigma_{\theta^*}$ is nonsingular, standard arguments, Assumption \ref{as:newey_modified2}, and \eqref{ass:consistent_cov} yield $T_n(\theta^*)\stackrel{d}{\to}\mathsf{J}$, with $\mathsf{J}\sim\chi^2_{d_\theta}$.
For the case that $\Sigma_{\theta^*}$ is singular, let $\zeta\in\R^{d_\theta}$ be a random vector such that $\zeta = \eta+\nu$, with $\eta\independent\nu$, $\eta\sim N(0,\Sigma_{\theta^*})$, and $\nu\sim N(0,\varepsilon\Psi_{\theta^*})$,
where $\Psi_{\theta^*}$ is the population analog of $\hat\Psi_{n,\theta^*}$ (see \eqref{eq:tilde_Sigma} and subsequent explanation of notation).
Let $\tilde\Sigma_{\theta^*}=\Sigma_{\theta^*}+\varepsilon\Psi_{\theta^*}$.
It follows from standard arguments that $T_n\stackrel{d}{\to}\eta^\top\tilde\Sigma_{\theta^*}^{-1}\eta$, and that $\zeta^\top\tilde\Sigma_{\theta^*}^{-1} \zeta\sim \mathsf{J}$.
Next, let $K=\{x\in\R^{d_\theta}:x^\top\tilde\Sigma_{\theta^*}^{-1}x\le c_{d_\theta,\alpha}\}$, and note that this set is convex and symmetric.
By Anderson's Lemma \citep[][Lemma 3.11.4]{Vaart:1996wk},
\begin{align*}
1-\alpha=\Ps(\zeta^\top\tilde\Sigma_{\theta^*}^{-1} \zeta\le c_{d_\theta,\alpha})=\Ps(\eta+\nu\in K)\le \Ps(\eta\in K)=\Ps(\eta^\top\tilde\Sigma_{\theta^*}^{-1} \eta\le c_{d_\theta,\alpha}).
\end{align*}
Then 
$
\lim\sup_{n\to\infty}\Ps(T_n(\theta^*)>c_{d_\theta,\alpha})=\Ps(\eta^\top\tilde\Sigma_{\theta^*}^{-1} \eta > c_{d_\theta,\alpha})
\le \Ps(\zeta^\top\tilde\Sigma_{\theta^*}^{-1} \zeta > c_{d_\theta,\alpha})=\alpha
$.
\end{proof}

\begin{proof}[\rm \textbf{Proof of Theorem \ref{thm:uniform_coverage}.}]
Let $\{\ps{0n},\theta^*_n\}\in\{(p,\vartheta^*):p~\text{is the Radon-Nykodim derivative of }P\in\cP,\vartheta^*\in\Theta^*(p)\}$ be a sequence such that:
\begin{equation*}
\liminf_{n\to\infty}\inf_{P\in\cP}	\inf_{\vartheta^* \in \Theta^*(p)}P(\vartheta^* \in CS_n)=\liminf_{n\to\infty}P_n(\theta_n^* \in CS_n),
\end{equation*}
with $CS_n$ defined in \eqref{eq:CSn}. 
Let $\{l_n\}$ be a subsequence of $\{n\}$ such that 
\begin{equation*}
\liminf_{n\to\infty}P_n(\theta_n^* \in CS_n)=\lim_{n\to\infty}P_{l_n}(\theta_{l_n}^* \in CS_{l_n})).
\end{equation*}
Then there is a further subsequence $\{a_n\}$ of $\{l_n\}$ such that
\begin{align*}
\lim_{a_n \to \infty} \Sigma_{\theta^*_{a_n}}=\Sigma^*\in\mathbb{S},
\end{align*}
where $\mathbb{S}$ is the collection of positive semi-definite $d_\theta\times d_\theta$ matrices.
To avoid multiple subscripts, with some abuse of notation we write $(P_n,\theta^*_n)$ to refer to $(P_{a_n},\theta^*_{a_n})$.
We establish the claim by showing that along the subsequence $(P_n,\theta^*_n)$, the results in Theorem \ref{thm:score_characterization}, Lemma \ref{lem:orthogonality}, and Theorem \ref{lem:scores_limit} continue to hold.

For Theorem \ref{thm:score_characterization}, note first that the collection of events $\cA^{(*e)}$ does not depend on $(P_n,\theta^*_n)$, as can be seen in the proof of Lemma \ref{lem:licq}-(i).
Second, parts (ii), (iii) and (iv) of Lemma \ref{lem:licq} continue to hold along the subsequence $(P_n,\theta^*_n)$ under the uniform version of Assumption \ref{ass:finite_support} stated in Theorem \ref{thm:uniform_coverage}.

For Lemma \ref{lem:orthogonality}, we again note that the result holds uniformly over $\cP$, under the uniform version of Assumptions \ref{ass:finite_support}, \ref{as:newey_modified1_new}, and \ref{as:newey_modified2} \ref{ass:consistent_est} stated in Theorem \ref{thm:uniform_coverage}.

For Theorem \ref{lem:scores_limit}, under the uniform version of Assumptions \ref{ass:finite_support}, \ref{as:newey_modified1_new}, and \ref{as:newey_modified2} stated in Theorem \ref{thm:uniform_coverage}, we have that by Assumption \ref{as:newey_modified2}, $\mathbb G_{n,\theta^*_n}(\ps{0n,y|x})=\tfrac{1}{\sqrt n}\sum_{i=1}^n\score{\theta^*_n}{\ey_i|\ex_i;\ps{0n,y|x}})\stackrel{d}{\to}N(0,\Sigma^*)$, and $\mathbb G_{n,\theta^*_n}(\hat p_{n,y|x})-\mathbb G_{n,\theta^*_n}(\ps{0n,y|x})=o_{P_n}(1)$.  
Arguing as in the proof of Theorem \ref{lem:scores_limit}, \eqref{eq:scores_limit1}-\eqref{eq:scores_limit2} continue to hold along the sequence $(P_n,\theta^*_n)$, and therefore
\begin{align*}
	\tfrac{1}{\sqrt n}\sum_{i=1}^n \score{\theta^*_n}{\ey_i|\ex_i;\hat p_{n,y|x}}=\tfrac{1}{\sqrt n}\sum_{i=1}^n\score{\theta^*_n}{\ey_i|\ex_i;\ps{0n,y|x}}+o_\cP(1)\stackrel{d}{\to}N(0,\Sigma^*).
\end{align*}
The final result follows arguing as in the proof of Corollary \ref{cor:Tn_limit}.
\end{proof}

\newpage
\thispagestyle{empty}

\begin{center}
    \uppercase{Online Appendix for: ``Information Based Inference\\
in Models with Set-Valued Predictions and Misspecification"}
\end{center}

\section{Lemmas Used in Proofs of Main Theorems}
\subsection{Pseudo-True Sets and Score Function}
\label{proof:score:theorem}
We give two lemmas that we use to show the differentiability of $\crit(\theta|x)$.
We let $\cC$ denote the collection of closed subsets of $\cY$.
Let $\cA\subseteq 2^{\cY}$ be a collection of events, including $\{y\}$ for each $y\in\cY$ and the entire $\cY$, such that the constraint set in \eqref{eq:const_ineq}-\eqref{eq:const_eq} equals $\fq{\theta,x}$. 
To ease notation, we drop conditioning on $\ex=x$.  
Let $\cA_=$ collect all events in $\cA$ corresponding to equality constraints in \eqref{eq:core} along with all inequalities such that $Q(A)=\sum_{y\in A}q(y)=\contf(A)$ for all $Q\in \core(\contf(\cdot))$ \citep[labeled implicit-equalities in][Definition 2.2]{luo:pon:wan25}. Let $\cA_{\ge}$ collect the remaining events. 
The inclusion of $\{y\}$ for each $y\in\cY$ and $\cY$ in $\cA$ guarantees that  the constraint set in \eqref{eq:const_ineq}-\eqref{eq:const_eq} is a subset of the $d_Y-1$ dimensional unit-simplex.
Consider the following problem:
\begin{align}
\mathbf P(\cA):\qquad\upsilon(\theta;\cA)\equiv \max_{\qs{}\in\R^{d_Y}}&\sum_{y\in\mathcal Y}\ps{0}(y)\ln \qs{}(y)\label{eq:define_valfun}\\
	s.t.~~~& \sum_{y\in A}\qs{}(y)\ge \contf(A),~A\in \mathcal A_{\ge}\label{eq:const_ineq}\\
	&\sum_{y\in A}\qs{}(y)= \contf(A),~A\in \mathcal A_{=},\label{eq:const_eq}
\end{align}
Let $A\subset \cY$. 
As $\cY$ is finite, one may represent the probability that any distribution $P$ with probability mass function $p\in\Delta$ assigns to a set $A$ through a \emph{representer} $a$ of $A$, by writing  
\begin{align*}
	\Ps(A)=p^\top a,
\end{align*}
with $a\in \{0,1\}^{d_Y}$.  
For example, take $\cY=\{(0,0),(0,1),(1,0),(1,1)\}$, $A=\{(1,0),(1,1)\}$, and $a=(0,0,1,1)^\top$. 
Then, $P(A)=p^\top a$. 
Similarly, for some $b_A\in\{0,1\}^{d_Y}$, the constraints in \eqref{eq:const_ineq}-\eqref{eq:const_eq} can be written as
\begin{align*}
\qs{}^\top b_A\ge \contf(A),~A\in \cA_{\ge}\quad\quad\text{and}\quad\quad
	\qs{}^\top b_A= \contf(A),~A\in \cA_=.
\end{align*}
The next lemma establishes a basic fact about polyhedra, adapted to our problem.
\begin{lemma}
\label{lem:ri-poly}
Let Assumption~\ref{ass:finite_support} hold and $b_A\neq 0$ for all $A\in\cA_{\ge}$. Then the relative interior of 
$\fq{\theta}$ is $\ri(\fq{\theta})=\{q\in\fq{\theta}:q^\top b_A>\contf(A),~\forall A\in \cA_{\ge}\}$.
\end{lemma}

\begin{proof}
Express $\fq{\theta}$ as $\fq{\theta}=\fq{\theta}^=\cap\Big\{\bigcap_{A\in\cA_{\ge}}\fq{\theta,A}^\ge\Big\}$, where
\begin{align*}
    \fq{\theta}^=&\equiv\{q\in\R^{d_Y}:q^\top\one =1,~q^\top b_A=\contf(A),~\forall A\in \cA_=\},\\
    \fq{\theta,A}^\ge&\equiv\{q\in\R^{d_Y}:q^\top b_A\ge\contf(A)\},\quad A\in\cA_{\ge}.
\end{align*}
Since $\fq{\theta}^=$ is affine, $\ri(\fq{\theta}^=)=\fq{\theta}^=$. For $A\in \cA_{\ge}$, since $b_A\neq 0$, we have
$\ri(\fq{\theta,A}^\ge)=\{q\in\R^{d_Y}:q^\top b_A>\contf(A)\}$.
Moreover, for each $A\in \cA_{\ge}$ there exists $q^A\in \fq{\theta}$ with $(q^A)^\top b_A>\contf(A)$ (because $A\notin \cA_=$).
By averaging finitely many such points, we obtain $\bar q\in \fq{\theta}$ satisfying $\bar{q}^\top b_A>\contf(A)$ for all $A\in \cA_\ge$, hence $\bar{q}\in\ri\left(\fq{\theta}^=\right)\cap\ri\left(\left\{\cap_{A\in\cA_{\ge}}\fq{\theta,A}^\ge\right\}\right)\neq\emptyset$.
The conclusion follows by \citet[Proposition~2.42]{Rockafellar_Wets2005aBK}.
\end{proof}
For given $\cA$, let $\cA_{\mathrm{bind}}(\qs{}^*)=\{A\in\cA_\ge:(\qs{}^*)^\top b_A=\contf(A)\}$ collect the events associated with non-identity inequality constraints that bind at $\qs{}^*$. 
Define the \emph{active face} of $\fq{\theta}$ at $\qs{}^*$ (i.e., the face cut out by the binding non-identity inequalities) as
\begin{align*}
    \fq{\mathrm{act},\theta}(\qs{}^*)\equiv\left\{\qs{}\in \fq{\theta}:\ \qs{}^\top b_A=\contf(A),~\forall A\in\cA_{\mathrm{bind}}(\qs{}^*)\right\}.
\end{align*}
For each event $\tilde{A}$ in the \emph{active constraint list} $\{A\in\cA_=\cup\cA_{\mathrm{bind}}(\qs{}^*):\qs{}^\top b_A=\contf(A)\}$, define the active face at $\qs{}^*$ obtained by \emph{removing $\tilde{A}$} as follows. If $\tilde{A}\in\cA_=$, then
\begin{align}
    \fq{\theta}^{-\tilde{A}}&\equiv\left\{q\in\R^{d_Y}: q^\top b_A=\contf(A),~ \forall A\in\cA_{=}\setminus\{\tilde{A}\},q^\top b_A\ge \contf(A),~ \forall A\in\cA_\ge\right\},\label{eq:remove_eq}\\
    \hspace{-.5cm}\fq{\mathrm{act},\theta}^{-\tilde{A}}(\qs{}^*)&\equiv
       \left\{q\in\fq{\theta}^{-\tilde{A}}: q^\top b_A=\contf(A),~ \forall A\in\cA_{\mathrm{bind}}(\qs{}^*)\right\}.\label{eq:remove1}
\end{align}
If $\tilde{A}\in\cA_{\mathrm{bind}}(\qs{}^*)$, then all inequalities in $\cA^{(*e)}$ remain enforced as weak inequalities and
\begin{align}
    \fq{\mathrm{act},\theta}^{-\tilde{A}}(\qs{}^*)&\equiv
       \left\{q\in\fq{\theta}: q^\top b_A=\contf(A),~ \forall A\in\cA_{\mathrm{bind}}(\qs{}^*)\setminus \tilde{A}\right\}.\label{eq:remove2}
\end{align}
\begin{definition}\label{def:irredundant}
    We say that the set of (in)equalities in $\cA$ active at $\qs{\theta,y|x}^*$ are irredundant if when any event $\tilde{A}\in\{A\in\cA_=\cup\cA_{\mathrm{bind}}(\qs{}^*):\qs{}^\top b_A=\contf(A)\}$ is removed, $\fq{\mathrm{act},\theta}^{-\tilde{A}}(\qs{}^*)\supsetneq \fq{\mathrm{act},\theta}(\qs{}^*)$ with $\fq{\mathrm{act},\theta}^{-\tilde{A}}(\qs{}^*)$ defined in \eqref{eq:remove1}-\eqref{eq:remove2}.
\end{definition}
\vspace{-.2cm}
\begin{lemma}\label{lem:licq}
	Let Assumption \ref{ass:finite_support} hold.  
	Then (i) there exists a collection $\cA^{(*e)}\subseteq 2^\cY$ not dependent on $\theta\in\Theta$ such that $\core(\contf(\cdot))=\{Q\in\cM(\SY):Q(A)\ge\contf(A),A\in\cA^{(*e)}\}$ for $\core(\contf(\cdot))$ in \eqref{eq:core}, and no collection $\cA^*\subseteq 2^\cY$ of cardinality smaller than $\cA^{(*e)}$ suffices to characterize $\core(\contf(\cdot))$; (ii) the optimal value of $\mathbf P(\mathcal A^{(*e)})$ is $L(\theta)$; (iii) the solution $\qs{}^*\in\Delta$ to problem $\mathbf P(\mathcal A^{(*e)})$ is unique and also solves problem $\mathbf P(\mathcal A^{all})$ with $\mathcal A^{all}=\{A:A\subseteq\cY,~A~\text{closed}\}$; (iv) its associated Lagrange multiplier vector $\lambda^*$ is unique.
\end{lemma}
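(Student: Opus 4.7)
The strategy is to handle the four claims sequentially, with parts (i) and (iv) carrying the main technical weight.

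For part (i), the plan is to invoke \citet{luo:pon:wan25}'s characterization of exact core-determining classes. Their construction produces a collection $\cA^{(*e)}\subseteq 2^\cY$ with smallest cardinality among core-determining classes such that $\core(\contf(\cdot))=\{Q\in\cM(\SY):Q(A)\ge\contf(A),A\in\cA^{(*e)}\}$. Crucially, the construction depends on the support $\cA_\G(x)$ of $\G(\cdot|x;\theta)$ rather than on $\theta$ itself. Assumption~\ref{ass:finite_support}-(b) asserts that $\cA_\G(x)$ is invariant in $\theta$, so $\cA^{(*e)}$ inherits this invariance and depends only on $x$.

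Part (ii) then follows essentially by definition. Because $\cA^{(*e)}$ is core-determining, the feasible set of $\mathbf P(\cA^{(*e)})$ coincides with that of $\mathbf P(\cA^{all})$; both equal $\fq{\theta,x}$. Hence the optimal values coincide, and by \eqref{eq:q_star} they equal $\crit(\theta|x)$. For part (iii), I rely on strict concavity: since $\ps{0}$ lies in the interior of the unit simplex, the map $q\mapsto \sum_{y\in\cY}\ps{0}(y)\ln q(y)$ is strictly concave on $\Delta$, while the feasible set is a polytope carved out by the linear (in)equalities in \eqref{eq:const_ineq}--\eqref{eq:const_eq}. Strict concavity on a convex feasible set yields a unique maximizer $\qs{}^*$, and the equivalence of feasible sets from part (ii) implies $\qs{}^*$ also solves $\mathbf P(\cA^{all})$.

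The main obstacle is part (iv). Uniqueness of $\lambda^*$ reduces to verifying a Linear Independence Constraint Qualification at $\qs{}^*$. Assumption~\ref{ass:finite_support}-(e) gives $\qs{}^*(y)>c>0$ for all $y$, so the non-negativity constraints are slack; the only active constraints are the normalization $\sum_y q(y)=1$ (with representer $\mathbf{1}$, i.e.\ the representer of $\cY$), the equality constraints indexed by $\cA^{(*e)}_=$, and the subset of inequality constraints from $\cA^{(*e)}_\ge$ binding at $\qs{}^*$. By Lemma~\ref{lemma:linear_indep}, the collection of representers of these active sets together with $\mathbf{1}$ is linearly independent unless one of them can be written as a union of the others. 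The plan is to rule this out using minimality of $\cA^{(*e)}$: any such redundancy would, via inclusion-exclusion, express one active constraint as an affine combination of the remaining ones, so the corresponding set in $\cA^{(*e)}$ could be dropped without altering $\core(\contf(\cdot))$, contradicting the exactness (minimal cardinality) property delivered by \citet{luo:pon:wan25}. The delicate point will be treating $\cY$ (whose representer coincides with the normalization gradient) on the same footing as the sets in $\cA^{(*e)}$; this is handled by observing that any nontrivial linear dependence among the active representers would, by Lemma~\ref{lemma:linear_indep}, correspond to a set-union identity among the underlying sets, which again contradicts minimality of $\cA^{(*e)}$. Once LICQ is established, the KKT system pins down $\lambda^*$ uniquely.
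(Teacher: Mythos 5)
Your parts (i)--(iii) track the paper's proof essentially verbatim: (i) is Corollary 1.1 of \citet{luo:pon:wan25} combined with Assumption \ref{ass:finite_support}-(b), and (ii)--(iii) are the same feasible-set-equivalence and strict-concavity arguments. Part (iv) also shares the paper's overall strategy (LICQ via Lemma \ref{lemma:linear_indep} plus exactness of $\cA^{(*e)}$), but your treatment of the normalization constraint---the point you yourself flag as delicate---has a genuine gap, and it is exactly the point where the paper's proof contains a step that your plan omits.

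The gap: you keep the simplex normalization $\qs{}^\top\one=1$ as a separate active constraint and claim that any nontrivial dependence among the active representers, now including $\one=b_{\cY}$, yields a set-union identity that ``again contradicts minimality of $\cA^{(*e)}$.'' That is false when the union identity is $\cY=\cup_j A_j$ with $A_j\in\cA^{(*e)}$: exactness only rules out dropping a \emph{member} of $\cA^{(*e)}$, and $\cY$ is not a member---it is a domain constraint---so no contradiction is available. Worse, this case actually arises in the paper's leading example: whenever the model implies an equality restriction (e.g., $Q(\{(0,0)\}|x)=\f(S_{\{(0,0)\}|x;\theta})$ in Example \ref{example:CT}), a core determining class must contain both $A=\{(0,0)\}$ and its complement $A^c$ (the first delivers the lower bound on $Q(A)$, the second the upper bound), both constraints bind at $\qs{}^*$, and then $b_A+b_{A^c}-\one=0$ is a nontrivial dependence among active gradients. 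In your formulation LICQ genuinely fails and the multipliers genuinely are non-unique: one can shift the multipliers on $A$, on $A^c$, and on the normalization along this dependence direction while preserving stationarity, complementary slackness, and signs. The paper closes this hole \emph{before} invoking Lemma \ref{lemma:linear_indep}: it rewrites the constraints in the two-sided ``sandwich'' form $\capf(A)\ge\qs{}^\top b_A\ge\contf(A)$ in \eqref{eq:Artstein_sandwich}, indexed only by sets with $1\le|A|\le\lceil|\cY|/2\rceil$, so that each complement pair is merged into a single representer $b_A$ (a single equality constraint when $\contf(A)=\capf(A)$, otherwise at most one binding side). Only after this reduction does a residual dependence among active gradients force, via Lemma \ref{lemma:linear_indep}, a union identity \emph{within} $\cA^{(*e)}$, where exactness delivers the contradiction. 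You need this complement-merging step (or an equivalent elimination of the normalization, e.g., substituting out one coordinate of $q$) for part (iv) to go through; minimality of $\cA^{(*e)}$ alone cannot do the job.
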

\begin{proof}
\textbf{Part (i).}
A family of closed sets $\cA^*$ is a \emph{core determining class} \citep{gal:hen11} if any probability measure $Q$ defined on $\cY$ satisfying the inequalities $Q(A)\ge\contf(A)~~\text{for all}~A\in\cA^*$
satisfies the inequalities $Q(A)\ge\contf(A)$ for all closed sets $A\subseteq \cY$.
A family of closed sets $\cA$ is a \emph{smallest} core determining class \citep*{luo:pon:wan25} if it has the smallest cardinality among all core determining classes.
Part (i) follows directly by Corollary 1.1 in \citet*{luo:pon:wan25}, due to Assumption~\ref{ass:finite_support}-e, as their result shows that if the support of the random set $\G(\eu|x;\theta)$, conditional on $\ex = x$ does not depend on $\theta$, neither does the smallest core determining class.

\textbf{Part (ii).}
By the definition of a smallest core determining class, the collection of inequalities in $\cA^{(*e)}$ yields the same constraint set as in \eqref{eq:conditional_L}.
The solution to $\mathbf P(\mathcal A^{(*e)})$ exists by the continuity of the objective function and the compactness of the probability simplex.

\textbf{Part (iii).}
As $\qs{} \mapsto \E\ln \qs{}$ is strictly concave and the domain of $\qs{}$ is convex, uniqueness of $\qs{}^*$ follows. 
As the collection of inequalities in $\cA^{(*e)}$ yields the same constraint set as in \eqref{eq:conditional_L}, $\qs{}^*$ solves also the original problem in \eqref{eq:conditional_L}.

\textbf{Part (iv).}
We show that the Linear Independence Constraint Qualification (LICQ) holds and hence the Karush-Kuhn-Tucker conditions hold at the feasible point $\qs{}^*$ with a unique Lagrange multiplier $\lambda^*$. 
Consider first the case $\qs{}^*\in\ri(\fq{\theta})$.
Let $\cA^{(*e)}_=$ collect all equalities and implicit equalities in $\cA^{(*e)}$ and $\cA^{(*e)}_\ge=\cA^{(*e)}\setminus\cA^{(*e)}_=$. 
By Assumption~\ref{ass:finite_support}-(d), the non-negativity constraints do not bind at $\qs{}^*$. By Lemma~\ref{lem:ri-poly}, for any $A\in\cA^{(*e)}_{\ge}$, $(\qs{}^*)^\top b_A>\contf(A)$. Hence, $\qs{}^*$ is defined by restrictions associated with events in $\cA^{(*e)}_{=}$ such that $q^\top b_A=\contf(A)$ for all $q\in\fq{\theta}$.
Since the constraints are affine, their gradients are constant and equal to $b_A$.
Assume by contradiction that $\{b_A:A\in\cA^{(*e)}_=\}$ is linearly dependent. Then there exists $\tilde{A}\in\cA^{(*e)}_=$ such that $b_{\tilde{A}}\in \operatorname{span}\{b_A:A\in\cA^{(*e)}_=\setminus\{\tilde{A}\}\}$, and hence there exist scalars $\{\beta_A\}_{A\in\cA^{(*e)}_=\setminus\{\tilde{A}\}}$ such that $b_{\tilde{A}}=\sum_{A\in\cA^{(*e)}_=\setminus\{\tilde{A}\}}\beta_A b_A$.
Let $\fq{\theta}^{-\tilde{A}}$ denote the feasible set in \eqref{eq:remove_eq} associated with $\cA^{(*e)}$ obtained by removing the equality constraint for $\tilde{A}$ from $\cA^{(*e)}_=$.
We show $\fq{\theta}^{-\tilde{A}}=\fq{\theta}$, which contradicts $\cA^{(*e)}$ being a smallest core determining class.
The inclusion $\fq{\theta}^{-\tilde{A}}\supseteq\fq{\theta}$ is immediate. To show $\fq{\theta}^{-\tilde{A}}\subseteq\fq{\theta}$, take any $q\in\fq{\theta}^{-\tilde{A}}$.
Left multiplying $b_{\tilde{A}}$ by $q^\top$ gives $q^\top b_{\tilde{A}}=\sum_{A\in\cA^{(*e)}_=\setminus\{\tilde{A}\}}\beta_A(q^\top b_A)=\sum_{A\in\cA^{(*e)}_=\setminus\{\tilde{A}\}}\beta_A\contf(A)\equiv\bar\beta$, so $q^\top b_{\tilde{A}}$ is constant on $\fq{\theta}^{-\tilde{A}}$ and equal to $\bar\beta$.
Since $\qs{}^*\in\fq{\theta}\subseteq\fq{\theta}^{-\tilde{A}}$, it also satisfies the removed constraint: $(\qs{}^*)^\top b_{\tilde{A}}=\contf(\tilde{A})$. By the preceding calculation applied to $\qs{}^*$, $(\qs{}^*)^\top b_{\tilde{A}}=\bar\beta$. Hence $\bar\beta=\contf(\tilde{A})$, and therefore every $\qs{}\in\fq{\theta}^{-\tilde{A}}$ satisfies $\qs{}^\top b_{\tilde{A}}=\bar\beta=\contf(\tilde{A})$. Hence, $\fq{\theta}^{-\tilde{A}}= \fq{\theta}$, contradicting that $\cA^{(*e)}$ is a smallest core determining class.
Therefore, $\{b_A:A\in\cA^{(*e)}_=\}$ must be linearly independent and LICQ holds at $\qs{}^*$.

Finally, consider the case that $\qs{}^*\in\operatorname{rbd}(\fq{\theta})\equiv\fq{\theta}\setminus\ri(\fq{\theta})$. 
By Assumption~\ref{ass:finite_support}-(f), the set of equalities and inequalities active at $\qs{}^*$ is irreducible per Definition~\ref{def:irredundant}. Hence, the same argument as in the previous case applies by replacing $\cA^{(*e)}_=$ with $\cA^{(*e)}_=\cup\cA_{\mathrm{bind}}^{(*e)}(\qs{}^*)$.
\end{proof}

\begin{lemma}\label{lemma:continuity}
    Under the assumptions of Theorem~\ref{thm:argmin_stability}, and recalling that for some $\underline{c}>0$, $\inf_{x\in\cX}\min_{y\in\cY}\ps{0,y|x}(y|x)\ge\underline{c}$, $\phi(\cdot,\cdot)$ as defined in \eqref{eq:Theta_star_param_prob} is jointly continuous in $(\theta,\ps{})$.
\end{lemma}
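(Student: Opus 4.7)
The plan is to exploit the profiling representation, obtained by swapping the outer integration in $x$ with the inner infimum over the pointwise-in-$x$ constraint set:
\begin{equation*}
\phi(\theta, p) = \int_{\cX}\ps{x}(x)\,\psi(\theta, p(\cdot|x); x)\,d\xi(x),\qquad \psi(\theta, p(\cdot|x); x) \equiv \inf_{q\in\fq{\theta,x}}\sum_{y\in\cY} p(y|x)\ln\tfrac{p(y|x)}{q(y)}.
\end{equation*}
I would first establish that the conditional value $\psi$ is jointly continuous in $(\theta, p(\cdot|x))$ for each $x$ via Berge's maximum theorem, and then transfer this to joint continuity of $\phi$ by dominated convergence.

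For the inner step, the two key inputs are Assumption~\ref{ass:finite_support}(b), which forces the collection of events $\cA_\G(x)$ characterizing $\fq{\theta,x}$ not to vary with $\theta$, and Assumption~\ref{ass:finite_support}(c), which makes each right-hand side $\contf(A|x)$ continuous in $\theta$. Since $\fq{\theta,x}$ is a polytope inside the finite-dimensional simplex on $\cY$ with the same combinatorial type for every $\theta\in\Theta$, upper hemicontinuity of $\theta\mapsto \fq{\theta,x}$ is immediate; lower hemicontinuity follows from a standard polyhedral perturbation argument leveraging uniqueness of the projection $\qs{\theta,y|x}^*$ and of its Lagrange multipliers delivered by Lemma~\ref{lem:licq}. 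The KL objective is continuous on the compact region where $q(y)\ge c$ (Assumption~\ref{ass:finite_support}(e)) and $p(y|x)\in[0,1]$, and $\fq{\theta,x}$ is compact, so Berge's maximum theorem yields joint continuity of $\psi$.

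For the outer integration, the integrand is uniformly bounded: $\qs{\theta,y|x}^*>c$ and $p(y|x)\in[0,1]$ imply $|\psi(\theta, p(\cdot|x); x)|\le |\cY|(|\ln c|+e^{-1})$. Along any sequence $(\theta_n, p_n) \to (\theta, p)$ in $\Theta\times\cH$ whose topology implies $\xi$-a.e.\ convergence of $p_n(\cdot|x)$ to $p(\cdot|x)$ together with $\ps{n,x}\to \ps{x}$ in $L^1(\xi)$, pointwise continuity of $\psi$ and the uniform bound yield $\phi(\theta_n, p_n)\to\phi(\theta,p)$ by dominated convergence. The main obstacle is the lower hemicontinuity of $\theta\mapsto \fq{\theta,x}$: at parameter values where an active constraint could switch from strict to binding, the feasible set might shrink discontinuously. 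Assumption~\ref{ass:finite_support}(b) and the uniqueness of multipliers from Lemma~\ref{lem:licq} are precisely what prevent such a combinatorial change along $\theta_n\to\theta$ and enable the perturbation argument that delivers $q_n\in\fq{\theta_n,x}$ converging to any chosen $q\in\fq{\theta,x}$.
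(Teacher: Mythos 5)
Your overall architecture—profile into the inner conditional value $\psi(\theta,p(\cdot|x);x)$, establish its continuity for each $x$, then pass to $\phi$ by dominated convergence with the uniform bound $|\psi|\le d_Y\ln(1/c)$—matches the paper's decomposition, and your outer step (splitting off the $\ps{x}$-marginal term and using the bound plus a.e.\ convergence) is sound. However, your inner step is where you depart from the paper, and as written it has two genuine gaps. First, Berge's maximum theorem cannot be applied directly to the problem $\inf_{q\in\fq{\theta,x}}\sum_y p(y|x)\ln\tfrac{p(y|x)}{q(y)}$, because the objective is \emph{not} continuous on the graph of the correspondence: feasible $q\in\fq{\theta,x}$ may have components arbitrarily close to (or equal to) zero where $p(y|x)>0$, and there the KL objective diverges. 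Assumption~\ref{ass:finite_support}(e) bounds only the \emph{optimizer} $\qs{\theta,y|x}^*$ away from zero, not the feasible set; so you must first argue that the infimum is unchanged when $\fq{\theta,x}$ is replaced by $\fq{\theta,x}\cap\{q:q(y)\ge c\ \forall y\}$, and then re-establish continuity (in particular lower hemicontinuity) of that intersected correspondence—neither step appears in your sketch. Second, your justification of lower hemicontinuity is the wrong mechanism: uniqueness of the KL projection and of its Lagrange multipliers (Lemma~\ref{lem:licq}) is a statement about the optimization problem, not about the feasible-set correspondence, and it does not ``prevent combinatorial changes''—the active constraint pattern genuinely changes with $\theta$ (this is exactly the partition $\Theta_1,\Theta_2,\Theta_3$ in \eqref{eq:def_Theta1}--\eqref{eq:def_Theta3}). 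What actually delivers lower hemicontinuity is polyhedral stability under right-hand-side perturbations (Hoffman's error bound, or a Slater-point argument): the constraint matrix is $\theta$-invariant by Assumption~\ref{ass:finite_support}(b) via Lemma~\ref{lem:licq}(i), the right-hand sides $\contf(A|x)$ are continuous in $\theta$ by Assumption~\ref{ass:finite_support}(c), and the sets are nonempty for every $\theta$; no uniqueness condition is involved.

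For contrast, the paper avoids correspondence continuity altogether and argues quantitatively: the $\theta$-direction is handled by the mean-value theorem applied to $\theta\mapsto v(\theta,p_{y|x},x)$, using the differentiability and score representation from Theorem~\ref{thm:score_characterization} together with the uniform second-moment bound $\sup_{\theta\in\Theta}\E[\|\score{\theta}{\ey|\ex;\ps{y|x}}\|^2]\le M$ hypothesized in Theorem~\ref{thm:argmin_stability} (a hypothesis your argument never uses, which signals the divergence from the intended proof); the $p$-direction is handled by an explicit modulus-of-continuity bound for the KL divergence in its first argument, $|v(\theta,\ps{y|x};x)-v(\theta,\ps{y|x}';x)|\le\omega(\|\ps{y|x}-\ps{y|x}'\|_1)$ with $\omega(v)=v\ln\tfrac{d_Y}{v}+v\ln\tfrac{1}{c}$ (Csisz\'ar--K\"orner), which is uniform in $x$. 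Your Berge route could be repaired along the lines above, but the two steps you elide—objective continuity on the (truncated) feasible set, and a correct stability argument for the polyhedral correspondence—are precisely the substantive content such a proof would need.
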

\begin{proof}
Passing the infimum in \eqref{eq:Theta_star_param_prob} inside the integral and recalling $\fq{\theta,x}$ from \eqref{eq:core_density}, we can write $\phi(\theta,\ps{})=\int_\cX\ps{x}(x)v(\theta,\ps{y|x},x)d\xi(x)$, where $v(\theta,\ps{y|x};x)=\inf_{\qs{y|x}\in\fq{\theta,x}}I(\ps{y|x}||q_{y|x})$ and, denoting $H(\ps{y|x})\equiv-\int_{\cY}\ps{y|x}(y|x)\ln \ps{y|x}(y|x)d\mu(y)$,
\begin{align}
I(\ps{y|x}||q_{y|x})=-H(\ps{y|x})-\int_\cY\ps{y|x}(y|x)\ln\qs{y|x}(y|x)d\mu(y).	\label{eq:entropy1cont}
\end{align}
Let $(\theta_n,p_n)$ be a sequence such that $(\theta_n,p_n)\to (\theta,p)$, where $p_{n,y|x}(y|x)-p_{y|x}(y|x)\to 0$ uniformly in $y$ and $x,$ and $\| p_{n,x}-p_{x}\|_{L^1_\xi}\to 0$. Then,
\vspace{-.5cm}
\begin{multline*}
\phi(\theta_n,p_n)-\phi(\theta,p)=\int_\cX p_{n,x}(x)v(\theta_n,\ps{n,y|x},x)d\xi(x)-\int_\cX\ps{x}(x)v(\theta,\ps{y|x},x)d\xi(x)\\
= \underbrace{\int_\cX p_{n,x}(x)[v(\theta_n,\ps{n,y|x},x)-v(\theta,\ps{y|x},x)]d\xi(x)}_{(i)}+\underbrace{\int_\cX [p_{n,x}(x)-\ps{x}(x)]v(\theta,\ps{y|x},x)d\xi(x)}_{(ii)}.
\end{multline*}
Using Assumption~\ref{ass:finite_support}-(d), $|v(\theta,\ps{y|x},x)|\le \sum_{y\in\cY}\ps{y|x}(y|x)\ln \tfrac{1}{c}\le d_Y\ln \tfrac{1}{c}=:K$, and $(ii)$ can be bounded by  $K\| p_{n,x}-p_{x}\|_{L^1_\xi}\to 0$. Term $(i)$ can be further decomposed as follows:
\small{\begin{align*}
\hspace{-.1cm}\underbrace{\int_\cX p_{n,x}(x)[v(\theta_n,\ps{n,y|x},x)-v(\theta,\ps{n,y|x},x)]d\xi(x)}_{(i,a)}
+\underbrace{\int_\cX p_{n,x}(x)[v(\theta,\ps{n,y|x},x)-v(\theta,\ps{y|x},x)]d\xi(x)}_{(i,b)}.
\end{align*}}
Recall that $v(\theta_n,\ps{n,y|x},x)=-H(\ps{n,y|x})-\mathbb E_{\ps{n,y|x}}[\ln \qs{\theta_n,y|x}^*(Y|x)|x].$
By the mean-value theorem and Theorem~\ref{thm:score_characterization}, there exists $\tilde \theta_n$ between $\theta_n$ and $\theta$ such that
\begin{align*}
(i,a)\le \int_{\cX} p_{n,x}(x)\big|\mathbb E_{\ps{n,y|x}}[s_{\tilde\theta_n,y|x}(Y|x;\ps{n,y|x})|x]\big|\|\theta_n-\theta\|d\xi(x)
\le M^{1/2}\|\theta_n-\theta\|,
\end{align*}
where the last inequality follows because $\sup_{\theta\in\Theta}\E[\|\score{\theta}{\ey|\ex;\ps{y|x}}\|^2]\le M$ and 
\begin{multline*}
\|E_{\ps{n,y|x}}[s_{\tilde\theta_n,y|x}(Y|X;\ps{n,y|x})|X]\|_{L^1_{\ps{n,y|x}}}\le 	\|E_{\ps{n,y|x}}[s_{\tilde\theta_n,y|x}(Y|X;\ps{n,y|x})|X]\|_{L^2_{\ps{n,y|x}}}
\le M^{1/2}.
\end{multline*}
Hence, $(i,a)$ tends to 0 as $\theta_n\to \theta$.
To show that $(i,b)$ vanishes, let $\|\ps{y|x}-\ps{y|x}'\|_1=\sum_{y\in\cY}|\ps{y|x}(y|x)-\ps{y|x}'(y|x)|$. 
Given the lower bound on $\ps{0,y|x}(y|x)$, for $n$ large enough, $\inf_{x\in\cX}\min_{y\in\cY}\ps{n,y|x}(y|x)\ge\underline{c}/2$. By~\eqref{eq:entropy1cont} and $\ps{0,y|x}(y|x)\ln q(y|x)\le 0$ for all $y\in\cY$, $\qs{n,\theta,y|x}^*(y|x)\ge\beta>0$ for a constant $\beta$ that depends on $(\underline{c},c,d_Y)$, and $\qs{n,\theta,y|x}^*=\arg\inf_{q\in\fq{\theta,x}}I(\ps{n,y|x}||q)$.
Take $c^*=\min\{\beta,c\}$ and let $\fq{\theta,x,c^*}=\{q\in\fq{\theta,x}:q(y|x)\ge c^*,\forall y\in\cY\}$.
For any $\theta$ and $\ps{y|x},\ps{y|x}'$ such that $\|\ps{y|x}-\ps{y|x}'\|_1\le 1/2$,
\begin{align*}
\Big|\inf_{\qs{y|x}\in\fq{\theta,x,c^*}}I(\ps{y|x}||q_{y|x})-\inf_{\qs{y|x}\in\fq{\theta,x,c^*}}I(\ps{y|x}'||q_{y|x})\Big|
&\le \sup_{\qs{y|x}\in\fq{\theta,x,c^*}} |I(\ps{y|x}||q_{y|x})-I(\ps{y|x}'||q_{y|x})|\\
&\le \omega(\|\ps{y|x}-\ps{y|x}'\|_1),
\end{align*}
where $\omega(v)=v \ln \tfrac{d_Y}{v}+v \ln \tfrac{1}{c^*}$ by Lemma 2.7 in \cite{csiszar2011information} and \eqref{eq:entropy1cont}, and the convergence is uniform in $x$ as we assumed $p_{n,y|x}(y|x)-p_{y|x}(y|x)\to 0$ uniformly in $y$ and $x$. 
\end{proof}

\subsection{Derivation of Results and Verification of Conditions for the Entry Game Example \ref{example:CT}}
\label{appendix:entry_game}
\begin{proposition}\label{prop:games_profiled}
Under the assumptions laid out in Example \ref{example:CT}, (i) the profiled likelihood $\qs{\theta,y|x}^*$ for $y\in\{(0,0),(0,1),(1,0),(1,1)\}$ is given in equations \eqref{eq:eg_prof_likelihood1}-\eqref{eq:eg_prof_likelihood4}.
(ii) The score function is given in equations \eqref{eq:eg_score1}-\eqref{eq:eg_score4}.
\end{proposition}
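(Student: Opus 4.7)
The plan is to solve the convex KL-projection program explicitly by reducing it to a one-dimensional optimization, then compute the score by differentiating the log of the resulting profiled likelihood inside each of the three parameter regions $\Theta_1, \Theta_2, \Theta_3$.

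For part (i), I would start by observing that the constraint set $\mathfrak{q}_{\theta,x}$ in \eqref{eq:frak_q_CT} imposes two equality constraints that pin down $q_{y|x}((0,0)|x)$ and $q_{y|x}((1,1)|x)$, yielding \eqref{eq:eg_prof_likelihood1} and \eqref{eq:eg_prof_likelihood2} immediately. The remaining freedom is a single scalar $q_{10} \equiv q_{y|x}((1,0)|x)$, with $q_{y|x}((0,1)|x) = \eta_1(\theta;x) - q_{10}$ forced by the simplex constraint (recall $\eta_1(\theta;x) = 1 - F_\theta(S_{\{(0,0)\}|x;\theta}) - F_\theta(S_{\{(1,1)\}|x;\theta})$), and a box constraint $q_{10} \in [\eta_3(\theta;x), \eta_2(\theta;x)]$. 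Dropping terms that do not depend on $q_{10}$, the KL objective in \eqref{eq:q_star} becomes
\begin{align*}
-\,p_{0,y|x}((1,0)|x)\ln q_{10} - p_{0,y|x}((0,1)|x)\ln\bigl(\eta_1(\theta;x) - q_{10}\bigr),
\end{align*}
which is strictly convex in $q_{10}$. Setting the derivative to zero yields the unconstrained minimizer
$\tilde q_{10} = \tfrac{p_{0,y|x}((1,0)|x)}{p_{0,y|x}((1,0)|x)+p_{0,y|x}((0,1)|x)}\,\eta_1(\theta;x)$. The three cases in \eqref{eq:eg_prof_likelihood3}-\eqref{eq:eg_prof_likelihood4} then correspond exactly to whether $\tilde q_{10}$ lies in the interior of $[\eta_3,\eta_2]$ (so $\theta\in\Theta_1$), strictly above $\eta_2$ (so $\theta\in\Theta_2$, projecting to the upper endpoint), or strictly below $\eta_3$ (so $\theta\in\Theta_3$, projecting to the lower endpoint). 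The defining inequalities \eqref{eq:def_Theta1}-\eqref{eq:def_Theta3} are precisely these three cases after rearranging the expression for $\tilde q_{10}$.

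For part (ii), I would appeal to Theorem \ref{thm:score_characterization}(i), which tells us that $L(\theta|x)$ is differentiable and the score is essentially $\nabla_\theta \ln q^*_{\theta,y|x}(y|x)$ in the interior of each region where the active set of constraints in the program $\mathbf{P}(\mathcal{A}^{(*e)})$ is locally constant (so the Lagrange multiplier structure in the proof of Theorem \ref{thm:score_characterization} does not change). Inside $\Theta_1(x,p_0)$, the $p_{0,y|x}$-dependent factor in $q^*_{\theta,y|x}((1,0)|x)$ and $q^*_{\theta,y|x}((0,1)|x)$ is constant in $\theta$, so differentiating $\ln q^*$ yields $\nabla_\theta \eta_1/\eta_1$ for both outcomes. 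Inside $\Theta_2$ and $\Theta_3$, $q^*$ equals $\eta_2$, $\eta_3$, $\eta_1-\eta_2$, or $\eta_1-\eta_3$ directly, and differentiation produces \eqref{eq:eg_score3}-\eqref{eq:eg_score4} line by line. The expressions for the $(0,0)$ and $(1,1)$ scores in \eqref{eq:eg_score1}-\eqref{eq:eg_score2} follow immediately from \eqref{eq:eg_prof_likelihood1}-\eqref{eq:eg_prof_likelihood2} since these components of $q^*$ are $p_0$-independent.

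The only delicate point is justifying that the naive differentiation of $\ln q^*_{\theta,y|x}$ coincides with the Lagrangian-based score representation of Theorem \ref{thm:score_characterization} at every $\theta$ in the interior of $\Theta_i(x,p_0)$. This reduces to verifying the hypotheses of that theorem for this example, namely Assumption \ref{ass:finite_support}. Parts (a), (c), (d), and (e) are essentially by inspection (finite $\mathcal{Y}$, smoothness of the Gaussian CDF through the $F_\theta$ terms, and the lower bounds on $\eta_j$ discussed following \eqref{eq:eg_prof_likelihood4}); part (b) holds whenever $\delta_1,\delta_2<0$, so the support of $G(\cdot|x;\theta)$ is the fixed collection $\mathcal{A}_G = \{\{(0,0)\},\{(0,1)\},\{(1,0)\},\{(1,1)\},\{(1,0),(0,1)\}\}$. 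Once Assumption \ref{ass:finite_support} is in place, Lemma \ref{lem:licq} supplies the unique Lagrange multiplier vector whose existence underlies the envelope calculation, so the representation $s_\theta(y|x;p_{0,y|x}) = \nabla_\theta \ln q^*_{\theta,y|x}(y|x)$ is valid throughout the interior of each region, completing the derivation.
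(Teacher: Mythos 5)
Your proposal is correct, and part (i) is essentially the paper's own argument: the paper also eliminates the two equality-determined components, reduces the program to the scalar $z=\qs{y|x}((1,0)|x)$ with $\qs{y|x}((0,1)|x)=\eta_1(\theta;x)-z$, and recovers \eqref{eq:eg_prof_likelihood1}--\eqref{eq:eg_prof_likelihood4} by solving the KKT conditions in three cases ($\lambda_1=\lambda_2=0$; $\lambda_2>0$; $\lambda_1>0$), which is exactly your clamp-the-unconstrained-optimum argument expressed with multipliers. For part (ii), however, you take a genuinely different route. The paper does not invoke Theorem \ref{thm:score_characterization} at all: it re-runs the envelope argument directly on the reduced one-dimensional program, setting $\theta(t)=\theta+th$, checking the inf-boundedness condition of \citet{Rockafellar1984} and directional stability, applying \citet[Corollary 4.2]{Gauvin:1990uz} together with uniqueness of $(\qs{}^*,\lambda^*)$, and then reading the score off $\tfrac{d}{dt}\mathcal{L}$ at the optimum, again case by case in the multipliers. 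Your version instead verifies Assumption \ref{ass:finite_support} for the example and imports the general machinery (Theorem \ref{thm:score_characterization} and Lemma \ref{lem:licq}), after which the formulas follow by elementary differentiation of the closed-form $\ln\qs{\theta,y|x}^*$ inside each region. Both are valid, but note where the work actually lies: writing $p_{10}$ and $p_{01}$ for $\ps{0,y|x}((1,0)|x)$ and $\ps{0,y|x}((0,1)|x)$, on the crossing surface between $\Theta_1$ and $\Theta_2$ the per-outcome map $\theta\mapsto \qs{\theta,y|x}^*((1,0)|x)$ has one-sided derivatives $\tfrac{p_{10}}{p_{10}+p_{01}}\nabla_\theta\eta_1(\theta;x)$ and $\nabla_\theta\eta_2(\theta;x)$, which generically differ, so $\ln\qs{\theta,y|x}^*(y|x)$ is kinked there and only the $\ps{0}$-weighted sum (the expected score, equivalently the Lagrangian derivative) is differentiable — the kinks cancel in expectation precisely because $\eta_2=\tfrac{p_{10}}{p_{10}+p_{01}}\eta_1$ on that surface. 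Your write-up locates the ``delicate point'' in the interiors of the $\Theta_i(x,\ps{0})$, where naive differentiation is in fact trivially justified by linearity of the finite sum; the real delicacy is at the region boundaries, and that is exactly what your appeal to Theorem \ref{thm:score_characterization} (or the paper's direct Gauvin argument) is needed for. With that point stated explicitly, your derivation is complete and yields the same formulas \eqref{eq:eg_score1}--\eqref{eq:eg_score4}.
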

\begin{proof}
In this example $\qs{y|x}((0,1)|x) =\eta_1(\theta;x)-\qs{y|x}((1,0)|x)$.
Let $z\equiv\qs{y|x}((1,0)|x)$ 
and $c(\theta)=\ps{0,y|x}((0,0)|x)\ln\f(S_{\{(0,0)\}|x;\theta})+\ps{0,y|x}((1,1)|x)\ln \f(S_{\{(1,1)\}|x;\theta})$.
Using that $\sum_{\tilde{y}} \qs{y|x}(\tilde{y}|x)=1$, 
rewrite the optimization problem as
\vspace{-.25cm}
\begin{align*}
V(\theta)=	\sup_{z}&~c(\theta)+ \ps{0,y|x}((1,0)|x)\ln z+\ps{0,y|x}((0,1)|x)\ln (\eta_1(\theta;x)-z)\\
	s.t.& ~z-\eta_3(\theta;x)\ge 0 \quad\text{ and }\quad\eta_2(\theta;x)-z\ge 0,
\end{align*}
Define the Lagrangian of this problem by
\vspace{-.25cm}
\begin{multline*}
\mathcal L(z,\lambda,\theta)=c(\theta)+	\ps{0,y|x}((1,0)|x)\ln z+\ps{0,y|x}((0,1)|x)\ln (\eta_1(\theta;x)-z)\\
+\lambda_1(z-\eta_3(\theta;x))+\lambda_2(\eta_2(\theta;x)-z).
\end{multline*}

\noindent\emph{Part (i).}
Since $c(\theta)$ does not affect the solution, we drop it in what follows.
The Karush-Kuhn-Tucker (KKT) conditions of this problem are $\ps{0,y|x}(1,0|x)\tfrac{1}{z}-\ps{0,y|x}(0,1|x)\tfrac{1}{\eta_1(\theta;x)-z}+\lambda_1-\lambda_2=0$, $\lambda_1(z-\eta_3(\theta;x))=0$, $\lambda_2(\eta_2(\theta;x)-z)=0$, and $\lambda_1,\lambda_2\ge 0$.
Then, \eqref{eq:eg_prof_likelihood1}-\eqref{eq:eg_prof_likelihood4} are obtained by solving the KKT conditions for three cases: (1) $\lambda_1=\lambda_2=0$; (2) $\lambda_2>0$; (3) $\lambda_1>0$. \citep[See][pp.6-7, for a full derivation.]{KM24}

\noindent\emph{Part (ii).} 
To establish this result, we let $\theta(t)=\theta+th, h\in \mathbb R^d$ and define
$
\tilde V(t)=V(\theta(t)),~\tilde{\mathcal L}(z,\lambda,t)=\mathcal L(z,\lambda,\theta(t))
$,
so that the derivative of the Lagrangian becomes
\vspace{-.2cm}
\begin{multline*}
\tilde {\mathcal L}_t(z,\lambda,\theta(t))=\ps{0,y|x}((0,0)|x)\tfrac{\nabla_\theta \f(S_{\{(0,0)\}|x;\theta})^\top h}{\f(S_{\{(0,0)\}|x;\theta})}+\ps{0,y|x}((1,1)|x)\tfrac{\nabla_\theta\f(S_{\{(1,1)\}|x;\theta})^\top h}{ \f(S_{\{(1,1)\}|x;\theta})}\\
+\ps{0,y|x}((0,1)|x)\tfrac{\nabla_\theta \eta_1(\theta;x)^\top h}{ \eta_1(\theta;x)-z}-\lambda_1 \nabla_\theta\eta_3(\theta;x)^\top h+\lambda_2\nabla_\theta\eta_2(\theta;x)^\top h.
\end{multline*}
Hence, for any sequence $\{t_n\}$ with $t_n\downarrow 0$, the maximizer of $\mathcal L(z,\lambda,\theta+t_nh)$ exists. 
The domain of the control variable and parameter $[0,1]\times (-\epsilon,\epsilon)$ is bounded, hence the inf-boundedness assumption of \cite{Rockafellar1984} holds, ensuring that the parametric optimization problem indexed by $t$ is directionally stable in the sense of \cite{Gauvin:1990uz}.
Using that $(\qs{}^*,\lambda^*)$ are unique as shown above, we apply \citet[Corollary 4.2]{Gauvin:1990uz} to obtain full differentiability of $V$.
The derivative of $\tilde V(t)$ can be obtained analytically solving three cases: (1) $\lambda_1=\lambda_2=0$; (2) $\lambda_1=0,\lambda_2>0$; (3) $\lambda_1>0,\lambda_2=0$. \citep[See][p.8, for a full derivation.]{KM24}
\end{proof}
\vspace{-.5cm}

\begin{proposition}
\label{prop:verify_ex_LHC}
The mapping $\gamma\mapsto\Theta^*(\gamma)$ in Example~\ref{example:CT} on p.~\pageref{example:entry_uniform} is continuous.
\end{proposition}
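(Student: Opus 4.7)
The proof has two parts: deriving the explicit characterization \eqref{eq:ex1:Theta_star}, and then establishing both upper and lower hemicontinuity of $\gamma\mapsto\Theta^*(\ps{\gamma})$.

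\emph{Deriving \eqref{eq:ex1:Theta_star}.} Under $(\eu_1,\eu_2)\sim\mathrm{Uniform}[0,1]^2$ and the payoffs of the specialized example (where the intercept $x_j\beta_j\equiv 0$), direct integration yields $\f(S_{\{(0,0)\}|x;\theta})=0$, $\f(S_{\{(1,1)\}|x;\theta})=(1+\delta_{1,\theta}(x))(1+\delta_{2,\theta}(x))$, $\f(S_{\{(1,0)\}|x;\theta})=(1+\delta_{1,\theta}(x))(-\delta_{2,\theta}(x))$ and $\f(M_{x;\theta})=\delta_{1,\theta}(x)\delta_{2,\theta}(x)$, pinning down closed forms for $\eta_1,\eta_2,\eta_3$. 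At $\ex=0$ all three quantities are independent of $\theta$, so the conditional score at $\ex=0$ vanishes and contributes no restriction; at $\ex=1$, combining \eqref{eq:eg_score3}--\eqref{eq:eg_score4} with the partition \eqref{eq:def_Theta1}--\eqref{eq:def_Theta3} and using $\ps{\gamma}((0,0)|1)=0$, the first-order condition $\E[\score{\theta}{\ey|\ex;\ps{\gamma,y|x}}]=0$ collapses algebraically to the equality-inequality system in \eqref{eq:ex1:Theta_star}.

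\emph{Upper hemicontinuity.} This follows from Theorem~\ref{thm:argmin_stability}. Assumption~\ref{ass:finite_support} was verified in the running commentary on Example~\ref{example:CT}, and since $\alpha_j\equiv 1+\delta_{j,\theta}(1)=0.5+\theta_j\in[0.05,0.5]$ on the compact $\Theta=[-0.45,0]^2$, every denominator in the score formulas \eqref{eq:eg_score1}--\eqref{eq:eg_score4} is bounded away from $0$; hence the score is uniformly bounded and the second-moment bound required by Theorem~\ref{thm:argmin_stability} holds for all $\ps{y|x}$ in a neighborhood of $\ps{\gamma_0,y|x}$.

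\emph{Lower hemicontinuity.} Here I exploit that \eqref{eq:ex1:Theta_star} admits an explicit one-parameter description. Writing $\alpha_j=0.5+\theta_j$, $K(\gamma)\equiv\ps{\gamma}((1,1)|1)$ and $R(\gamma)\equiv\ps{\gamma}((1,0)|1)$, the equality in \eqref{eq:ex1:Theta_star} simplifies (using $\ps{\gamma}((0,0)|1)=0$) to $\alpha_1\alpha_2=K(\gamma)$, and the two inequalities to $\alpha_1(1-\alpha_2)\le R(\gamma)$ and $\alpha_2\le 1-R(\gamma)$. Parametrizing the hyperbola by $\alpha_1$, with $\alpha_2=K(\gamma)/\alpha_1$, the feasible $\alpha_1$-values form an interval $I(\gamma)=[a(\gamma),b(\gamma)]$ with endpoints $a(\gamma)=\max\{0.05,\,2K(\gamma),\,K(\gamma)/(1-R(\gamma))\}$ and $b(\gamma)=\min\{0.5,\,20\,K(\gamma),\,K(\gamma)+R(\gamma)\}$. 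Since $\gamma\mapsto\ps{\gamma}$ is continuous by the explicit mixture formula stated in the example, $K$ and $R$ are continuous, so $a,b$ are continuous as maxima/minima of continuous functions, and $I(\gamma)$ is nonempty near $\gamma_0$ by Theorem~\ref{thm:argmin_stability}. Given $\theta^{(0)}=(\alpha_1^{(0)}-0.5,\alpha_2^{(0)}-0.5)\in\Theta^*(\ps{\gamma_0})$ and $\gamma_n\to\gamma_0$, define $\alpha_1^{(n)}=\max\{a(\gamma_n),\min\{b(\gamma_n),\alpha_1^{(0)}\}\}$ and $\alpha_2^{(n)}=K(\gamma_n)/\alpha_1^{(n)}$; then $\theta^{(n)}\in\Theta^*(\ps{\gamma_n})$ by construction and $\theta^{(n)}\to\theta^{(0)}$ by continuity of $a$, $b$, $K$, establishing lower hemicontinuity.

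\emph{Main obstacle.} The real work is the algebraic reduction of the first-order condition in the first step: one must track which of the three branches in \eqref{eq:eg_prof_likelihood3}--\eqref{eq:eg_prof_likelihood4} is active at $\theta\in\Theta^*(\ps{\gamma})$ and verify that the branches combine into the single system \eqref{eq:ex1:Theta_star} (rather than a proper subset of it). Once \eqref{eq:ex1:Theta_star} is in hand, the continuity argument is a short computation with the explicit endpoints $a(\gamma),b(\gamma)$.
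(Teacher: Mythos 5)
Your proof is correct in substance, and its computational core coincides with the paper's: both arguments reduce $\Theta^*(\ps{\gamma})$, in the coordinates $\vartheta_j=0.5+\theta_j$ (your $\alpha_j$), to the hyperbola $\vartheta_1\vartheta_2=K(\gamma)$ with $\vartheta_1$ ranging over an interval whose endpoints are maxima/minima of functions affine in $\gamma$ --- your $[a(\gamma),b(\gamma)]$ is exactly the paper's $B(\ps{\gamma})$, since $K(\gamma)/0.05=20K(\gamma)$ and $1-\ps{\gamma}((0,1)|1)=K(\gamma)+R(\gamma)$. Where you differ is in how the two hemicontinuities are extracted from this description. The paper writes $\dist(\theta,\Theta^*(\ps{\gamma}))$ as a parametric minimization over $\tilde\vartheta_1\in B(\ps{\gamma})$, applies Berge's maximum theorem to conclude that $\gamma\mapsto\dist(\theta,\Theta^*(\ps{\gamma}))$ is continuous for every fixed $\theta$, and then invokes \citet[Proposition 5.11]{Rockafellar_Wets2005aBK} to obtain upper and lower hemicontinuity in one stroke. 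You instead prove lower hemicontinuity by hand, projecting $\alpha_1^{(0)}$ onto the moving interval and following the hyperbola; this is a sound, constructive substitute for the Rockafellar--Wets step. Upper hemicontinuity, however, you outsource to Theorem~\ref{thm:argmin_stability}, which the paper deliberately does not do.

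That outsourcing is the one genuine weak point. Theorem~\ref{thm:argmin_stability} requires Assumption~\ref{ass:finite_support} in full, including part (d), $\Theta^*(\ps{})\subset\interior\Theta$. Contrary to your claim, part (d) is not verified in the running commentary on Example~\ref{example:CT} (only (a), (b), (c), (e) are), and in this specialized example it in fact fails for typical parameter values: whenever $2K(\gamma)$ is the binding lower endpoint of $I(\gamma)$ (e.g., at $\gamma=0$ with $\theta_{j,0}=-0.2$ one has $K=0.09$, $R=0.21$, so $a=2K=0.18$), the corresponding endpoint of the arc is $(\vartheta_1,\vartheta_2)=(2K,0.5)$, i.e., $\theta_2=0\in\partial\Theta$. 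The same interiority caveat touches your derivation of \eqref{eq:ex1:Theta_star} via the first-order condition, since zero expected score is guaranteed only at interior maximizers; a derivation immune to this observes that the $\ex=0$ contribution to the KL objective is constant in $\theta$, while the $\ex=1$ contribution equals the binary KL divergence between $(K,1-K)$ and $(\vartheta_1\vartheta_2,1-\vartheta_1\vartheta_2)$ whenever the proportional split of $1-\vartheta_1\vartheta_2$ is feasible, and is strictly larger otherwise, so it is minimized (at value zero) exactly on the set \eqref{eq:ex1:Theta_star}. Fortunately the fix for your upper-hemicontinuity step is immediate and uses only what you already have: if $\gamma_n\to\gamma_0$, $\theta^{(n)}\in\Theta^*(\ps{\gamma_n})$, and $\theta^{(n)}\to\theta^{(0)}$, then passing to the limit in $\alpha_1^{(n)}\alpha_2^{(n)}=K(\gamma_n)$ and $\alpha_1^{(n)}\in[a(\gamma_n),b(\gamma_n)]$, using continuity of $K,a,b$, yields $\theta^{(0)}\in\Theta^*(\ps{\gamma_0})$; nonemptiness of $I(\gamma)$ likewise follows because $\Theta^*(\ps{\gamma})$ is the $\argmin$ of a continuous function on the compact $\Theta$, with no appeal to Theorem~\ref{thm:argmin_stability}. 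With those replacements your argument is complete and fully parallel in strength to the paper's.
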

\begin{proof}
Denoting $\vartheta_j=0.5+\theta_j$, we have $\Theta^*(\ps{\gamma})=\Xi^*(\ps{\gamma})-[0.5~0.5]^\top$ for $\Xi^*(\ps{\gamma})\equiv\{\vartheta\in[0.05,0.5]^2:\vartheta_1\vartheta_2=\ps{\gamma}((1,1)|1)$; $\tfrac{\ps{\gamma}((1,1)|1)}{1-\ps{\gamma}((1,0)|1)}\le\vartheta_1\le 1-\ps{\gamma}((0,1)|1)\}$. 
We then have that for any $\theta\in\Theta$ and $\vartheta=[0.5~0.5]^\top+\theta$,
\begin{align}
    \dist(\theta,\Theta^*(\ps{\gamma}))=\dist(\vartheta,\Xi^*(\ps{\gamma}))=\min_{\tilde\vartheta_1\in B(\ps{\gamma})}\left\|\left[\tilde\vartheta_1~~\tfrac{\ps{\gamma}((1,1)|1)}{\tilde\vartheta_1}\right]^\top-\vartheta\right\|,\label{eq:distRW}
\end{align}
with $B(\ps{\gamma})=\Big[\max\Big\{0.05,2\ps{\gamma}((1,1)|1),\tfrac{\ps{\gamma}((1,1)|1)}{1-\ps{\gamma}((1,0)|1)}\Big\},\min\Big\{0.5,\tfrac{\ps{\gamma}((1,1)|1)}{0.05},1-\ps{\gamma}((0,1)|1)\Big\}\Big]$.
Recall that $\ps{\gamma}(y|1)$ is an affine function of $\gamma$ for all $y\in\cY$. Hence, the objective function in the minimization problem in \eqref{eq:distRW} is jointly continuous in $\vartheta_1$ and $\ps{\gamma}$, and $B(\ps{\gamma})$ converges in Hausdorff distance to $B(\ps{0})$ when $\ps{\gamma}\to\ps{\gamma_0}$ (where for non-empty intervals $A=[a_1,a_2],B=[b_1,b_2]\subset\R$, $\dist_H(A,B)=\max\{|a_1-b_1|,|a_2-b_2|\}$), and hence it is a continuous correspondence on $\Gamma$.
It follows that all assumptions of Berge's maximum theorem are satisfied, and $\gamma\mapsto\dist(\theta,\Theta^*(\ps{\gamma}))$ is continuous on $\Gamma$. Hence, by \citet[Proposition 5.11]{Rockafellar_Wets2005aBK}, $\gamma\mapsto\Theta^*(\ps{\gamma})$ is both upper and lower hemicontinuous on $\Gamma$, and hence is continuous at $\gamma=0$.
\end{proof}

\subsubsection{Verification of Assumptions}\label{app:verify:ass:entry}
Assumptions \ref{as:newey_modified1_new}, \ref{as:newey_modified2} \ref{ass:equic}, and the requirement in \eqref{ass:consistent_cov}, are high-level conditions to be verified in each application of our method. 
Below we do so for the entry game example, under some regularity conditions. 
We first provide some notation that will be useful throughout.
Let $\|\ps{y|x}\|_{\mathcal H}=\sup_{y\in \mathcal Y}\sup_{x\in\cX}|\ps{y|x}(y|x)|$.
For $j=1,2,$ we let:
	\begin{align}
\hspace{-.69cm}Z_j(x;\ps{y|x})\equiv \ps{y|x}((1,0)|x)\eta_1(\theta;x)
-(\ps{y|x}((1,0)|x)+\ps{y|x}((0,1)|x))\eta_{j+1}(\theta;x),\label{eq:defZ}
	\end{align}
with the functions $\eta_1(\theta;x),\eta_2(\theta;x),\eta_3(\theta;x)$ defined in \eqref{eq:eta1}-\eqref{eq:eta3}. As $\eta_2(\theta;x)\ge\eta_3(\theta;x)$, $Z_1(x;\ps{y|x})\le Z_2(x;\ps{y|x})$.
We use the functions $Z_1,Z_2$ to define indicators $\mathbb I_\ell$ that re-express the sets $\Theta_\ell,\ell=1,2,3,$ in \eqref{eq:def_Theta1}-\eqref{eq:def_Theta3}:
\begingroup
\allowdisplaybreaks\begin{align}
    \mathbb I_1(x;\ps{y|x})
	&=1\big\{Z_1(x;\ps{y|x})\le 0\}1\{Z_2(x;\ps{y|x})\ge 0\}\label{eq:defI1}\\
    \mathbb I_2(x;\ps{y|x})
	&=1\{Z_1(x;\ps{y|x})>0\}\label{eq:defI2}\\
    \mathbb I_3(x;\ps{y|x})
	&=1\{Z_2(x;\ps{y|x})<0\}.\label{eq:defI3}
\end{align}
\endgroup
One may rewrite the score functions in \eqref{eq:eg_score1}-\eqref{eq:eg_score4} as
\begin{align*}
\score{\theta}{y|x;\ps{y|x}}&=\tfrac{\nabla_\theta\f(S_{\{y\}|x;\theta})}{\f(S_{\{y\}|x;\theta})},\quad y\in\{(0,0),(1,1)\}\\ 
\score{\theta}{(0,1)|x;\ps{y|x}}&=
\tfrac{\nabla_\theta \eta_1(\theta;x)}{ \eta_1(\theta;x)}\mathbb I_1(x;\ps{y|x})+
  \tfrac{\nabla_\theta [\eta_1(\theta;x)-\eta_2(\theta;x)]}{\eta_1(\theta;x)- \eta_2(\theta;x)}\mathbb I_2(x;\ps{y|x})+
  \tfrac{\nabla_\theta [\eta_1(\theta;x)-\eta_3(\theta;x)]}{\eta_1(\theta;x)- \eta_3(\theta;x)}\mathbb I_3(x;\ps{y|x}),\\ 
\score{\theta}{(1,0)|x;\ps{y|x}}&=
  \tfrac{\nabla_\theta \eta_1(\theta;x)}{ \eta_1(\theta;x)}\mathbb I_1(x;\ps{y|x})+
  \tfrac{\nabla_\theta \eta_2(\theta;x)}{ \eta_2(\theta;x)}\mathbb I_2(x;\ps{y|x})+
  \tfrac{\nabla_\theta \eta_3(\theta;x)}{ \eta_3(\theta;x)}\mathbb I_3(x;\ps{y|x}).
\end{align*}
For any vector $a=(a_1,\dots, a_{d_\ex})$, define the differential operator by $D^{|a|}=\tfrac{\partial^{|a|}}{\partial x_1^{a_1}\cdots\partial x_d^{a_{d_\ex}}}$,
where $|a|=\sum_i^{d_\ex} a_i$. Then, for a function $h:\cX\to\mathbb R$, let
\begin{align*}
\|h\|_{\infty,\alpha}=\max_{|a|\le [\alpha]}\sup_x|D^a h(x)|+\max_{|a|=[\alpha]}\sup_{x\ne x'}
\tfrac{|D^a h(x)-D^a h(x')|}{\|x-x'\|^{\alpha-[\alpha]}}.	
\end{align*}
Let $\mathcal C^\alpha_M(\cX)$ be the set of continuous functions $h:\cX\to\mathbb R$ with $\|h\|_{\infty,\alpha}\le M$.
We next provide regularity conditions under which we verify Assumptions \ref{as:newey_modified1_new} and \ref{as:newey_modified2} \ref{ass:equic}.
\vspace{-.5cm}

\begin{assumptionA}\label{ass:verify_entry_game}
For the entry game model in Example \ref{example:CT},	
\begin{enumerate}[label=(\roman*)]
\item \label{ass:entry_bounded_derivative} There exists $C>0$ s.t. $\|\nabla_\theta\eta_j(\theta;x)\|\le C,~j=1,\dots,3,$ for all $x\in\cX$.
\item \label{ass:p_yx_bounded_below} There exists $c>0$ s.t. $\mathcal H=\{\ps{y|x}:\cX\to [0,1]^{\cY}: \ps{y|x}(y|x)\ge c,\forall (y,x)\in\cY\times\cX\}$.
\item \label{ass:entry_smooth_CDF} If $X$ has a component with continuous distribution, the probability density function (p.d.f.) of $Z_j|\ex_d$, for $j=1,2$,  is uniformly bounded on the support of $Z_j|\ex_d$, where $X_d$ denotes the subvector of $X$ containing discrete covariates with finite support. 
If there are no discrete covariates, the restriction is on the unconditional p.d.f. of $Z_j$.
\end{enumerate}
\end{assumptionA}
\begin{assumptionA}\label{as:eg_support}
\begin{enumerate}[label=(\alph*)]
\item \label{as:eg_bounds} $\E\left[\left\|\tfrac{\nabla_\theta\f(S_{\{y\}|x;\theta})}{\f(S_{\{y\}|x;\theta})}\right\|^2\right]\le C$ for $y=(0,0),(1,1)$ for some $0<C<\infty$.
\item\label{ass:support}One of the following conditions hold:
\begin{enumerate}[label=(\roman*)]
	\item\label{ass:support:discrete} $X$ is a vector of discrete random variables and $\cX\subset\mathbb R^{d_\ex}$ is a finite set. 
	\item\label{ass:support:continuous} $X$ is a vector of continuous random variables and $\cX\subset\mathbb R^{d_\ex}$ is a bounded, convex set with nonempty interior. 
	For some $c>0$,  $M>0$, and $\alpha>d_\ex$, 
	\begin{align*}
	\hspace{-1.35cm}\mathcal H=\{\ps{y|x}:\cX\to [0,1]^{|\cY|}:\ps{y|x}(y|\cdot)\in \mathcal C^\alpha_M(\cX), y\in\cY,~
	\ps{y|x}(y|x)\ge c>0,
 \forall (y,x)\in\cY\times\cX\}
	\end{align*}	
	\item\label{ass:support:mixed} $\ex=(\ex_c^\top,\ex_d^\top)^\top$ consists of subvectors $\ex_c$ and $\ex_d$, where $\ex_c$ is continuously distributed and $\ex_d$ is discretely distributed.
	 $\cX=\cX_c\times\cX_d\subset\mathbb R^{d_\ex}$, where $\cX_c\subset \mathbb R^{d_{\ex_c}}$ is a bounded convex set with nonempty interior, and $\mathcal \ex_d\subset\mathbb R^{d_{\ex_d}}$ is a finite set. 
	For some $c>0$, $M>0$, $\alpha>d_{\ex_c}$,  Lipschitz functions $\phi_k,k=1,\dots,|\cY|$,  and  some functions $\ell_c$ and $\ell_d$,
	\begin{multline*}
		\hspace{-1.35cm}\mathcal H=\{\ps{y|x}:\cX\to [0,1]^{|\cY|}:\ps{y|x}(y_k|x)=\phi_k\big(\ell_c(y_k|x_c),\ell_d(y_k|x_d)\big),~\ell_c(y_k|\cdot)\in \mathcal C^\alpha_M(\cX_c), \\
		\hspace{-1.05cm} -M\le \ell_d(y_k|x_d)\le M, \forall x_d\in\cX_d, ~ k=1,\dots,d_Y,
		~\ps{y|x}(y|x)\ge c>0,\forall (y,x)\in\cY\times\cX\}.
		\end{multline*}
\end{enumerate}
\end{enumerate}
\end{assumptionA}

\begin{remark}\rm
Assumption \ref{ass:verify_entry_game}\ref{ass:entry_bounded_derivative} is satisfied, for example, when the vector $\eu$ has a multivariate Normal distribution, provided the correlation among any two of its entries is bounded away from one (in absolute value).
In Assumption \ref{as:eg_support}\ref{ass:support}\ref{ass:support:mixed}, we assume that $\ps{y|x}$ combines a function of continuous covariates $\ex_c$ with a function of discrete covariates $\ex_d$ using a Lipschitz function, which covers many transformations of interest \citep[see, e.g.,][p.~192]{Vaart:1996wk}. More general transformations can be allowed for, as far as one may ensure that the metric entropy of $\mathcal H$ can be controlled properly.	
\end{remark}
\begin{table}[tbp]
  \caption{Values of $\Delta(x;\ps{y|x},\ps{0,y|x})$ when $\mathbb I(x;\ps{y|x})\neq\mathbb I(x;\ps{0,y|x})$}
  \label{tab:modulus}
  \begin{center}
  \scalebox{0.7}{
  \begin{tabular}{ccl}
  \hline
  $\mathbb I(x;\ps{y|x})$ & $\mathbb I(x;\ps{0,y|x})$ & $\Delta(x;\ps{y|x},\ps{0,y|x})$ \\
  \hline
  (1,0,0) & (0,1,0) &
    \multirow{2}{*}{$
      \Big\|
      \ps{0,y|x}((1,0)|x)\Big(
        \tfrac{\nabla_\theta\eta_2(\theta;x)}{\eta_2(\theta;x)}
        -\tfrac{\nabla_\theta\eta_1(\theta;x)}{\eta_1(\theta;x)}
      \Big)
      +\ps{0,y|x}((0,1)|x)\Big(
        \tfrac{\nabla_\theta \eta_1(\theta;x)-\nabla_\theta\eta_2(\theta;x)}
              {\eta_1(\theta;x)- \eta_2(\theta;x)}
        -\tfrac{\nabla_\theta\eta_1(\theta;x)}{\eta_1(\theta;x)}
      \Big)
      \Big\|
    $} \\[1ex]
  (0,1,0) & (1,0,0) & \\[2ex]
  (1,0,0) & (0,0,1) &
    \multirow{2}{*}{$
      \Big\|
      \ps{0,y|x}((1,0)|x)\Big(
        \tfrac{\nabla_\theta\eta_3(\theta;x)}{\eta_3(\theta;x)}
        -\tfrac{\nabla_\theta\eta_1(\theta;x)}{\eta_1(\theta;x)}
      \Big)
      +\ps{0,y|x}((0,1)|x)\Big(
        \tfrac{\nabla_\theta \eta_1(\theta;x)-\nabla_\theta\eta_3(\theta;x)}
              {\eta_1(\theta;x)- \eta_3(\theta;x)}
        -\tfrac{\nabla_\theta\eta_1(\theta;x)}{\eta_1(\theta;x)}
      \Big)
      \Big\|
    $} \\[1ex]
  (0,0,1) & (1,0,0) & \\[2ex]
  (0,1,0) & (0,0,1) &
    \multirow{2}{*}{$
      \Big\|
      \ps{0,y|x}((1,0)|x)\Big(
        \tfrac{\nabla_\theta\eta_2(\theta;x)}{\eta_2(\theta;x)}
        -\tfrac{\nabla_\theta\eta_3(\theta;x)}{\eta_3(\theta;x)}
      \Big)
      +\ps{0,y|x}((0,1)|x)\Big(
        \tfrac{\nabla_\theta \eta_1(\theta;x)-\nabla_\theta\eta_2(\theta;x)}
              {\eta_1(\theta;x)- \eta_2(\theta;x)}
        -\tfrac{\nabla_\theta \eta_1(\theta;x)-\nabla_\theta\eta_3(\theta;x)}
              {\eta_1(\theta;x)- \eta_3(\theta;x)}
      \Big)
      \Big\|
    $} \\[1ex]
  (0,0,1) & (0,1,0) & \\[2ex]
  \hline
  \end{tabular}}
  \end{center}
\end{table}
\vspace{-.5cm}
\begin{proposition}
\label{lemma:pathwise_game}
Suppose Assumptions \ref{ass:finite_support} and \ref{ass:verify_entry_game} hold for the entry game model in Example \ref{example:CT}.
Then Assumption \ref{as:newey_modified1_new} also holds.
\end{proposition}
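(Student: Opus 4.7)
The plan is to verify in turn the two requirements of Assumption~\ref{as:newey_modified1_new} --- existence of the pathwise derivative $D(\theta^*,\ps{0,y|x})[\cdot]$ along every direction $v=\ps{y|x}-\ps{0,y|x}\in\mathcal H$, and the quadratic remainder bound --- by exploiting the structure of how the score depends on $\ps{y|x}$.

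First, I would \emph{localize the nonsmoothness}. Since $m_{\theta^*}(x;\ps{y|x})=\sum_{y\in\cY}\ps{0,y|x}(y\mid x)\,\score{\theta^*}{y|x;\ps{y|x}}$, expressions~\eqref{eq:eg_score1}--\eqref{eq:eg_score4} show that the dependence on $\ps{y|x}$ is carried entirely by the three indicators $\mathbb I_\ell(x;\ps{y|x})$, $\ell=1,2,3$, that appear in the $y\in\{(0,1),(1,0)\}$ components. Hence the increment $\E[m_{\theta^*}(\ex;\ps{y|x})-m_{\theta^*}(\ex;\ps{0,y|x})]$ reduces to a finite sum of terms $\E[b_{\theta^*,\ell,y}(\ex)(\mathbb I_\ell(\ex;\ps{y|x})-\mathbb I_\ell(\ex;\ps{0,y|x}))]$ whose coefficients $b_{\theta^*,\ell,y}$ are uniformly bounded on $\cX$ thanks to Assumption~\ref{ass:verify_entry_game}\ref{ass:entry_bounded_derivative} together with the lower bound on the denominators $\eta_1$, $\eta_1-\eta_2$, $\eta_1-\eta_3$, $\eta_2$, $\eta_3$ implied by Assumption~\ref{ass:finite_support}-(e).

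Second, I would \emph{compute the pathwise derivative by conditioning and coarea}. By~\eqref{eq:defZ}, each $Z_j(x;\cdot)$ is affine in $\ps{y|x}$, so $\tilde Z_j(\ex;v):=Z_j(\ex;\ps{y|x})-Z_j(\ex;\ps{0,y|x})$ is linear in $v$ with $|\tilde Z_j(\ex;v)|\le C\|v\|_{\mathcal H}$ uniformly in $\ex$. Conditioning on the discrete subvector $\ex_d$, Assumption~\ref{ass:verify_entry_game}\ref{ass:entry_smooth_CDF} provides a uniformly bounded conditional density $f_{Z_j\mid \ex_d}$, and the switch event $\{\mathbb I_\ell(\ex;\ps{y|x})\ne\mathbb I_\ell(\ex;\ps{0,y|x})\}$ is the thin band $\{|Z_j(\ex;\ps{0,y|x})|\le|\tilde Z_j(\ex;v)|\}$. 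Disintegrating each expectation along $Z_j(\ex;\ps{0,y|x})$ and letting $\tau\downarrow 0$ then yields, as the directional derivative, an explicit linear bounded functional of the form
\[
D(\theta^*,\ps{0,y|x})[v]=-\sum_{\ell,j,y}\E\big[\bar b_{\theta^*,\ell,y,j}(\ex_d)\,\tilde Z_j(\ex;v)\,f_{Z_j\mid \ex_d}(0\mid \ex_d)\big],
\]
where $\bar b_{\theta^*,\ell,y,j}$ is the conditional mean of $b_{\theta^*,\ell,y}(\ex)$ given $(Z_j(\ex;\ps{0,y|x})=0,\ex_d)$.

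Third, I would \emph{derive the quadratic remainder}, and this is where the real work lies. After subtracting $D(\theta^*,\ps{0,y|x})[v]$, each residual term can be written as the integral of $\bar b_{\theta^*,\ell,y,j}$ against $F_{Z_j\mid\ex_d}(t\mid\ex_d)-F_{Z_j\mid\ex_d}(0\mid\ex_d)-t\,f_{Z_j\mid\ex_d}(0\mid\ex_d)$ on the $O(\|v\|_{\mathcal H})$-wide band $t\in[-|\tilde Z_j(\ex;v)|,|\tilde Z_j(\ex;v)|]$. A first-order Taylor expansion of $F_{Z_j\mid\ex_d}$ around $t=0$ then delivers the desired $O(\|v\|_{\mathcal H}^2)$ bound, provided $f_{Z_j\mid\ex_d}$ has a modulus of continuity in its argument. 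Such a property is not literally stated in Assumption~\ref{ass:verify_entry_game}\ref{ass:entry_smooth_CDF}; I would obtain it from the smoothness of the latent distribution $\f$ and of the maps $x\mapsto\eta_j(\theta;x)$ in Example~\ref{example:CT} (under, e.g., the bivariate-normal specification invoked when verifying Assumption~\ref{ass:finite_support}-(c)), combined with the affine dependence of $Z_j$ on $\ps{y|x}$. Verifying this smoothness carefully is the main obstacle of the proof.
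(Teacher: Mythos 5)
Your first step—the increment of $\E[m_{\theta^*}(\ex;\cdot)]$ lives entirely on the thin switching bands $\{|Z_j(x;\ps{0,y|x})|\le 3\delta\}$ because the score depends on $\ps{y|x}$ only through the regime indicators—matches the paper. The gap is in what you do on those bands. You treat the jump in the weighted score across regimes as a bounded but generically nonvanishing quantity; this forces you to (i) extract a nonzero-looking first-order term $D[v]$ by disintegrating along $Z_j$, and (ii) control the remainder by a Taylor expansion of $F_{Z_j\mid\ex_d}$, which requires a modulus of continuity for the density of $Z_j$. That condition is not in Assumption~\ref{ass:verify_entry_game}~\ref{ass:entry_smooth_CDF} (which gives only uniform boundedness of the density), and you leave it unverified, flagging it yourself as the main obstacle. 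The paper's proof shows no such extra smoothness is needed, because of an algebraic identity your sketch misses: on the switching boundary the two regimes agree once weighted by $\ps{0,y|x}$. Concretely, if $Z_1(x;\ps{0,y|x})=0$, substituting \eqref{eq:Z1eq1}--\eqref{eq:Z1eq2} into the corresponding entry of Table~\ref{tab:modulus} makes $\Delta(x;\ps{y|x},\ps{0,y|x})$ vanish exactly, and Lemma~\ref{lem:boundDelta} upgrades this to $\Delta\le K\delta$ uniformly over the band. The whole increment is then bounded by $K\delta\cdot P(|Z_j(\ex;\ps{0,y|x})|\le 3\delta)\le c\delta^2$ using only the bounded density: the pathwise derivative is the zero functional and the quadratic remainder of Assumption~\ref{as:newey_modified1_new} holds under exactly the stated assumptions. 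Your formula for $D$ is not so much wrong as vacuous—the conditional mean $\bar b_{\theta^*,\ell,y,j}$ at $Z_j=0$ is precisely the vanishing combined jump, so $D[v]=0$—but without the identity you cannot see this, and your route cannot be closed under the paper's assumptions.

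A second, smaller gap: your disintegration argument presumes $Z_j$ has a (conditional) density, i.e., that $X$ has a continuously distributed component. When $X$ is purely discrete there is no such density, and the paper handles this case separately: away from boundary points nothing switches for $\delta$ small (Lemma~\ref{lem:eg_separated}), while at points with $Z_j(x;\ps{0,y|x})=0$ the same exact-vanishing identity applies, so the increment is identically zero for $\delta$ small enough. Your proposal as written covers neither the discrete case nor, without additional unproved smoothness, the continuous one.
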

\vspace{-.5cm}
\begin{proof}
Recall $m_\theta(x;\ps{y|x})\equiv \E[\score{\theta}{\ey|\ex;\ps{y|x}}|\ex=x]=\sum_{y\in \cY}\ps{0,y|x}(y|x)\score{\theta}{y,x;\ps{y|x}}$. 
For $\ps{y|x},\ps{0,y|x}\in \mathcal H$, we aim to bound $\E[\|m_\theta(X;\ps{y|x})-m_\theta(X;\ps{0,y|x})\|]$. 
The score depends on $\ps{y|x}$ only through $\mathbb I(x;\ps{y|x})=(\mathbb I_1(x;\ps{y|x}),\mathbb I_2(x;\ps{y|x})$, $\mathbb I_3(x;\ps{y|x}))$. 
Hence, $\Delta(x;\ps{y|x},\ps{0,y|x})\equiv \|m_\theta(x;\ps{y|x})-m_\theta(x;\ps{0,y|x})\|\neq 0$ only if
$\mathbb I(x;\ps{y|x})\ne \mathbb I(x;\ps{0,y|x})$. The values of $\Delta(x;\ps{y|x},\ps{0,y|x})$ are given in Table \ref{tab:modulus}.
We consider two subcases (i) $X$ is discrete and $\cX$ is finite, and (ii) $X$ contains a continuously distributed variable (the case where all components of $X$ are continuously distributed is treated as a special case of the latter).

\noindent
\textbf{(i) Discrete $X$:}
Let $\cX$ be a finite set, $\cX_0=\{x\in\cX: Z_1(x;\ps{0,y|x})\ne 0, Z_2(x;\ps{0,y|x})\ne 0\}$, and $c\equiv \min_{x\in\cX_0}\min_{j=1,2}|Z_j(x;\ps{0,y|x})|.$ 
By Lemma \ref{lem:eg_separated}, $\Delta(x;\ps{y|x},\ps{0,y|x})=0$ for all $x\in\cX_0$ and $\ps{y|x}$ such that $\|\ps{y|x}-\ps{0,y|x}\|_{\mathcal H}\le\delta$ for all $\delta\le c/4$, hence they do not contribute to the $L^1$-norm of $\Delta(\cdot;\ps{y|x},\ps{0,y|x})$.
Take $x\in\cX^c_0$. 
Suppose, e.g., that $0=Z_1(x;\ps{0,y|x})<Z_2(x;\ps{0,y|x})$. 
To have $\Delta(x;\ps{y|x},\ps{0,y|x})\neq 0$, let $\ps{y|x}$ be such that $\|\ps{y|x}-\ps{0,y|x}\|_{\mathcal H}\le \delta$ and $Z_1(x;\ps{y|x})>0$. 
This yields  $\mathbb I(x,\ps{y|x})=(0,1,0)$ and $\mathbb I(x,\ps{0,y|x})=(1,0,0)$. 
Per Table \ref{tab:modulus},
\vspace{-.5cm}
\begin{multline}
\Delta(x;\ps{y|x},\ps{0,y|x})=\big\|\ps{0,y|x}((1,0)|x)\tfrac{\nabla_\theta\eta_2(\theta;x)}{\eta_2(\theta;x)}+\ps{0,y|x}((0,1)|x)\tfrac{\nabla_\theta \eta_1(\theta;x)-\nabla_\theta\eta_2(\theta;x)}{\eta_1(\theta;x)- \eta_2(\theta;x)}\\-[\ps{0,y|x}((1,0)|x)+\ps{0,y|x}((0,1)|x)]\tfrac{\nabla_\theta\eta_1(\theta;x)}{\eta_1(\theta;x)}\big\|.\label{eq:difference1}
\end{multline}
Note that $Z_1(x;\ps{0,y|x})=0$ is equivalent to 
\vspace{-.5cm}
\begin{align}
\ps{0,y|x}((1,0)|x)=[\ps{0,y|x}((1,0)|x)+\ps{0,y|x}((0,1)|x)]\tfrac{\eta_2(\theta;x)}{\eta_1(\theta;x)}.\label{eq:Z1eq1}
\end{align}
Since $\tfrac{\ps{0,y|x}((0,1)|x)}{\ps{0,y|x}((1,0)|x)+\ps{0,y|x}((0,1)|x)}=1-\tfrac{\ps{0,y|x}((1,0)|x)}{\ps{0,y|x}((1,0)|x)+\ps{0,y|x}((0,1)|x)}$, 
we obtain
\vspace{-.35cm}
\begin{align}
\ps{0,y|x}((0,1)|x)=[\ps{0,y|x}((1,0)|x)+\ps{0,y|x}((0,1)|x)]\tfrac{\eta_1(\theta;x)-\eta_2(\theta;x)}{\eta_1(\theta;x)}.	\label{eq:Z1eq2}
\end{align}
Substituting \eqref{eq:Z1eq1}-\eqref{eq:Z1eq2} into \eqref{eq:difference1} yields
$\Delta(x;\ps{y|x},\ps{0,y|x})=
\big\|[\ps{0,y|x}((1,0)|x)+\ps{0,y|x}((0,1)|x)]\big(\tfrac{\nabla_\theta\eta_2(\theta;x)}{\eta_1(\theta;x)}+\tfrac{\nabla_\theta \eta_1(\theta;x)-\nabla_\theta\eta_2(\theta;x)}{\eta_1(\theta;x)}-\tfrac{\nabla_\theta\eta_1(\theta;x)}{\eta_1(\theta;x)}\big)\big\|=0$.
A similar argument can be applied to $x\in \cX_0^c$ such that $Z_1(x;\ps{0,y|x})<Z_2(x;\ps{0,y|x})=0$. Finally, consider $x\in \cX_0^c$ such that  $Z_1(x;\ps{0,y|x})=Z_2(x;\ps{0,y|x})=0$. 
This occurs only if $\eta_2(\theta;x)=\eta_3(\theta;x)$. 
Hence, $Z_1(x;\ps{y|x})=Z_2(x;\ps{y|x})$ for any $\ps{y|x}$. 
It then suffices to consider only one of $Z_j$'s. 
For example let $\ps{y|x}$ be such that $\|\ps{y|x}-\ps{0,y|x}\|_{\mathcal H}\le \delta$ and $Z_1(x;\ps{y|x})>0$. 
Then, the same  analysis as above leads to $\Delta(x;\ps{y|x},\ps{0,y|x})=0$.
Therefore, for all $\ps{y|x}$ such that $\|\ps{y|x}-\ps{0,y|x}\|_{\mathcal H}\le \delta$ for a sufficiently small $\delta$, the pathwise derivative is 0, as
\vspace{-.35cm}
\begin{multline*}
\big\Vert \E\big[m_\theta(X;p'_{y|x})- m_\theta(X;\ps{0,y|x})\big]\big\Vert
\le \E\big[\|m_\theta(X;p'_{y|x})- m_\theta(X;\ps{0,y|x})\|\big]\\
=\sum_{x\in\cX_0}\ps{0,x}(x)\Delta(x;\ps{y|x},\ps{0,y|x})+\sum_{x\in\cX^c_0}\ps{0,x}(x)\Delta(x;\ps{y|x},\ps{0,y|x})=0.
\end{multline*}

\noindent
\textbf{(ii) $X$ contains a continuously distributed variable:}

Let $\ex_d$ be a subvector of $X$ containing discrete covariates. 
Recall that $\Delta(x,\ps{y|x},\ps{0,y|x})\ne 0$ when $\mathbb I(x;\ps{y|x})\ne \mathbb I(x;\ps{0,y|x})$. 
This occurs when $\sgn(Z_j(x;\ps{y|x}))\ne\sgn(Z_j(x;\ps{0,y|x}))$ for some $j$. 
By \eqref{eq:defZ} and $\sup_{x\in\cX}|\eta_j(\theta;x)|\le 1$, $\|\ps{y|x}-\ps{0,y|x}\|_{\mathcal H}\le \delta$ implies 
\begin{align*}
\sup\nolimits_{x\in\cX}|Z_j(x;\ps{y|x})-Z_j(x;\ps{0,y|x})|\le 3\delta,~j=1,2
\end{align*}
Therefore, if $\sgn(Z_j(x;\ps{y|x}))\ne\sgn(Z_j(x;\ps{0,y|x}))$ and $\|\ps{y|x}-\ps{0,y|x}\|_{\mathcal H}\le \delta$, one must have $|Z_j(x;\ps{0,y|x})|\le 3\delta.$ 
Hence, the pathwise derivative is again zero because
\begin{align*}
&\Big\Vert E\Big[m_\theta(X;\ps{y|x})- m_\theta(X;\ps{0,y|x})\Big]\Big\Vert
\le \E\Big[\|m_\theta(X;\ps{y|x})- m_\theta(X;\ps{0,y|x})\|\Big]\\
&\le\E\Big[\Delta(X;\ps{y|x},\ps{0,y|x})(1\{-3\delta\le Z_1(X;\ps{0,y|x})\le 3\delta\}+1\{-3\delta\le Z_2(X;\ps{0,y|x})\le 3\delta\})\Big]\\
&\stackrel{(i)}{\le} K\delta \E\Big[\int 1\{-3\delta\le z_1\le 3\delta\}f_{Z_1|\ex_d}(z_1)dz_1+\int 1\{-3\delta\le z_2\le 3\delta\}f_{Z_2|\ex_d}(z_2)dz_2\Big]\stackrel{(ii)}{\le} c\delta^2,
\end{align*}
where inequality (i) follows by Lemma \ref{lem:boundDelta} and the law of iterated expectations, and inequality (ii) follows by Assumption \ref{ass:verify_entry_game} \ref{ass:entry_smooth_CDF} for $0<c<\infty$ some constant.
\end{proof}

\begin{lemma}\label{lem:eg_separated}
 Let $\cX$ be a finite set, and let $\cX_0=\{x\in\cX: Z_1(x;\ps{0,y|x})\ne 0, Z_2(x;\ps{0,y|x})\ne 0\}$. Let $c\equiv \min_{x\in\cX_0}\min_{j=1,2}|Z_j(x;\ps{0,y|x})|.$ Then, $\Delta(x;\ps{y|x},\ps{0,y|x})=0$ for any $x\in\cX_0$ and $\ps{y|x}$ such that $\|\ps{y|x}-\ps{0,y|x}\|_{\cH}\le c/4$.
\end{lemma}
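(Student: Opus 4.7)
The plan is to show that on $\cX_0$ the indicator vector $\mathbb I(\cdot;\ps{y|x})$ is locally constant in $\ps{y|x}$ around $\ps{0,y|x}$, so that for sufficiently close alternatives $\ps{y|x}$ the score functions coincide and hence $\Delta$ vanishes. The driver is a simple Lipschitz bound on $Z_1,Z_2$ in the $\|\cdot\|_{\cH}$ norm, combined with the fact that on $\cX_0$ the functions $Z_j(\cdot;\ps{0,y|x})$ are bounded away from zero.

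First I would recall that, by the definitions \eqref{eq:defI1}--\eqref{eq:defI3}, $\mathbb I(x;\ps{y|x})$ is entirely determined by $\sgn(Z_1(x;\ps{y|x}))$ and $\sgn(Z_2(x;\ps{y|x}))$, and that the score expressions $s_\theta(y|x;\ps{y|x})$ depend on $\ps{y|x}$ only through these indicators. Consequently $\Delta(x;\ps{y|x},\ps{0,y|x})=0$ whenever $\mathbb I(x;\ps{y|x})=\mathbb I(x;\ps{0,y|x})$, so it is enough to show that both indicators agree under the stated bound on $\|\ps{y|x}-\ps{0,y|x}\|_{\cH}$.

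Next, using the definition \eqref{eq:defZ} and the bound $0\le \eta_j(\theta;x)\le 1$ for $j=1,2,3$, I would establish the Lipschitz-type inequality
\begin{align*}
\sup_{x\in\cX}|Z_j(x;\ps{y|x})-Z_j(x;\ps{0,y|x})|\le 3\,\|\ps{y|x}-\ps{0,y|x}\|_{\cH},\qquad j=1,2,
\end{align*}
by writing each $Z_j$ as a linear combination (with coefficients $\eta_k$'s bounded by $1$) of the two conditional probabilities $\ps{y|x}((1,0)|x)$ and $\ps{y|x}((0,1)|x)$ and applying the triangle inequality. This step is entirely mechanical.

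Finally, for $x\in\cX_0$ and any $\ps{y|x}$ with $\|\ps{y|x}-\ps{0,y|x}\|_{\cH}\le c/4$, the Lipschitz bound gives $|Z_j(x;\ps{y|x})-Z_j(x;\ps{0,y|x})|\le 3c/4<c\le |Z_j(x;\ps{0,y|x})|$, so $Z_j(x;\ps{y|x})$ shares the sign of $Z_j(x;\ps{0,y|x})$ for $j=1,2$. Hence $\mathbb I(x;\ps{y|x})=\mathbb I(x;\ps{0,y|x})$, and therefore $\Delta(x;\ps{y|x},\ps{0,y|x})=0$, which proves the claim. There is no real obstacle here: the only mild subtlety is verifying the Lipschitz constant $3$ carefully from \eqref{eq:defZ}, but that is routine.
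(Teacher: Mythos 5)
Your proof is correct and follows essentially the same route as the paper's: both rest on the Lipschitz bound $|Z_j(x;\ps{y|x})-Z_j(x;\ps{0,y|x})|\le 3\|\ps{y|x}-\ps{0,y|x}\|_{\cH}$ (from \eqref{eq:defZ}, $|\eta_k|\le 1$, and the triangle inequality), so that a $c/4$ perturbation moves each $Z_j$ by at most $3c/4<c$ and cannot flip its sign on $\cX_0$, forcing $\mathbb I(x;\ps{y|x})=\mathbb I(x;\ps{0,y|x})$ and hence $\Delta=0$. The only cosmetic difference is that the paper runs the argument case by case on the sign configuration of $(Z_1,Z_2)$ while you state it once as a uniform sign-preservation step.
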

\begin{proof}
Take $x\in\cX_0.$
Suppose $Z_1(x;\ps{0,y|x})\ge c>0$ so that $\mathbb I(x;\ps{0,y|x})=(0,1,0)$. Let  $\ps{y|x}$ satisfy $\|\ps{y|x}-\ps{0,y|x}\|_{\cH}\le c/4$. Then, by \eqref{eq:defZ} and $|\eta_j(x;\theta)|\le 1$, and the triangle inequality,
$Z_1(x;\ps{y|x})\ge Z_1(x;\ps{0,y|x})-\tfrac{3}{4}c\ge \tfrac{1}{4}c>0$,
implying $\mathbb I(x;\ps{y|x})=(0,1,0)$. From Table \ref{tab:modulus}, $\Delta(x;\ps{y|x},\ps{0,y|x})=0$. Other cases can be analyzed similarly.
\end{proof}

\begin{lemma}\label{lem:boundDelta}
Suppose Assumptions \ref{ass:finite_support} and \ref{ass:verify_entry_game} hold for the entry game model in Example \ref{example:CT}.
For $\delta>0$,  let $\ps{y|x}$ be such that $\|\ps{y|x}-\ps{0,y|x}\|_{\mathcal H}\le \delta$. Then, there exists $0<K<\infty$ such that for all $x\in\cX$,
$\Delta(x;\ps{y|x},\ps{0,y|x})\le K\delta$.
\end{lemma}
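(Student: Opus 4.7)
The plan is to exploit two structural features. The score $\score{\theta}{y|x;\ps{y|x}}$ depends on the density argument only through the indicator $\mathbb I(x;\ps{y|x})=(\mathbb I_1,\mathbb I_2,\mathbb I_3)$ in \eqref{eq:defI1}--\eqref{eq:defI3}, while $m_\theta(x;\ps{y|x})=\sum_{y\in\cY}\ps{0,y|x}(y|x)\score{\theta}{y|x;\ps{y|x}}$ uses $\ps{0,y|x}$ (not $\ps{y|x}$) as its weights. Consequently $\Delta(x;\ps{y|x},\ps{0,y|x})=0$ whenever $\mathbb I(x;\ps{y|x})=\mathbb I(x;\ps{0,y|x})$, so it remains to bound $\Delta$ on the event that the indicator flips.

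First, I would observe that on this event at least one $Z_j(x;\ps{0,y|x})$ is close to zero. Since $\sup_{x\in\cX}|\eta_j(\theta;x)|\le 1$, the definition in \eqref{eq:defZ} and the triangle inequality yield $\sup_{x\in\cX}|Z_j(x;\ps{y|x})-Z_j(x;\ps{0,y|x})|\le 3\delta$, so any sign change in $Z_j$ forces $|Z_j(x;\ps{0,y|x})|\le 3\delta$. In particular, a direct Region~2 $\leftrightarrow$ Region~3 transition requires both $|Z_1(x;\ps{0,y|x})|\le 3\delta$ and $|Z_2(x;\ps{0,y|x})|\le 3\delta$, which is possible because $Z_1\le Z_2$ prohibits a sign change from $Z_1>0$ to $Z_2<0$ without passing near both zeros.

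Second, I would rewrite each of the six entries in Table~\ref{tab:modulus} as a scalar multiple of $Z_1(x;\ps{0,y|x})$, $Z_2(x;\ps{0,y|x})$, or a linear combination of the two. The key algebraic identity is
\begin{equation*}
\tfrac{\nabla_\theta\eta_j(\theta;x)}{\eta_j(\theta;x)}-\tfrac{\nabla_\theta\eta_1(\theta;x)}{\eta_1(\theta;x)}=-\tfrac{\eta_1(\theta;x)-\eta_j(\theta;x)}{\eta_j(\theta;x)}\Big(\tfrac{\nabla_\theta[\eta_1(\theta;x)-\eta_j(\theta;x)]}{\eta_1(\theta;x)-\eta_j(\theta;x)}-\tfrac{\nabla_\theta\eta_1(\theta;x)}{\eta_1(\theta;x)}\Big),\qquad j=2,3.
\end{equation*}
Substituting it (with $j=2$) into row~1 of Table~\ref{tab:modulus} and collecting terms via the definition of $Z_1$ in \eqref{eq:defZ} collapses the entry to $|Z_1(x;\ps{0,y|x})|/\eta_2(\theta;x)$ times the norm of the parenthesized gradient factor. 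Rows~3--4 admit the analogous simplification with $(\eta_3,Z_2)$ replacing $(\eta_2,Z_1)$. Rows~5--6 encode a direct Region~2 $\leftrightarrow$ Region~3 jump, which I would decompose as (Region~2 $\to$ Region~1) plus (Region~1 $\to$ Region~3) and handle each piece by the identity above, yielding a linear combination of $|Z_1(x;\ps{0,y|x})|$ and $|Z_2(x;\ps{0,y|x})|$ with uniformly bounded coefficients.

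Combining these reductions with Assumption~\ref{ass:verify_entry_game}\ref{ass:entry_bounded_derivative} (which bounds $\|\nabla_\theta\eta_j\|$ uniformly by $C$) and Assumption~\ref{ass:finite_support}-(e) as verified for Example~\ref{example:CT} (which supplies $c>0$ with $\eta_j(\theta;x)\ge c$ and $\eta_1(\theta;x)-\eta_j(\theta;x)\ge c$ uniformly in $(\theta,x)\in\Theta\times\cX$), every prefactor is bounded by some $K'<\infty$. Plugging in $|Z_j(x;\ps{0,y|x})|\le 3\delta$ then delivers $\Delta(x;\ps{y|x},\ps{0,y|x})\le 3K'\delta=:K\delta$ uniformly in $x\in\cX$. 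The main obstacle is the careful algebraic reduction of each entry in Table~\ref{tab:modulus} to the form (bounded prefactor)$\times Z_j(x;\ps{0,y|x})$; the probabilistic content is already captured by the $3\delta$-Lipschitz behavior of $Z_j$ in $\ps{y|x}$ established at the outset.
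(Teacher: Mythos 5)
Your proposal is correct and follows essentially the same route as the paper's proof: reduce to the indicator-flip event, use $\sup_{x\in\cX}|Z_j(x;\ps{y|x})-Z_j(x;\ps{0,y|x})|\le 3\delta$ to force $|Z_j(x;\ps{0,y|x})|\le 3\delta$ on that event, and then bound each entry of Table~\ref{tab:modulus} by a constant (built from $\|\nabla_\theta\eta_j(\theta;x)\|\le C$, $\eta_j(\theta;x)\ge c$, and $\eta_1(\theta;x)-\eta_j(\theta;x)\ge c$) times $|Z_j(x;\ps{0,y|x})|$. Your factorization identity is an equivalent repackaging of the paper's computation, which instead writes $\ps{0,y|x}((1,0)|x)$ and $\ps{0,y|x}((0,1)|x)$ as their boundary values plus residuals $r_{(1,0)},r_{(0,1)}$ of magnitude at most $3\delta/c$ and observes that the leading terms cancel exactly, and your explicit triangle-inequality treatment of the region-2$\leftrightarrow$3 transition (where both $|Z_1|$ and $|Z_2|$ must be within $3\delta$ of zero) spells out a case the paper dismisses as "analyzed similarly."
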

\begin{proof}
From Table \ref{tab:modulus},  $\Delta(x;\ps{y|x},\ps{0,y|x})=0$ when $\mathbb I(x;\ps{y|x})=\mathbb I(x;\ps{0,y|x})$. Therefore, we focus on cases with $\mathbb I(x;\ps{y|x})\ne \mathbb I(x;\ps{0,y|x})$ below. Consider the case where $\mathbb I(x;\ps{y|x})=(0,1,0)$ and $\mathbb I(x;\ps{0,y|x})=(1,0,0)$. By \eqref{eq:defI1}-\eqref{eq:defI3}, this occurs when
\begin{align}
	Z_1(x;\ps{0,y|x})\le 0,~Z_1(x;\ps{y|x})>0.\label{eq:boundDelta1}
\end{align}
Furthermore, by \eqref{eq:defZ} and $\sup_{x\in\cX}|\eta_j(\theta;x)|\le 1$, $\|\ps{y|x}-\ps{0,y|x}\|_{\mathcal H}\le \delta$ implies 
\begin{align}
\sup_{x\in\cX}|Z_1(x;\ps{y|x})-Z_1(x;\ps{0,y|x})|\le 3\delta.\label{eq:boundDelta2}
\end{align}
Combining \eqref{eq:boundDelta1}-\eqref{eq:boundDelta2} yields
$-3\delta\le Z_1(x;\ps{0,y|x})\le 0$.
By \eqref{eq:defZ} and $\eta_j(\theta;x)\ge c$,
\begin{multline}
(\ps{0,y|x}((1,0)|x)+\ps{0,y|x}((0,1)|x))\tfrac{\eta_{2}(\theta;x)}{\eta_1(\theta;x)}-\tfrac{3}{c}\delta	
\\\le \ps{0,y|x}((1,0)|x)\le (\ps{0,y|x}((1,0)|x)+\ps{0,y|x}((0,1)|x))\tfrac{\eta_{2}(\theta;x)}{\eta_1(\theta;x)}\label{eq:boundDelta3}
\end{multline}
Using Assumption \ref{ass:verify_entry_game} \ref{ass:p_yx_bounded_below}, this may also be written as 
\begin{align*}
\tfrac{\eta_{2}(\theta;x)}{\eta_1(\theta;x)}-\tfrac{3}{c(\ps{0,y|x}((1,0)|x)+\ps{0,y|x}((0,1)|x))}\delta\le\tfrac{\ps{0,y|x}((1,0)|x)}{(\ps{0,y|x}((1,0)|x)+\ps{0,y|x}((0,1)|x))}\le \tfrac{\eta_{2}(\theta;x)}{\eta_1(\theta;x)}.
\end{align*}
Since $\tfrac{\ps{0,y|x}((0,1)|x)}{(\ps{0,y|x}((1,0)|x)+\ps{0,y|x}((0,1)|x))}=1-\tfrac{\ps{0,y|x}((1,0)|x)}{(\ps{0,y|x}((1,0)|x)+\ps{0,y|x}((0,1)|x))}$, we obtain
\begin{align*}
1-\tfrac{\eta_{2}(\theta;x)}{\eta_1(\theta;x)}	\le\tfrac{\ps{0,y|x}((0,1)|x)}{(\ps{0,y|x}((1,0)|x)+\ps{0,y|x}((0,1)|x))}
\le 1-\tfrac{\eta_{2}(\theta;x)}{\eta_1(\theta;x)}	+\tfrac{3}{c(\ps{0,y|x}((1,0)|x)+\ps{0,y|x}((0,1)|x))}\delta.
\end{align*}
This may in turn be written as 
\begin{multline}
(\ps{0,y|x}((1,0)|x)+\ps{0,y|x}((0,1)|x))\tfrac{\eta_1(\theta;x)-\eta_{2}(\theta;x)}{\eta_1(\theta;x)}	\\
\le\ps{0,y|x}((0,1)|x)
\le(\ps{0,y|x}((1,0)|x)+\ps{0,y|x}((0,1)|x))\tfrac{\eta_1(\theta;x)-\eta_{2}(\theta;x)}{\eta_1(\theta;x)}+\tfrac{3}{c}\delta.	\label{eq:boundDelta6}
\end{multline} 
By \eqref{eq:boundDelta3} and \eqref{eq:boundDelta6}, let us write
\begin{align}
	\ps{0,y|x}((1,0)|x)&=(\ps{0,y|x}((1,0)|x)+\ps{0,y|x}((0,1)|x))\tfrac{\eta_{2}(\theta;x)}{\eta_1(\theta;x)}+r_{(1,0)}(x)\label{eq:boundDelta7}\\
	\ps{0,y|x}((0,1)|x)&=(\ps{0,y|x}((1,0)|x)+\ps{0,y|x}((0,1)|x))\tfrac{\eta_1(\theta;x)-\eta_{2}(\theta;x)}{\eta_1(\theta;x)}+r_{(0,1)}(x),\label{eq:boundDelta8}
\end{align}
where $r_{(1,0)}(x)\in [-3\delta/c,0]$ and $r_{(0,1)}(x)\in [0,3\delta/c]$ for all $x\in\cX$.
From Table \ref{tab:modulus}, the value of $\Delta(x;\ps{y|x},\ps{0,y|x})$ when $\mathbb I(x;\ps{y|x})=(0,1,0)$ and $\mathbb I(x;\ps{0,y|x})=(1,0,0)$ is
\begin{multline}
\Delta(x;\ps{y|x},\ps{0,y|x})=\Big\|\ps{0,y|x}((1,0)|x)\tfrac{\nabla_\theta\eta_2(\theta;x)}{\eta_2(\theta;x)}+\ps{0,y|x}((0,1)|x)\tfrac{\nabla_\theta \eta_1(\theta;x)-\nabla_\theta\eta_2(\theta;x)}{\eta_1(\theta;x)- \eta_2(\theta;x)}\\-[\ps{0,y|x}((1,0)|x)+\ps{0,y|x}((0,1)|x)]\tfrac{\nabla_\theta\eta_1(\theta;x)}{\eta_1(\theta;x)}\Big\|.\label{eq:boundDelta9}
\end{multline}
By \eqref{eq:boundDelta7}-\eqref{eq:boundDelta8}, the terms inside the norm in \eqref{eq:boundDelta9} can therefore be written as
\begin{multline*}
[\ps{0,y|x}((1,0)|x)+\ps{0,y|x}((0,1)|x)]\Bigg(\tfrac{\nabla_\theta\eta_2(\theta;x)}{\eta_1(\theta;x)}+\tfrac{\nabla_\theta \eta_1(\theta;x)-\nabla_\theta\eta_2(\theta;x)}{\eta_1(\theta;x)}-\tfrac{\nabla_\theta\eta_1(\theta;x)}{\eta_1(\theta;x)}\Bigg)\\
+\tfrac{\nabla_\theta\eta_2(\theta;x)}{\eta_2(\theta;x)}r_{(1,0)}(x)+\tfrac{\nabla_\theta \eta_1(\theta;x)-\nabla_\theta\eta_2(\theta;x)}{\eta_1(\theta;x)-\eta_2(\theta;x)}r_{(0,1)}(x).
\end{multline*}
By the triangle inequality, $\eta_j(\theta;x)\ge c$, and Assumption \ref{ass:verify_entry_game} \ref{ass:entry_bounded_derivative}, we obtain
\begin{align*}
\Delta(x;\ps{y|x},\ps{0,y|x})&\le\|\nabla_\theta\eta_2(\theta;x)\|\Big|\tfrac{r_{(1,0)}(x)}{\eta_2(\theta;x)}\Big|+( \|\nabla_\theta \eta_1(\theta;x)\|+\|\nabla_\theta\eta_2(\theta;x)\|)\Big|\tfrac{r_{(0,1)}(x)}{\eta_1(\theta;x)-\eta_2(\theta;x)}\Big|\\
&\le\tfrac{3C}{c^2}\delta+\tfrac{6C}{c^2}\delta,
\end{align*}
which establishes the claim of the lemma for $\mathbb I(x;\ps{y|x})=(0,1,0)$ and $\mathbb I(x;\ps{0,y|x})=(1,0,0)$. 
The other cases can be analyzed similarly.
\end{proof}

We next establish stochastic equicontinuity for the empirical process
\begin{align*}
\mathbb G_n(p)=\tfrac{1}{\sqrt n}\sum_{i=1}^n \left(\score{\theta}{\ey_i|\ex_i;p}-\E[\score{\theta}{\ey_i|\ex_i;p}]\right),~~p\in\mathcal H
\end{align*}
In the definition of $\mathbb G_n$, the structural parameter $\theta$ is fixed. Hence, the function class $\mathcal F=\{f(y,x):f(y,x)=\score{\theta}{y|x;p},p\in\mathcal H\}$ is defined by mixing and matching $p$ with fixed functions such as $\eta_j(x,\theta)$. For example, we may write $\score{\theta}{(1,0)|x;p}$ as
\begin{align*}
    \score{\theta}{(1,0)|x;\ps{0,y|x}}=\sum_{j=1}^3\tfrac{\nabla_\theta\eta_j(\theta;x)}{\eta_j(\theta;x)}\mathbb I_j(x;\ps{y|x}),
\end{align*}
where $\mathbb I_j(x;\ps{y|x}),j=1,2,3$, are defined in \eqref{eq:defI1}-\eqref{eq:defI3}.
\begin{proposition}
\label{lemma:donsker_game}
Suppose Assumptions \ref{ass:finite_support}, \ref{ass:verify_entry_game}, and \ref{as:eg_support} hold for the entry game model in Example \ref{example:CT}.
Then Assumption \ref{as:newey_modified2} \ref{ass:equic} also holds.
\end{proposition}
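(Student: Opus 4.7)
The plan is to establish that the function class $\mathcal{F}_{\theta^*}=\{(y,x)\mapsto \score{\theta^*}{y|x;p}:p\in\mathcal{H}\}$ is $P_0$-Donsker; Assumption~\ref{as:newey_modified2}\ref{ass:equic} then follows from the asymptotic equicontinuity property of empirical processes indexed by a Donsker class (e.g.~van der Vaart and Wellner, 1996, Theorem~2.1.5), combined with the fact, to be verified below, that the $L^2(P_0)$-diameter of $\{\score{\theta^*}{\cdot;p}:\|p-\ps{0,y|x}\|_{\mathcal{H}}\le\delta_n\}$ vanishes. Since $\score{\theta^*}{(0,0)|x;p}$ and $\score{\theta^*}{(1,1)|x;p}$ do not depend on $p$, and $\score{\theta^*}{(0,1)|x;p}$ and $\score{\theta^*}{(1,0)|x;p}$ are fixed linear combinations of the three indicators $\mathbb{I}_j(x;p)$, $j=1,2,3$, with bounded coefficients $\nabla_\theta\eta_k(\theta^*;x)/\eta_k(\theta^*;x)$ and $[\nabla_\theta\eta_1-\nabla_\theta\eta_k]/[\eta_1-\eta_k]$ (uniformly bounded by Assumptions~\ref{ass:finite_support}(e) and \ref{ass:verify_entry_game}(i)–(ii)), it suffices to prove the Donsker property for each indicator class $\mathcal{I}_j=\{x\mapsto\mathbb{I}_j(x;p):p\in\mathcal{H}\}$, $j=1,2,3$.

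\textbf{Key linearization and the discrete case.} From \eqref{eq:defZ}, $p\mapsto Z_j(x;p)$ is linear with coefficients controlled by $\sup_x|\eta_k(\theta^*;x)|\le 1$, so $\|p-p'\|_{\mathcal{H}}\le\epsilon$ implies $\sup_x|Z_j(x;p)-Z_j(x;p')|\le 3\epsilon$. Hence $\mathbb{I}_j(\cdot;p)$ and $\mathbb{I}_j(\cdot;p')$ disagree only on $\{x:|Z_j(x;p')|\le 3\epsilon\}$. In case~\ref{ass:support:discrete} of Assumption~\ref{as:eg_support}, $\mathcal{H}$ is a compact subset of a finite-dimensional Euclidean space, and by Lemma~\ref{lem:boundDelta} and the separation argument in the proof of Proposition~\ref{lemma:pathwise_game}, for $\|p-\ps{0,y|x}\|_{\mathcal{H}}$ smaller than $c/4$ the score $\score{\theta^*}{\cdot;p}$ is constant in $p$ outside the finitely many $x\in\cX_0^c$; the class is thus uniformly bounded with a Lipschitz parameterization, hence Donsker by van der Vaart and Wellner (1996), Example~19.7.

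\textbf{Bracketing argument for continuous and mixed covariates.} Under cases~\ref{ass:support:continuous}–\ref{ass:support:mixed} of Assumption~\ref{as:eg_support}, $\mathcal{H}$ is (for the mixed case, a Lipschitz image of a product of) a Hölder ball $\mathcal{C}^\alpha_M(\cX_c)$ indexed also by the finite set $\cX_d$, so Corollary~2.7.2 of van der Vaart and Wellner gives $\log N_{[]}(\epsilon,\mathcal{H},\|\cdot\|_\infty)\le K\epsilon^{-d_{X_c}/\alpha}$. Given an $\epsilon$-bracket $[p^L,p^U]$ in sup-norm, any $p$ in it satisfies $\sup_x|Z_j(x;p)-Z_j(x;p^L)|\le 3\epsilon$, so the pair of envelopes $\mathbb{I}_j^L(x)=\mathbb{I}_j(x;p^L)\one\{|Z_j(x;p^L)|>3\epsilon\}$ and $\mathbb{I}_j^U(x)=\mathbb{I}_j^L(x)+\one\{|Z_j(x;p^L)|\le 3\epsilon\}$ bracket $\mathbb{I}_j(\cdot;p)$. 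By conditioning on $X_d$ and invoking the uniformly bounded conditional density of $Z_j\mid X_d$ from Assumption~\ref{ass:verify_entry_game}(iii), $P_0(|Z_j(X;p^L)|\le 3\epsilon)\le c\epsilon$, so the $L^2(P_0)$-diameter of this bracket is $O(\epsilon^{1/2})$. Therefore $\log N_{[]}(\epsilon,\mathcal{I}_j,L^2(P_0))\le K'\epsilon^{-2d_{X_c}/\alpha}$, and the bracketing integral $\int_0^1\sqrt{\log N_{[]}(\epsilon,\mathcal{I}_j,L^2(P_0))}d\epsilon<\infty$ whenever $2d_{X_c}/\alpha<2$, which holds by $\alpha>d_X\ge d_{X_c}$. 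Hence $\mathcal{I}_j$, and consequently $\mathcal{F}_{\theta^*}$, are $P_0$-Donsker, and the $L^2(P_0)$-modulus computation above shows that $\|p-\ps{0,y|x}\|_{\mathcal{H}}\le\delta_n\to 0$ implies $L^2(P_0)$-convergence of $\score{\theta^*}{\cdot;p}$ to $\score{\theta^*}{\cdot;\ps{0,y|x}}$, completing the asymptotic equicontinuity claim.

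\textbf{Main obstacle.} The delicate step is the density-based conversion from a sup-norm bracket on $\mathcal{H}$ to an $L^2(P_0)$-bracket on $\mathcal{I}_j$, since indicators are not Lipschitz in $p$: this is exactly where Assumption~\ref{ass:verify_entry_game}(iii) is indispensable. In the mixed case, one additionally needs to verify that $p_{y|x}(y_k|x)=\phi_k(\ell_c(y_k|x_c),\ell_d(y_k|x_d))$ inherits a sup-norm bracketing entropy of the same order $\epsilon^{-d_{X_c}/\alpha}$ from the Hölder component $\ell_c(y_k|\cdot)$ and the finite-dimensional discrete component $\ell_d(y_k|\cdot)$ through the Lipschitz map $\phi_k$, which is a routine product-class argument.
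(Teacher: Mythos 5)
Your proposal is substantively correct and rests on the same three pillars as the paper's own proof: the decomposition of the score by outcome (with the $(0,0)$ and $(1,1)$ pieces free of $p$), the linearization $\sup_{x\in\cX}|Z_j(x;p)-Z_j(x;p')|\le 3\|p-p'\|_{\cH}$ obtained from \eqref{eq:defZ}, and the conversion of sup-norm proximity in $\cH$ into a small probability of indicator disagreement via the bounded density of $Z_j$ in Assumption \ref{ass:verify_entry_game}\ref{ass:entry_smooth_CDF}, together with the H\"older-ball entropy bound and the Lipschitz product construction for the mixed case. Where you genuinely diverge is the machinery for the continuous and mixed cases: the paper verifies the two high-level conditions of \citet[Theorem 3]{Chen2003} --- the $L^2$-H\"older continuity in $p$ with exponent $1/2$ (its display \eqref{eq:ind3}--\eqref{eq:ind4}, which is your bracket-diameter computation in disguise) and finiteness of the entropy integral for $\cH$ --- whereas you build explicit $L^2(P_0)$ brackets for the indicator classes and invoke the bracketing CLT directly, then pass from Donskerness plus a vanishing $L^2(P_0)$-modulus to Assumption \ref{as:newey_modified2}\ref{ass:equic}. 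The two routes are essentially equivalent (the cited theorem of Chen, Linton, and van Keilegom is itself proved by such a bracketing/chaining argument), but yours is self-contained and makes the bridge from Donskerness to the sup-norm-indexed equicontinuity statement explicit, a step the paper leaves implicit; the paper buys brevity by outsourcing the chaining to a quotable result. Your exponent bookkeeping ($L^2$ bracket size $O(\epsilon^{1/2})$, entropy $\epsilon^{-2d_{X_c}/\alpha}$, integrability from $\alpha>d_X\ge d_{X_c}$) is correct.

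The one step that fails as written is your discrete case. The map $p\mapsto\score{\theta^*}{y|x;p}$ is \emph{not} a Lipschitz parameterization --- it is piecewise constant in $p$ with jumps across the sets $\{p:Z_j(x;p)=0\}$ --- so a Donsker theorem for Lipschitz-in-parameter classes (your ``Example 19.7'') does not apply. The conclusion is nevertheless true and easily repaired: since $\cY\times\cX$ is finite and exactly one of $\mathbb I_1,\mathbb I_2,\mathbb I_3$ equals one at each $x$, every element of $\mathcal F_{(1,0)}$ is determined by the configuration $\{\mathbb I(x;p):x\in\cX\}$, so the class contains at most $d_\theta\, 3^{|\cX|}$ distinct bounded functions and is trivially Donsker; alternatively, argue as the paper does that $\cH$ is a bounded subset of a finite-dimensional space, hence a VC class, and that $\mathcal F_{(1,0)}$ is produced from $\cH$ and finitely many fixed functions by VC-preserving operations \citep[Lemmas 2.6.15 and 2.6.18, Theorem 2.5.2]{Vaart:1996wk}. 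Your appeal to the separation property (Lemma \ref{lem:eg_separated}) is the right ingredient for the $L^2$-modulus part of the discrete case and matches how the paper's Proposition \ref{lemma:pathwise_game} treats the set $\cX_0$, so once the Donsker justification is swapped out, the rest of your argument stands.
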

\begin{proof}
Let 
{\small
\begin{align*}
\mathcal F_{y} &=\Big\{f:f(w;p)=\tfrac{\nabla_\theta\f(S_{\{y\}|x;\theta})}{\f(S_{\{y\}|x;\theta})}	\Big\},\quad y\in\{(0,0),(1,1)\}\\
	\mathcal F_{(0,1)}&=\Big\{f:f(w;p)=\tfrac{e_l'\nabla_\theta\eta_1(\theta;x)}{\eta_1(\theta;x)}\mathbb I_1(x;p)+\sum_{j=2}^3\tfrac{e_l'(\nabla_\theta\eta_1(\theta;x)-\nabla_\theta\eta_j(\theta;x))}{\eta_1(\theta;x)-\eta_j(\theta;x)}\mathbb I_j(x;p),l=1,\dots,d_\theta,~ p\in\mathcal H\Big\},\\
	\mathcal F_{(1,0)}&=\Big\{f:f(w;p)=\sum_{j=1}^3\tfrac{e_l'\nabla_\theta\eta_j(\theta;x)}{\eta_j(\theta;x)}\mathbb I_j(x;p),~l=1,\dots,d_\theta,~ p\in\mathcal H\Big\},
\end{align*}	}
where for each $l$, $e_l$ denotes the $l$-th standard basis vector in $\mathbb R^{d_\theta}$.

The score satisfies $\score{\theta}{\cdot;\ps{y|x}}\in\mathcal F\equiv \sum_{\bar y\in\mathcal Y}\mathcal F_{\bar y}\cdot 1\{y=\bar y\}$. 
In view of Theorem 2.10.6 (and Examples 2.10.7 and 2.10.10) in \citet{Vaart:1996wk}, to verify Assumption \ref{as:newey_modified2} \ref{ass:equic} it suffices to show that $\mathcal F_{\bar y}$ is $P$-Donsker for each $\bar y$. 
The $P$-Donskerness of $\mathcal F_{\bar y}$ for $\bar y=(0,0), (1,1)$ follows from each set being a singleton and Assumption \ref{as:eg_support}  \ref{as:eg_bounds}. Below, we show   $\mathcal F_{(1,0)}$ is $P$-Donsker. The analysis for $\mathcal F_{(0,1)}$ is similar and is therefore omitted.
	
\noindent \textbf{Discrete $X$:}	
First, suppose that $X$ is a vector of discrete random variables and Assumption \ref{as:eg_support}\ref{ass:support}\ref{ass:support:discrete} holds. We show that  $\mathcal F_{(1,0)}$  is a Vapnik-Chervonenkis (VC) class, which satisfies Pollard's uniform entropy condition. 
As $\cY\times\cX$ is finite, $\mathcal H$ is finite-dimensional.
By \citet[Lemma 2.6.15]{Vaart:1996wk}, this class has VC-index $V(\mathcal H)\le d_Y\times d_\ex+2$, and hence $\mathcal H$ is a VC-class.  The finite set of functions $\mathcal E=\{\eta_j(\theta,\cdot),\tfrac{\partial}{\partial\theta_k}\eta_j(\theta,\cdot),j=1,\dots,3, k=1,\dots,d_\theta\}$ is also a VC-class. As $\mathcal F_{(1,0)}$ collects functions that can be expressed as combinations of functions from $\cH$ and $\mathcal E$ by multiplication, addition, division, and composition with an indicator function $1\{\cdot>0\}$, $\mathcal F_{(1,0)}$ is a VC-class \citep[][Lemma 2.6.18]{Vaart:1996wk}.
Assumptions \ref{ass:finite_support} and \ref{ass:verify_entry_game}\ref{ass:entry_bounded_derivative} ensure that there is an envelope (constant) function $F=3C/c$  such that $|f|\le F$ for all $f\in \mathcal F_{(1,0)}$. By Theorem 2.5.2 in \cite{Vaart:1996wk},  $\mathcal F_{(1,0)}$ is a $P$-Donsker class.

\noindent \textbf{Continuous $X$:}
Next, suppose that  $X$ is a vector of continuous random variables and Assumption \ref{as:eg_support}\ref{ass:support}\ref{ass:support:continuous} holds.
We show  $\mathcal F_{(1,0)}$ is $P$-Donsker by verifying the conditions in \citet[Theorem 3]{Chen2003}.
We first show the $L^2$-H\"older continuity of $f\in \mathcal F_{(1,0)}$ in $p$. 
In what follows, let $U_{l,j}=e_l'\nabla_\theta\eta_j(\theta;X)/\eta_j(\theta;X)$, $l=1,\dots,d$, $j=1,\dots,3$ and $u_{l,j}=e_l'\nabla_\theta\eta_j(\theta;x)/\eta_j(\theta;x)$. By the triangle inequality, for some constant $C>0$,
\begin{align}
\sup_{\|p-p'\|_{\mathcal H}\le \delta}|f(w;p')-f(w;p)|^2&\le C\sup_{\|p-p'\|_{\mathcal H}\le \delta}\sum_{j=1}^3 u_{l,j}^2|\mathbb I_j(x;p')-\mathbb I_j(x;p)|,\label{eq:entropy1}
\end{align}
Below, we focus on $u_{l,3}^2|\mathbb I_3(x;p')-\mathbb I_3(x;p)|$, one of the terms in the sum on the right hand side of \eqref{eq:entropy1}. The two other terms can be analyzed similarly. 
For $\delta$ sufficiently small,
\begin{multline*}
\E\Big[\sup_{\|p-p'\|_{\mathcal H}\le \delta}U_{l,3}^2|\mathbb I_3(X;p')-\mathbb I_3(X;p)|\Big]\\
=\E\Big[ \sup_{\|p-p'\|_{\mathcal H}\le \delta} U_{l,3}^2|1\{Z_2(X;p'_{y|x})<0\}-1\{Z_2(X;p_{y|x})<0\}|\Big].
\end{multline*}
By \eqref{eq:defZ} and $\sup_{x\in\cX}|\eta_j(\theta;x)|\le 1$, whenever $\|p'_{y|x}-\ps{y|x}\|_{\mathcal H}\le \delta$, we have
\begin{align}
\sup_{x\in\cX}|Z_j(x;p'_{y|x})-Z_j(x;\ps{y|x})|\le 3\delta,~j=1,2.\label{eq:bound:Z_j}
\end{align}
We next use the argument in \citet[p.~1600]{Chen2003}.
Combining one side of \eqref{eq:bound:Z_j}, with the addition of a non-negative constant, we have $Z_2(x;p_{y|x})-3\delta \le Z_2(x;p'_{y|x}) \le Z_2(x;p'_{y|x})+3\delta$, and hence
\begin{align}
1\{Z_2(x;p_{y|x})-3\delta< 0 \}\ge 1\{Z_2(x;p'_{y|x})< 0\}\ge 1\{ Z_2(x;p'_{y|x})+3\delta< 0\}.\label{eq:ind1}
\end{align}
Similarly,  $Z_2(x;p'_{y|x})-3\delta \le Z_2(x;p_{y|x}) \le Z_2(x;p_{y|x})+3\delta$ implies
\begin{align}
1\{Z_2(x;p'_{y|x})-3\delta< 0 \}\ge 1\{Z_2(x;p_{y|x})< 0\}\ge 1\{ Z_2(x;p_{y|x})+3\delta< 0\}.\label{eq:ind2}
\end{align}
Combining \eqref{eq:ind1}-\eqref{eq:ind2}, for any $p',p$  with $\|p'-p\|_{\mathcal H}\le \delta$,
\begin{align*}
|1\{Z_2(x;p'_{y|x})< 0\}-1\{Z_2(x;p_{y|x})< 0\}|&\le 	1\{Z_2(x;p_{y|x})-3\delta< 0 \}-1\{ Z_2(x;p_{y|x})+3\delta< 0\}\\
&\le 1\{-3\delta< Z_2(x;p_{y|x})<3\delta\}
\end{align*}
where without loss of generality we assumed that $1\{Z_2(x;p_{y|x})-3\delta< 0 \}-1\{ Z_2(x;p_{y|x})+3\delta< 0\}>1\{Z_2(x;p'_{y|x})-3\delta< 0 \}-1\{ Z_2(x;p'_{y|x})+3\delta< 0\}$.
By the argument above, the law of iterated expectations, and Assumptions \ref{ass:verify_entry_game}\ref{ass:entry_bounded_derivative}, \ref{as:eg_support}\ref{ass:support}\ref{ass:support:continuous}, and \ref{as:eg_support}\ref{as:eg_bounds}, 
\begin{multline}
\E\Big[ \sup_{\|p-p'\|_{\mathcal H} \le \delta}  U_{l,3}^2|1\{Z_2(X;p'_{y|x})<0\}-1\{Z_2(X;p_{y|x})<0\}|\Big]\\
\le \E\Big[ U_{l,3}^2 1\{-3\delta< Z_2(X;p_{y|x})<3\delta\}\Big]\\
\le\tfrac{C^2}{c^2}\int 1\{-3\delta< z_2<3\delta\}f_{Z_2}(z_2)d z_2\le K\delta,\label{eq:ind3}
\end{multline}
for some constant $K>0$, where the last inequality follows from Assumption \ref{ass:verify_entry_game} \ref{ass:entry_smooth_CDF}. 
Applying a similar argument to the other two terms in \eqref{eq:entropy1}, one can obtain
\begin{align}
\E\Big[\sup_{\|p-p'\|_{\mathcal H}\le \delta}|f(W;p')-f(W;p)|^2\Big]^{1/2}\le K'\delta^{1/2},	\label{eq:ind4}
\end{align}
for some $K'>0$. Hence $f$ is $L^2$-H\"{o}lder continuous in $p$ with H\"{o}lder exponent $1/2$.

Recall that $\cX$ is a bounded convex subset of $\mathbb R^{d_\ex}$ with nonempty interior. By  Theorem 2.7.1 in \cite{Vaart:1996wk}, $\ln N(\epsilon^{2},\mathcal C^\alpha_M(\cX),\|\cdot\|_\infty)\le K\left(\tfrac{1}{\epsilon}\right)^{2d_\ex/\alpha}$ for some $K>0$. 
Note that $\mathcal H\subset(\mathcal C^\alpha_M(\cX))^{\cY}$ and $|\cY|=4$. For each $y\in \cY$, let $\{p_1(y|\cdot),\dots,p_k(y|\cdot)\}$ be an $\epsilon^2$-cover for $\mathcal C^\alpha_M(\cX)$ with respect to the sup norm. Then, $\{(p_{j_1}((0,0)|\cdot),p_{j_2}((0,1)|\cdot),p_{j_3}((1,0)|\cdot),$ $p_{j_4}((1,1)|\cdot),j_l\in\{1,\dots,k\},l=1,\dots,4\}$ forms an $\epsilon^2$-cover for  $(\mathcal C^\alpha_M(\cX))^{\cY}$ with respect to the maximum of the sup norms. Hence,
\begin{align}
	 N(\epsilon^{2},\mathcal C^\alpha_M(\cX)^{\cY},\|\cdot\|_\infty)\le e^{4K\left(\tfrac{1}{\epsilon}\right)^{2d_\ex/\alpha}},\label{eq:epsilon_cover}
\end{align} 
which in turn implies $\ln N(\epsilon^{2},\mathcal H,\|\cdot\|_\infty)\le 4K\left(\tfrac{1}{\epsilon}\right)^{2d_\ex/\alpha}$.
Since $\alpha>d_\ex$, we have 
\begin{align}
\int_0^\infty \sqrt{\ln N(\epsilon^{2},\mathcal H,\|\cdot\|_{\mathcal H})}d\epsilon<\infty.	\label{eq:ind6}
\end{align}
We can now apply Theorem 3 in \cite{Chen2003}, which ensures that $\mathcal F_{(1,0)}$ is $P$-Donsker.

\noindent \textbf{Mixed $X$:}
Finally, suppose that $X$ contains both continuous and discrete variables and Assumption \ref{as:eg_support}\ref{ass:support}\ref{ass:support:mixed} holds. Again, we use Theorem 3 in \cite{Chen2003}.
We can  argue as in the previous case, but \eqref{eq:ind3} is modified as follows:
\begin{multline}
\E\Big[ \sup_{\|p-p'\|_{\mathcal H} \le \delta}  U_{l,3}^2|1\{Z_2(X;p'_{y|x})<0\}-1\{Z_2(X;p_{y|x})<0\}|\Big]\\
\le \E\Big[ U_{l,3}^2 1\{-3\delta< Z_2(X;p_{y|x})<3\delta\}\Big]\\
\le\tfrac{C^2}{c^2}\E\left[\int 1\{-3\delta< z_2<3\delta\}f_{X_2|\ex_d}(z_2)d z_2\right]\le K\delta,\label{eq:ind7}
\end{multline}
for some constant $K>0$, where the last inequality follows from Assumption \ref{ass:verify_entry_game}\ref{ass:entry_smooth_CDF}. 
Therefore, \eqref{eq:ind4} holds. 

Next we show \eqref{eq:ind6}. 
Recall that $N(\epsilon^{2},\mathcal C^\alpha_M(\cX),\|\cdot\|_\infty)\le e^{K\left(\tfrac{1}{\epsilon}\right)^{2d_\ex/\alpha}}$ for some $K>0$. Furthermore, $x_d\mapsto \ell_d(y_k|x_d)$ belongs to a finite-dmensional space $[-M,M]^{\cX_d}$ with covering number $N(\epsilon^2,[-M,M]^{\cX_d},\|\cdot\|_\infty)\le \big(\tfrac{\sqrt{2M}}{\epsilon}\big)^{2\dim(\cX_d)}$. For each $l$, let $p_{c,1}(y_l|\cdot)$, $\cdots$, $p_{c,N_1}(y_l|\cdot)$ be an $\epsilon^2$-cover of $\mathcal C^\alpha_M(\cX_c)$. Similarly, let $p_{d,1}(y_l|\cdot),\dots,p_{d,N_2}(y_l|\cdot)$ be an $\epsilon^2$-cover of $[-M,M]^{\cX_d}$. Then, for any $\ps{y|x}\in\mathcal H$ and $l\in\{1,\dots,4\}$, there exist $k_1\in\{1,\dots,N_1\}$, $k_2\in\{1,\dots,N_2\}$, and $(\ell_c(y_k|\cdot),\ell_d(y_k|\cdot))\in \mathcal C^\alpha_M(\cX_c)\times [-M,M]^{\cX_d}$ such that
\begin{multline*}
	\sup_{x=(x_c',x_d')'\in \cX_c\times\cX_d}\big|\ps{y_l|x}(y|x)-\phi_k(p_{c,k_1}(y_l|x_c),p_{d,k_1}(y_l|x_d))\big|\\
	=\sup_{x=(x_c',x_d')'\in \cX_c\times\cX_d}\big|\phi_k(p_{c}(y_l|x_c),p_{d}(y_l|x_d))-\phi_k(p_{c,k_1}(y_l|x_c),p_{d,k_2}(y_l|x_d))\big|\\
	\le C\max\{\|p_{c}(y_l|\cdot)-p_{c,k_1}(y_l|\cdot)\|_\infty,\|p_{d}(y_l|\cdot)-p_{d,k_2}(y_l|\cdot)\|_\infty\}\le C\epsilon^2,
\end{multline*}
for some $0<C<\infty$ due to the Lipschitz continuity of $\phi_k$.
Therefore $\{(p_{c,k_1}(y_l|\cdot),p_{d,k_2}(y_l|\cdot))_{l=1}^4, $ $k_1\in\{1,\dots,N_1\}, k_2\in\{1,\dots,N_2\},l\in\{1,\dots,4\}\}$ is an $C\epsilon^2$-cover of $\mathcal H.$ Hence,
\begin{align*}
	N(\epsilon^{2},\mathcal H,\|\cdot\|_\infty)\le \Big(N(\epsilon^{2}/C,\mathcal C^\alpha_M(\cX_c),\|\cdot\|_\infty)\times N(\epsilon^2/C,[-M,M]^{\cX_d},\|\cdot\|_\infty)\Big)^4,
\end{align*}
which in turn implies, for some $K'>0$ for all $\epsilon$ small enough,
\begin{align*}
	\ln N(\epsilon^{2},\mathcal H,\|\cdot\|_\infty)\le 4K\left(\tfrac{\sqrt C}{\epsilon}\right)^{2d_{\ex_c}/\alpha} +8\dim(\cX_d) \ln  \big(\tfrac{\sqrt{2M}}{\epsilon}\big)\le K'\left(\tfrac{\sqrt C}{\epsilon}\right)^{2d_{\ex_c}/\alpha}.
\end{align*}
Again, by $\alpha>d_{X_c}$, we obtain \eqref{eq:ind6}. This completes the proof of the proposition.
\end{proof}

We conclude this section by arguing that provided $\eX$ has at least one component with continuous distribution, under Assumptions \ref{as:newey_modified2} \ref{ass:consistent_est}, \ref{ass:verify_entry_game} \ref{ass:entry_smooth_CDF}, and \ref{as:eg_support} \ref{as:eg_bounds}, the consistency of the covariance matrix estimator $\hat\Sigma_{n,\theta^*}$ required in \eqref{ass:consistent_cov} holds.
This follows from \eqref{eq:ind4}, arguing as in \citet[Theorem 3.4]{pow:sto:sto89}, leveraging Assumption \ref{as:newey_modified2} \ref{ass:consistent_est} and that for $\bar y=(0,0), (1,1)$ the score does not depend on $\ps{n,y|x}$ together with Assumption \ref{as:eg_support}  \ref{as:eg_bounds}.

\subsection{Uniform Convergence for Series Estimators of $\ps{0}$}\label{app:verify:series}
In this section we provide sufficient conditions under which it is possible to verify Assumption~3(ii)' in Theorem~\ref{thm:uniform_coverage} through an application of results in \citet{ChenChristensen2015}.
A formal verification of Assumption~3(ii)' is available from the authors upon request.
For simplicity, we focus on the setting where all components of $X$ are continuous.
We let $b^K(x)=(b_{K1}(x),\dots,b_{KK}(x))'\in\mathbb R^K$ be a collection of $K$ basis functions, and let $B=(b^K(X_1),\dots,b^K(X_n))'\in \mathbb R^{n\times K}$. We let $\mathbf 1_y=(1\{Y_1=y\},\dots,1\{Y_n=y\})'$.
We approximate each $\ps{0}(y|\cdot)$ by the series estimator:
\begin{align*}
    \hat p_n(y|x) = b^K(x)'(B'B)^-B'\mathbf 1_y.
\end{align*}
We let $\text{BSpl}(K,[0,1]^{d_X},\gamma)$ denote a B-spline sieve of degree $\gamma$ and dimension $K$ on the domain $[0, 1]^{d_X}$, and $\text{Wav}(K,[0,1]^{d_X},\gamma)$ denote a Wavelet sieve basis of regularity $\gamma$ and dimension $K$ on the domain $[0, 1]^{d_X}$.
The construction of these sieve spaces is discussed in \citet[Section 6]{ChenChristensen2015}.
We maintain the following restrictions:
\begin{assumptionA}\label{ass:series:est}
    \textbf{(1)} $\cX=[0,1]^{d_X}$ and $\ps{0,\ex}$ is uniformly bounded away from zero on $\cX$. 
    \textbf{(2)} Assumption~\ref{as:eg_support}-\ref{ass:support:continuous} holds and $\|\ps{y|x}\|_{\mathcal H}=\sup_{y\in \mathcal Y}\sup_{x\in\cX}|\ps{y|x}(y|x)|$.
    \textbf{(3)} The sieve space is either $\text{BSpl}(K,[0,1]^{d_X},\gamma)$ or $\text{Wav}(K ,[0, 1]^{d_X},\gamma)$ with $\gamma>\max\{\alpha,1\}$.
    \textbf{(4)} $\lambda_{K,n}\equiv\sup_{P\in \mathcal P}[\lambda_{min}(\E_P[b^K(X_i)b^K(X_i)'])]^{-1/2}\lesssim 1$.
\end{assumptionA}
Assumption~\ref{ass:series:est}-(4) essentially assumes a uniform lower bound on the minimum eigenvalue of $\E_P[b^K(X_i)b^K(X_i)']$. Under Assumption~\ref{ass:series:est}, one can leverage Theorems 2.1 and 3.4 and Lemma 2.3 in \citet{ChenChristensen2015} to show that if $K \asymp (n/\ln n)^{d_X/(2\alpha+d_X)}$, then $\|\hat p_n-\ps{0,y|x}\|_\cH=O_{\mathcal P}\big((n/\ln n)^{-\tfrac{\alpha}{2\alpha+d_X}}\big)$.
In particular, one can attain the desired rate $o_{\mathcal P}(n^{-1/4})$ if $\alpha>d_X/2$.

\section{Additional Examples}\label{app:examples}
\begin{exampleA}[Discrete choice with unobserved heterogeneity in choice sets]
\label{example:BCMT}
Consider a discrete choice model, with a finite universe of alternatives $\mathcal{J} = \{1,\dots,J\}$.
Let each alternative be characterized by a vector of covariates $\ex_j$, which might vary across decision makers, and let $\ex=[\ex_j,j\in\mathcal{J}]$.
Let $\eu\in\R^{d_\eu}$ denote a vector of decision maker and/or alternative specific unobservable attributes.
As in the model proposed by \citet{bar:cou:mol:tei21}, the decision maker draws a \emph{choice set} $C\subseteq\mathcal{J}$ according to an unknown distribution, with $\PP(|C|\ge\kappa)=1$ for some known $\kappa\ge 2$, and chooses the alternative $\ey\in C$ that maximizes utility $\pi(\ex_{j},\eu;\theta)$,
$\ey\in\argmax_{j\in C} \pi(\ex_{j},\eu;\theta)$, with $\pi(\ex_{j},\eu;\theta)$ defined for all $j\in\mathcal{J}$.
The researcher observes $(\ey,\ex)$, but not $C$, and wishes to learn features of $\theta$ and the distribution of $\eu$. For given $\theta\in\Theta$ and $x\in\cX$,  \citet[Lemma A.1]{bar:cou:mol:tei21} show that the set of model implied optimal choices is a measurable correspondence given by the $J-\kappa+1$ best alternatives in $\cJ$, so that 
$\G(\eu|x;\theta)=\cup_{K\subseteq\mathcal{J}:|K|=\kappa}\left\{\arg\max_{j \in K}  \pi(x_{j},\eu;\theta)\right\}$.

We depict it in Panel (a) of Figure~\ref{fig:G_sets}, for $|\mathcal{J}|=3$ and $\pi(x_{j},\eu;\theta)=\pi(\ex_{j};\theta)+\eu_j$ with $\eu=(\eu_j,j\in\mathcal{J})$, as a function of $(u_1-u_3,u_2-u_3)$.
\begin{figure}
\centering
\includegraphics[scale=1.2]{./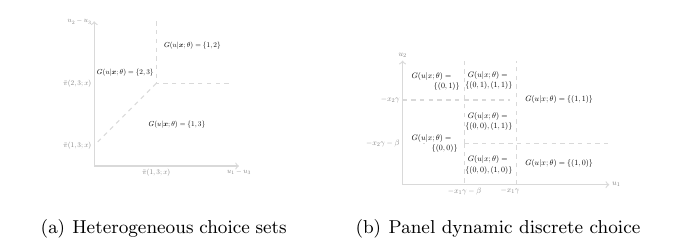}
\caption{\footnotesize{Stylized depictions of $\G(\cdot|x;\theta)$ in Example \ref{example:BCMT} (Panel (a), with $\pi(\ex_{j},\eu;\theta)=\pi(\ex_{j};\theta)+\eu_j$, $\mathcal{J}=\{1,2,3\}$, $\kappa=2$, and $\bar{\pi}(j,k;x) \equiv \pi(x_{k};\theta)-\pi(x_{j};\theta)$) and Example \ref{example:panel} (Panel (b), with $\beta\ge 0$).}}
		\label{fig:G_sets}
\end{figure}
In this example, if $\eu$ has full support on $\R^J$, Assumption \ref{ass:finite_support}-(e)(f) are immediately satisfied with $\cA_G=\{K\subset \cY:|K|\ge J-\kappa+1\}$ because each set of alternatives in $\cJ$ of size $J-\kappa+1$ can realize as the $J-\kappa+1$ best and $\cA^{(*e)}=\{(y,y'):y\neq y',~y,y'\in\cY\}$, with no more than two inequalities binding at $\qs{\theta,y|x}^*$ and one of $\qs{\theta,y|x}^*(y)$ a linear combination of the others using the simplex equality.
Assumption \ref{ass:finite_support}(b)(d) can be verified similarly to how they are verified for Example \ref{example:CT}.

It is also instructive to think about whether the introduction of a selection mechanism can allow for application of the method proposed in \citet{Chen_2018} to this example.\footnote{In the case of the entry game in Example \ref{example:CT}, the selection mechanism in \eqref{eq:exampleCT_selection_mech} can be integrated out against the distribution of $\eu$ to obtain a function that plays the role of the nuisance parameter in \citet{Chen_2018}.}
Let $\PP(\ey_i=j|\ex_i;\theta,R)$ denote the model-implied conditional probability that alternative $j\in\cJ$ is chosen given $\ex_i$ and $(\theta,R)$, where $R(\cdot;\ex_i,\eu_i)$ denotes the conditional probability mass function of $C_i$ given $(\ex_i,\eu_i)$. For all $j\in\cJ$,
\begin{equation*}
\PP(\ey_i=j|\ex_i;\theta,R)=\int\sum_{K\subseteq\cJ}\one\left(\argmax_{k\in K} \left(\pi(\ex_k;\theta)+u_k\right)=j\right)R(K;\ex_i,u)d\f.
\end{equation*}
To be able to apply \citeposs{Chen_2018} method, one needs to further restrict the model and assume that $R$ does not depend on $\eu$, in which case $R(\cdot;\ex_i)$ can come out of the integral.
Doing so, however, severely restricts the class of models to which the procedure is applicable, since it requires the distribution of choice sets to be independent of preferences.
Important examples of choice set formation mechanisms that violate this requirement include sequential search, rational inattention, and elimination by aspects (when the aspect with respect to which elimination occurs is the unobserved characteristic $\eu_j$). \hfill $\square$
\end{exampleA}
\smallskip

\noindent\textbf{Example C.1} (Continued--Geometric Properties of $\Theta^*(\ps{0})$).\label{example:BCMT:geometry} We now specialize Example~\ref{example:BCMT} to an instance where the mapping $\ps{}\mapsto\Theta^*(\ps{})$ is continuous, as discussed in Section~\ref{subsec:geometry_Theta_star}, but $\Theta^*(\ps{0})$ shrinks to a singleton as the amount of misspecification increases.
Let $\mathcal{J}=\{1,2,3\}$, $\kappa=2$, $\ex_j=[c_j~~b_j\ez^*]$, $\ez^*\in\{z_H,z_L\}$ and $z_L<z_H$, with $c_1<c_2<c_3$ and $b_1>b_2>b_3$ known constants (so that $\ex_j$ varies stochastically across decision makers, but non-stochastically across alternatives). 
Let $\pi(\ex_j,\eu;\theta)=-((1-\eu)b_j\ez^*+\eu(c_j+b_j\ez^*))$ and $\f(u)=u^\theta$.
This example is inspired by the study in \citet[Figure 1-a, Section 6.1.1]{bar:cou:mol:tei21} of decision making under risk.\footnote{Here, $c_j$ represents the deductible associated with insurance product $j$ and $b_j\ez$ the product's price.} For analytic tractability, we use a utility function linear in $\ez^*$.
It follows that, given $\ez^*=z$ alternative $1$ is preferred to $3$ if and only if $\eu>\tfrac{b_1-b_3}{c_3-c_1}z$; henceforth, let $A\equiv\tfrac{b_1-b_3}{c_3-c_1}>0$. Here, $\G\in\{\{1,2\},\{2,3\}\}$ and  $\Ps_0(\G=\{1,2\}|\ez^*=z)=1-(Az)^\theta\equiv\eta(\theta;z)$. We assume $(b_j,c_j),j\in\{1,2,3\}$, are such that $Az\in(0,1)$ for both $z=z_L,z_H$; and that $\Ps_0(C=\{1,3\}|\ez^*=z,\eu)=1$ if $\eu>Az$, $\Ps_0(C=\{1,3\}|\ez^*=z,\eu)=0.5$ if $\eu\le Az$. Hence, the true data generating process satisfies $\ps{0,y|z}(1|z)=1-(Az)^\theta$, $\ps{0,y|z}(2|z)=\ps{0,y|z}(3|z)=\tfrac{1}{2}(Az)^\theta$.
We generate misspecification similarly to what we did in Example~\ref{example:CT} on p.\pageref{example:entry_uniform}: the researcher observes a misclassified covariate $\ez$ with $\Ps_0(\ez^*=z_H|\ez=z)=\kappa(z,\gamma)\equiv(1-\gamma)\one(z=z_H)+\gamma\one(z=z_L)$. 

We then have that for $y=2,3$, $\ps{\gamma,y|z}(y|z)=\tfrac{1}{2}[(1-\gamma)(Az_H)^\theta+\gamma(Az_L)^\theta]$ if $z=z_H$ and $\ps{\gamma,y|z}(y|z)=\tfrac{1}{2}[\gamma(Az_H)^\theta+(1-\gamma)(Az_L)^\theta]$ if $z=z_L$; and $\ps{\gamma,y|z}(1|z)=1-\ps{\gamma,2|x}(y|z)-\ps{\gamma,3|x}(y|z)$.
If $\gamma=0$, the model is correctly specified and 
\begin{align}
    \Theta_I(\ps{0})=\left[\max_{\small z\in\{z_L,z_H\}}\tfrac{\ln(1-\ps{0,y|z}(1|z))}{\ln(Az)},\min_{\small z\in\{z_L,z_H\}}\tfrac{\ln(1-\ps{0,y|z}(1|z)-\ps{0,y|z}(2|z))}{\ln(Az)}\right].\label{eq:ThetaI:BCMT}
\end{align}
The bound in \eqref{eq:ThetaI:BCMT} has the familiar form of an intersection of intervals.
Intuitively, for $\gamma>0$ this bound remains non-empty provided the distributions $\ps{\gamma,y|z}(\cdot|z_L)$ and $\ps{\gamma,y|z}(\cdot|z_H)$ are sufficiently compatible with each other, and in that case $\Theta^*(\ps{\gamma})=\Theta_I(\ps{\gamma})$ because the misspecification is non-detectable: it is possible that $\ps{\gamma}$ belongs to the model $\fQ$ as in Definition~\ref{def:correct_misspec}.
But for $\gamma$ sufficiently large, the distributions $\ps{\gamma,y|z}(\cdot|z_L)$ and $\ps{\gamma,y|z}(\cdot|z_H)$ become incompatible, $\ps{\gamma}\notin\fQ$, $\Theta_I(\ps{\gamma})=\emptyset$, and $\Theta^*(\ps{\gamma})$ becomes a singleton.
In what follows we more formally explain these facts.

In this example, $\fq{\theta,z}=\{q\in\Delta:q(1)\le\eta(\theta;z),~q_3\le 1-(\theta;z)\}$.
By similar steps to those used for Example~\ref{example:CT}, one can show that in this example, $\qs{\theta,y|z}^*(y|z)=\ps{\gamma,y|z}(y|z)$ for $y=1,2,3$ if $\theta\in\Theta_1(z;\ps{\gamma,y|z})\equiv\{\theta:\ps{\gamma,y|z}(1|z)\le\eta(\theta;z)\le \ps{\gamma,y|z}(1|z)+\ps{\gamma,y|z}(2|z)\}$; $\qs{\theta,y|z}^*(y|z)=\tfrac{\ps{\gamma,y|z}(y|z)}{\ps{\gamma,y|z}(1|z)+\ps{\gamma,y|z}(2|z)}\eta(\theta;z)$ for $y=1,2$ and $\qs{\theta,y|z}^*(3|z)=1-\eta(\theta;z)$ if $\theta\in\Theta_2(z;\ps{\gamma,y|z})\equiv\{\theta:\eta(\theta;z)> \ps{\gamma,y|z}(1|z)+\ps{\gamma,y|z}(2|z)\}$; and $\qs{\theta,y|z}^*(y|z)=\tfrac{\ps{\gamma,y|z}(y|z)}{\ps{\gamma,y|z}(2|z)+\ps{\gamma,y|z}(3|z)}(1-\eta(\theta;z))$ for $y=2,3$ and $\qs{\theta,y|z}^*(1|z)=\eta(\theta;z)$ if $\theta\in\Theta_3(z;\ps{\gamma,y|z})\equiv\{\theta:\eta(\theta;z)<\ps{\gamma,y|z}(1|z)\}$.
Concerning the score function, we have $\score{\theta}{y|z;\ps{\gamma,y|z}}=0$ if $\theta\in\Theta_1(z;\ps{\gamma,y|z})$; 
$\score{\theta}{y|z;\ps{\gamma,y|z}}=\tfrac{-\ln(Az)(Az)^\theta}{1-(Az)^\theta}$ for $y=1,2$ and $\score{\theta}{3|z;\ps{\gamma,y|z}}=\ln Az$ if $\theta\in\Theta_2(z;\ps{\gamma,y|z})$; 
$\score{\theta}{1|z;\ps{\gamma,y|z}}=\tfrac{-\ln(Az)(Az)^\theta}{1-(Az)^\theta}$ and $\score{\theta}{y|z;\ps{\gamma,y|z}}=\ln Az$ for $y=2,3$
if $\theta\in\Theta_3(z;\ps{\gamma,y|z})$.

Next, let $\bar{\gamma}$ be defined so that $\ps{\bar\gamma,y|z}(1|z_H)=\ps{\bar\gamma,y|z}(1|z_L)+\ps{\bar\gamma,y|z}(2|z_L)$.
Algebraic manipulations show that if $\gamma\le\bar\gamma$, then $\Theta^*(\ps{\gamma})=\Theta_I(\ps{\gamma})$ as in \eqref{eq:ThetaI:BCMT} with $\ps{0}$ replaced by $\ps{\gamma}$.
However, $\Theta_I(\ps{\gamma})=\emptyset$ for all $\gamma\in(\bar\gamma,1]$.
Yet, recalling that at any $\theta^*\in\Theta^*(\ps{\gamma})$ it must hold that $\E[\score{\theta}{\ey|\ez;\ps{\gamma,y|z}}]=0$, one can show that  $\E[\score{\theta}{\ey|\ez;\ps{\gamma,y|z}}]=0$ if and only if either
\begin{align*}
\tfrac{\ln(Az_L)\left[\ps{\gamma,y|z}(3|z_L)-(Az_L)^{\theta}\right]}{1-(Az_L)^\theta}
+\tfrac{\ln(Az_H)\left[1-(Az_H)^{\theta}-\ps{\gamma,y|z}(1|z_H)\right]}{1-(Az_H)^\theta}
&=0,\quad\quad\text{or}\\
\tfrac{\ln(Az_H)\left[\ps{\gamma,y|z}(3|z_H)-(Az_H)^{\theta}\right]}{1-(Az_H)^\theta}
+\tfrac{\ln(Az_L)\left[1-(Az_L)^{\theta}-\ps{\gamma,y|z}(1|z_L)\right]}{1-(Az_L)^\theta}&=0,
\end{align*}
but not both. Then, $\Theta^*(\ps{\gamma})$ equals the singleton value of $\theta$ that solves the one that holds.$\,$\hfill $\square$
\vspace{.2cm}

\begin{exampleA}[Panel dynamic discrete choice]
\label{example:panel}
Decision maker $i$ chooses between actions $y=0$ and $y=1$ across multiple time periods, according to
\begin{align*}
\ey_{it}=1\{\ex_{it}\gamma+\ey_{it-1}\beta+\alpha_i+\epsilon_{it}\ge 0\},~i=1,\dots, n, ~t=1,\dots,T,
\end{align*}
with $\ey_{it}$ their decision in period $t$, $\ex_{it}$ a vector of observed covariates in period $t$, $\alpha_i$ an individual-specific unobserved effect that is fixed over time, and $\epsilon_{it}$ an idiosyncratic unobserved effect that varies over time.
When $\beta\neq 0$,  period $t$'s choice depends on previous periods' choices, introducing state dependence. 
The researcher observes $(\ey_{it},\ex_{it})$ for $i=1,\dots,n$ and $t=1,\dots,T$, but does not observe $\ey_{i0}$, so that $\{\ey_{i1},\dots,\ey_{iT}\}$ is not fully determined and the model is incomplete \citep{heckman78,hon:tam06}.
Nonetheless, for given $(\ex_{it},\alpha_i,\epsilon_{it}),~t=1,\dots,T$, the model constrains the possible values that $(\ey_{it},t=1,\dots,T)$ can take.
For example, with $T=2$, one has:
\begin{align*}
 \ey_{i1}&=\left\{\begin{array}{ll}
1\{\ex_{i1}^\top\gamma +\alpha_i+\epsilon_{i1}\ge 0\} & \text{~~~if } \ey_{i0}=0,\\
1\{\ex_{i1}^\top\gamma+\beta+\alpha_i+\epsilon_{i1}\ge 0\} & \text{~~~if } \ey_{i0}=1,
\end{array}  \right. \\
\ey_{i2}&= 1\{\ex_{i2}^\top\gamma+\ey_{i1}\beta+\alpha_i+\epsilon_{i2}\ge 0\}~~~ \text{if } \ey_{i0}=0~\text{or}~\ey_{i0}=1.
\end{align*}

Denoting the unobservables as $\eu_{it}\equiv\alpha_i+\epsilon_{it}$, for given $\theta=(\gamma,\beta)\in\Theta$ and $x\in\cX$, \citet{che:kai21} derive the measurable \citep[e.g.,][Example 1.5]{mol:mol18} correspondence $\G(\cdot|x;\theta)$ as the set of values $(y_1,y_2)\in\{0,1\}^2$ that satisfy the above equations.
The correspondence is depicted in Panel (b) of Figure~\ref{fig:G_sets} as a function of $(u_1,u_2)$ for the case that $\beta\ge 0$.
Similar examples arise in nonparametric models of state dependence \citep[e.g.,][]{tor19}.
In this example, Assumption \ref{ass:finite_support}(b)(d) can be verified similarly to how they are verified for Example \ref{example:CT}.
If the parameter space for $\beta$ is a subset of $\R_{++}$, Assumption~\ref{ass:finite_support}-(e) is satisfied: for all $\theta\in\Theta$, $\cA_\G=\{\{(0,0)\},\{(0,1)\},\{(1,0)\},\{(1,1)\},\{(0,1),(1,1)\},\{(0,0),(1,1)\},\{(0,0),(1,0)\}\}$.
To verify Assumption~\ref{ass:finite_support}-(f), let $\eu$ have positive density on $\R^2$. Denote $a_1\equiv \f(S_{\{(0,0)\}|x;\theta})$, $a_2\equiv\f(S_{\{(0,1)\}|x;\theta}), a_3\equiv\f(S_{\{(0,0),(1,0)\}|x;\theta})$, $a_4\equiv\f(S_{\{(0,0),(1,1)\}|x;\theta})$, $ a_5\equiv\f(S_{\{(0,1),(1,1)\}|x;\theta})$, $a_6\equiv\f(S_{\{(1,0)\}|x;\theta})$, $a_7\equiv\f(S_{\{(1,1)\}|x;\theta})$. Then, $\sum_{j=1}^7 a_j=1$ and $a_1,\dots,a_7>0$ as $\beta>0$. One can show that $\fq{\theta}=\big\{\qs{y|x}\in\Delta:~a_2~{\le}_{(1)}~\qs{y|x}((0,1)|x)~$ ${\le}_{(2)}~a_2+a_5;~a_6~{\le}_{(3)}~\qs{y|x}((1,0)|x)$ ${\le}_{(4)}~a_6+a_3;~\qs{y|x}((0,0)|x)+\qs{y|x}((0,1)|x)~{\ge}_{(5)}~a_1+a_2;~\qs{y|x}((0,1)|x)+\qs{y|x}((1,1)|x)~{\ge}_{(6)} ~a_2+a_5+a_7,~x\in\cX\big\}$.
Here, $\fq{\theta}$ contains $\qs{y|x}$ such that $\qs{y|x}(y|x)>0$ for all $y\in\cY$. By Assumption~\ref{ass:finite_support}(d), these nonnegativity constraints do not bind at $\qs{\theta,y|x}^*$. Hence, we focus on inequalities (1)-(6).
We show that $|\cA^{(*e)}_{\mathrm{bind}}(\qs{y|x})|\le 3$ for all $\qs{\theta,y|x}\in\fq{\theta,x}$. Since $a_3,a_5>0$, an active set cannot contain both inequalities (1) and (2), nor both (3) and (4).
Hence, if an active set included four inequalities, it should include (5) and (6), one of $\{(1),(2)\}$, and one of $\{(3),(4)\}$. Consider $\cA^{(*e)}_{\mathrm{bind}}(\qs{y|x})=\{(1),(3),(5),(6)\}$. Then $\qs{y|x}((0,1)|x)=a_2,\qs{y|x}((1,0)|x)=a_6,\qs{y|x}((0,0)|x)=a_1,\qs{y|x}((1,1)|x)=a_5+a_7$, and using that these probabilities sum up to 1, we get $a_1+a_2+a_5+a_6+a_7=1$. Since also $\sum_{j=1}^7 a_j=1$, this forces $a_3+a_4=0$, yielding a contradiction (the other cases are similar).
Finally, we show that for any $\qs{y|x}\in\fq{\theta,x}$ with $\qs{y|x}(y|x)>0,\forall y\in\cY$, every inequality indexed by $A\in \cA^{(*e)}_{\mathrm{bind}}(\qs{y|x})\cup\cA^{(*e)}_=$ is irredundant.
Write the simplex equality $h(q)\equiv \sum_{y\in\cY}\qs{y|x}(y)-1$ and use the sum-to-one constraint so that each of the six inequalities can be written as $g_j(q)\ge 0,j=1,\dots,6$ with $\nabla_q g_1=(0,1,0,0)$,
$\nabla_q g_2=(1,0,1,1)$, 
$\nabla_q g_3=(0,0,1,0)$,
$\nabla_q g_4=(1,1,0,1)$,
$\nabla_q g_5=(1,1,0,0)$,
$\nabla_q g_6=(0,1,0,1)$.
By the previous argument, no set of binding inequalities can contain both $\{(1),(2)\}$ or $\{(3),(4)\}$. Inspecting the gradients of $g_j$ shows: (i) for any admissible collection of jointly binding inequalities, the associated gradients of $g_j$ are linearly independent, and (ii) $\nabla_q h = (1,1,1,1)$ lies in the span of a set of inequality constraints only if the set of constraints contains both contains $\{(1),(2)\}$ or $\{(3),(4)\}$. As the latter condition cannot hold, $\nabla_q h$ and the gradients of inequalities in $\cA^{(*e)}_{\mathrm{bind}}(\qs{y|x})$ are linearly independent.
Consider the affine hull of the simplex: $\mathsf H\equiv\{q\in\mathbb{R}^4:h(q)=0\}$, which has dimension $3$. Let $L_{\mathcal A_{\mathrm{bind}}(\qs{y|x})} \equiv \mathsf H \cap \bigl(\bigcap_{i\in \mathcal A_{\mathrm{bind}}(\qs{y|x})}\{q:\ g_i(q)=0\}\bigr)$.
The face containing $\qs{y|x}$ is $F_{\mathcal A_{\mathrm{bind}}(\qs{y|x})}=L_{\mathcal A_{\mathrm{bind}}(\qs{y|x})}\cap \fq{\theta,x}$.
Since every inequality not in $\mathcal A_{\mathrm{bind}}(\qs{y|x})$ has strict slack at $\qs{y|x}$, there exists a neighborhood $U$ of $\qs{y|x}$ such that $F_{\mathcal A_{\mathrm{bind}}(\qs{y|x})}\cap U=L_{\mathcal A_{\mathrm{bind}}(\qs{y|x})}\cap U$. On this neighborhood, the feasible set is cut out only by $\mathsf H$ and the hyperplanes $\{q:\ g_i(q)=0\},i\in \mathcal A_{\mathrm{bind}}(\qs{y|x})$.
Using that $\nabla_q h$ together with the active $\nabla_q g_i$ are linearly independent, $F_{\mathcal A_{\mathrm{bind}}(\qs{y|x})}$ has dimension $3-|\mathcal A_{\mathrm{bind}}(\qs{y|x})|$.
Removing any single active constraint $g_i\ge 0$ (with $i\in \mathcal A_{\mathrm{bind}}(\qs{y|x})$) 
increases the face dimension by $1$, and the face is strictly enlarged.\hfill $\square$
\end{exampleA}

\section{Allowing $\supp(\G(\cdot|x;\theta))$ to Depend on $\theta$}\label{app:clarke}
Assumption~\ref{ass:finite_support}(e)(f) are restrictive. These conditions can be dispensed with through the use of Clarke’s subdifferentials \citep{clarke1990optimization}, at the cost of a more cumbersome procedure as we show here. 
We illustrate the approach for the two player entry game with payoffs given in Example~\ref{example:CT} and let $(\delta_1,\delta_2)$ belong to parameter space $[\underline{\delta},\bar{\delta}]^2\ni 0$. This is an instructive case because $\supp(\G(\cdot|x;\theta))$ changes depending on which quadrant $(\delta_1,\delta_2)$ belong to, while being invariant to $\theta$ for $(\delta_1,\delta_2)$ in the interior of a quadrant, and for $\delta_1\cdot\delta_2=0$ the region of multiplicity in Figure~\ref{fig:G_entry} disappears and Assumption~\ref{ass:finite_support}(e)(f) fail. When $\mathrm{sign}(\delta_1\cdot\delta_2)<0$, a PSNE does not exist for certain values of $\theta$ \citep[see, e.g.][Figure 2]{tam03} and following \citet[Appendix D]{ber:mol:mol11} we let $\G(\cdot|\theta)=\cY$ for these values of $\theta$. Throughout, we assume that $\Theta$ only includes values for $\mathrm{Corr}(\eu_1,\eu_2)$ bounded away from $\pm1$.

The value function is $V(\theta|x)=\sup_{\qs{}\in\fq{\theta,x}}\sum_{y\in\cY} \ps{0,y|x}(y|x)\ln\qs{}(y)$, with $\fq{\theta,x}=\fq{\theta,x}^0$ if $\delta_1\cdot\delta_2=0$, $\fq{\theta,x}=\fq{\theta,x}^\mathrm{I}$ if $\delta_1>0,\delta_2>0$, $\fq{\theta,x}=\fq{\theta,x}^\mathrm{II}$ if $\mathrm{sign}(\delta_1\cdot\delta_2)<0$,  
and $\fq{\theta,x}=\fq{\theta,x}^\mathrm{III}$ if $\delta_1<0,\delta_2<0$. 
The set $\fq{\theta,x}^\mathrm{III}$ corresponds to the one in \eqref{eq:frak_q_CT}, while
\begin{align*}
    \fq{\theta,x}^0= \Big\{\qs{}\in\Delta:~\qs{}((0,0))&=\f((-\infty,-x_1\beta_1),(-\infty,-x_2\beta_2));\\
        \qs{}((1,1))&=\f([-x_1\beta_1,\infty),[-x_2\beta_2,\infty));\\
	  \qs{}((1,0))&=\f([-x_1\beta_1,\infty),(-\infty,-x_2\beta_2));\\
      \qs{}((0,1))&= \f((-\infty,-x_1\beta_1),[-x_2\beta_2,\infty))\Big\},\\
    \fq{\theta,x}^\mathrm{I}= 
        \Big\{\qs{}\in\Delta:~\qs{}((0,0))&\ge \f((-\infty,-x_1\beta_1-\delta_1),(-\infty,-x_2\beta_2))\\
      &\quad\quad+\f([-x_1\beta_1-\delta_1,-x_1\beta_1),(-\infty,-x_2\beta_2-\delta_2))\\
      \qs{}((0,0))&\leq \f((-\infty,-x_1\beta_1),(-\infty,-x_2\beta_2));\\
      \qs{}((1,0))&=\f([-x_1\beta_1,\infty),(-\infty,-x_2\beta_2-\delta_2));\\
        \qs{}((0,1))&=\f((-\infty,-x_1\beta_1-\delta_1),[-x_2\beta_2,\infty))\Big\},\\
    \fq{\theta,x}^\mathrm{II}= 
        \Big\{\qs{}\in\Delta:~\qs{}((0,0))&\ge\f((-\infty,-x_1\beta_1),(-\infty,-x_2\beta_2));\\
        \qs{}((1,1))&\ge\f([-x_1\beta_1-\delta_1,\infty),[-x_2\beta_2-\delta_2,\infty));\\
	  \qs{}((1,0))&\ge \f([-x_1\beta_1,\infty),(-\infty,-x_2\beta_2-\delta_2))\\
       \qs{}((0,1))&\geq \f((-\infty,-x_1\beta_1-\delta_1),(-x_2\beta_2,\infty))\Big\},
\end{align*}
with $\Delta$ the unit simplex in $\R^4$. 
We assume that $\f((-\infty,t_1],(-\infty,t_2])$ is jointly continuous in $(t_1,t_2,\theta)$ and for all $t_1,t_2\in\R$, $\f((-\infty,t_1],(-\infty,t_2])$, $\f([t_1,\infty),[t_2,\infty))$, $\f([t_1,\infty),(-\infty,t_2])$, $\f((-\infty,t_1],[t_2,\infty))\in\R_{++}$.
By Theorem~\ref{thm:score_characterization}-(i), for $\theta\in\Theta$ such that $(\delta_1,\delta_2)$ is in the interior of a quadrant, $V(\theta|x)$ is continuously differentiable in $\theta$, and provided $\qs{\theta,y|x}^*(y|x)>c$, $\Ps_0-a.s.$ the gradient is bounded.
Let $V^{\mathrm{I}}$ denote $V(\theta|x)$ for $\delta_1,\delta_2>0$, and use similar notation for the other cases and for the gradient of $\nabla V(\theta|x)$.
In our entry game example, when $\delta_1<0,\delta_2<0$ we proved that $\nabla V^{\mathrm{III}}$ equals the inner product of $\ps{0,y|x}$ with the score vector in \eqref{eq:eg_score1}-\eqref{eq:eg_score4}.
A similar derivation can be carried out to obtain $\nabla V^{\mathrm{I}}$ when $\delta_1>0,\delta_2>0$ and $\nabla V^{\mathrm{II}}$ when $\delta_1\cdot\delta_2<0$.
Moreover, one can show that whenever $\delta_1\cdot\delta_2=0$, the formulas for $V$ coming from the adjacent quadrants coincide. Hence $V$ is continuous on $\Theta$.
Because each gradient $\nabla V^{\mathrm{I}},\nabla V^{\mathrm{II}},\nabla V^{\mathrm{III}}$ is continuous and $\Theta$ is compact, the gradients are bounded on each quadrant. 
As there are finitely many quadrants, $V(\theta|x)$ is locally Lipschitz on $\theta$ \citep[][p.25]{clarke1990optimization}; however, $V(\theta|x)$ fails to be differentiable at $\theta:\delta_1\delta_2=0$.
By Theorem 2.5.1 in \citet[p.63]{clarke1990optimization}, for $T$ any set of Lebesgue mesaure zero in $\R^{d_\theta}$, the Clarke \emph{generalized gradient} of $V$ at $\theta$ is
\begin{align*}
    \partial V^\circ(\theta|x)=\mathrm{conv}\left\{\lim\nabla V(\theta_i):\theta_i\to\theta,\theta_i\notin T,\theta_i:\delta_{i1}\delta_{i2}\neq 0\right\}.
\end{align*}
Let $\overline{\nabla V}^{\mathrm{J}},~\mathrm{J}\in\{\mathrm{I},\mathrm{II},\mathrm{III}\}$ denote the continuous extensions of $\nabla V^{\mathrm{J}}$ to the closure of the quadrant associated with $\fq{\theta,x}^{\mathrm{J}}$, denoted $\Theta^{\mathrm{J}}$.
Then
\begin{align*}
    \partial V^\circ(\theta|x)=
    \begin{cases}
    \{\nabla V^{\mathrm{J}}(\theta|x)\}, & \theta\in\Theta^{\mathrm{J}},~\text{with}~\delta_1\cdot\delta_2\neq 0,\\
    \mathrm{conv}\{\overline{\nabla V}^{\mathrm{I}}(\theta|x),\overline{\nabla V}^{\mathrm{II}}(\theta|x)\}, & \delta_1=0, \delta_2>0,~~\text{or}~~\delta_1>0, \delta_2=0,\\
    \mathrm{conv}\{\overline{\nabla V}^{\mathrm{II}}(\theta|x),\overline{\nabla V}^{\mathrm{III}}(\theta|x)\}, & \delta_1=0, \delta_2<0,~~\text{or}~~\delta_1<0, \delta_2=0,\\
    \mathrm{conv}\{\overline{\nabla V}^{\mathrm{I}}(\theta|x),\overline{\nabla V}^{\mathrm{II}}(\theta|x),\overline{\nabla V}^{\mathrm{III}}(\theta|x)\}, & \delta_1=\delta_2=0.
\end{cases}
\end{align*}

Next, Theorem~\ref{thm:score_characterization}-(ii) shows that $\E[\score{\theta}{\ey|\ex;\ps{0,y|x}}]=0$ for all $\theta\in\Theta^*(\ps{0})$.
Correspondingly, we first note that when vectors $\theta:\delta_1\cdot\delta_2=0$ belong to $\Theta$, $\partial V^\circ(\theta)$ is set valued and no longer a singleton.
Correspondingly, we use the Aumann expectation of $\partial V^\circ(\theta|\ex)$ \citep[Chapter 3]{mol:mol18} to collect all the possible values of the expected score.
Armed with this set-valued expectation, we invoke Proposition 2.3.2 in \citet[p.38]{clarke1990optimization}, by which if $\crit(\theta)$ attains a local maximum at $\theta^*$, then
\begin{align*}
    0\in \E\left[\partial^\circ V(\theta^*|\ex)\right].
\end{align*}

For each $\theta\in\Theta$, by \citet[Proposition 2.1.2]{clarke1990optimization} and \citet[Theorem 3.11]{mol:mol18}, the set $\E\left[\partial^\circ V(\theta|\ex)\right]$ is closed and convex $\Ps_0$-a.s., hence we can represent it using its support function.
Moreover, its Clarke subdifferential is equal to the gradient for $\theta$ such that $V(\theta|\ex)$ is differentiable $\Ps_0$-a.s.
As the only non-differentiability points are $\theta:\delta_1\cdot\delta_2=0$, and this set of values is known, we can build a confidence set for $\theta$ that dispenses with Assumption~\ref{ass:finite_support}-(e)(f). Fix $\epsilon>0$, say $\epsilon=10^{-6}$, and let
\begin{align}
    \hspace{-.4cm}CS_n=\left\{\theta\in\Theta_{\epsilon}^c:T_n(\theta)\le c_{d_\theta,\alpha}\right\}\cup\big\{\theta\in\Theta_{\epsilon}:\sqrt{n}\mathrm{d}_H(0,\hat{\E}_n\left[\partial^\circ V(\theta|\ex_i)\right])\le c_{H,\alpha}(\theta)\big\},\label{eq:CS_subdifferential}
\end{align}
with, for any nonempty and compact sets $A,B$, $\mathrm{d}_H(A,B)=\sup_{a\in A}\inf_{b\in B}\|a-b\|$, and $\Theta_{\epsilon}=\left\{\theta\in\Theta:\vert\delta_1\cdot\delta_2\vert<\epsilon\right\}$, $\Theta_{\epsilon}^c=\Theta\setminus\Theta_{\epsilon}$, and $c_{H,\alpha}(\theta)$ the $1-\alpha$ quantile of the limit distribution of $\mathrm{d}_H(\E\left[\partial^\circ V(\theta|\ex)\right],\hat{\E}_n\left[\partial^\circ V(\theta|\ex_i)\right])$ (using that $\mathrm{d}_H(0,\hat{\E}_n\left[\partial^\circ V(\theta|\ex_i)\right])\le\mathrm{d}_H(\E\left[\partial^\circ V(\theta|\ex)\right],\hat{\E}_n\left[\partial^\circ V(\theta|\ex_i)\right])$ under the null that $0\in\E\left[\partial^\circ V(\theta|\ex)\right]$).
If $\partial^\circ V(\theta|\ex_i)$ were observed, we could adapt \citet{ber:mol08}'s results and consistent bootstrap procedure to estimate the critical values of the limit distribution, to establish asymptotic validity of $CS_n$ in \eqref{eq:CS_subdifferential}. 
However, in our application $\partial^\circ V(\theta|\ex_i)$ depends on $\ps{0,y|x}$, which is unknown and nonparametrically estimated in first stage. 
\citet{sem23} and \citet{liu:mol25v2} derive the limit distribution of support function processes similar to the one associated with $\mathrm{d}_H(\E\left[\partial^\circ V(\theta|\ex)\right],\hat{\E}_n\left[\partial^\circ V(\theta|\ex_i)\right])$, accounting for first-step nonparametric estimation, and they put forward consistent bootstrap procedures to estimate the critical values. Due to space constraints and the fact that our main proposal trades computationally tractability for the stronger Assumption~\ref{ass:finite_support}(e)(f), we leave extending their results to our context to future research. 

\newpage
\bibliographystyle{ecta-fullname} 
\bibliography{MisspRef,EwPI_biblio3}   
\end{document}